\newtheorem{theorem}{Theorem}[section]
\newtheorem{corollary}[theorem]{Corollary}
\newtheorem{proposition}[theorem]{Proposition}
\newtheorem{lemma}[theorem]{Lemma}
\newtheorem{example}[theorem]{Example}
\newtheorem{definition}[theorem]{Definition}
\theoremstyle{remark}
\newtheorem{remark}[theorem]{Remark}
\theoremstyle{propsition}
\newtheorem{hypothesis}[theorem]{Assumption}
\newcommand\hypo{\textnormal{Assumption}}
\newcommand{\Z}{\mathbb{Z}}
\newcommand{\R}{\mathbb{R}}
\newcommand{\N}{\mathbb{N}}
\newcommand{\cM}{\mathcal{M}}
\newcommand{\cP}{\mathcal{P}}
\newcommand{\cC}{\mathcal{C}}
\newcommand{\cK}{\mathcal{K}}
\newcommand{\Sp}{\mathscr{S}}
\newcommand{\Edg}{\mathcal{R}}
\newcommand{\Ve}{\mathcal{V}}
\newcommand{\Ka}{\mathbf{k}}
\newcommand{\bx}{\mathbf{x}}
\newcommand{\Zp}{\mathscr{Z}}
\newcommand{\Wp}{\mathscr{W}}
\begin{document}

\title{Identifiability from a few species for a class of biochemical reaction networks}
\author[G. Jeronimo, M. P\'erez Mill\'an, P. Solern\'o]{Gabriela Jeronimo$^{1,2,3,\dag}$, Mercedes P\'erez Mill\'an$^{1,2,3,\dag}$, Pablo Solern\'o$^{1,2,\dag}$}
\address{$^1$ Universidad de Buenos Aires. Facultad de Ciencias Exactas y Naturales. Departamento de Matem\'atica. Ciudad Universitaria, Pab.\ I, C1428EGA Buenos Aires, Argentina.}
\address{$^2$ Universidad de Buenos Aires. Consejo Nacional de Investigaciones Cient\'ificas y T\'ecnicas. Instituto de Investigaciones Matem\'aticas ``Luis A. Santal\'o'' (IMAS). Facultad de Ciencias Exactas y Naturales, Ciudad Universitaria, Pab.\ I, C1428EGA Buenos Aires, Argentina.}
\address{$^3$ Universidad de Buenos Aires. Ciclo B\'asico Com\'un. Departamento de Ciencias Exactas. Buenos Aires, Argentina.}
\address{$^\dag$ These authors contributed equally to this work.}
\email{jeronimo@dm.uba.ar, mpmillan@dm.uba.ar, psolerno@dm.uba.ar}
\date{}

\maketitle

\begin{abstract}
Under mass-action kinetics, biochemical reaction networks give rise to polynomial autonomous dynamical  systems whose parameters are often difficult to estimate. We deal in this paper with the problem of identifying the kinetic parameters of a class of biochemical networks which are abundant, such as multisite phosphorylation systems and phosphorylation cascades (for example, MAPK cascades). For any system of this class we explicitly exhibit a single species for each connected component of the associated digraph such that the successive total derivatives of its concentration allow us to identify all the parameters occurring in the component. The number of derivatives needed is bounded essentially by the length of the corresponding connected component of the digraph. Moreover, in the particular case of the cascades, we show that the parameters can be identified from a bounded number of successive derivatives of the last product of the last layer.
This theoretical result induces also a heuristic interpolation-based identifiability procedure to recover the values of the rate constants from exact measurements.
\end{abstract}

\bigskip
\textbf{Keywords:} Chemical reaction networks, Mass-action kinetics, Identifiability, MAPK cascade.

\section{Introduction}
Parameter identifiability in a system of ordinary differential equations mainly addresses the question of deciding whether the system parameters can be uniquely determined from data (see for instance \cite{wp97},\cite[Chapter~10]{DiStefano14}).Since the pioneering paper \cite{BA70}, this problem  has been broadly studied for general systems under different perspectives, including Taylor series and generating series approaches, and differential algebra-based approaches. More details can be found in \cite{pohj78,oll90,LG94,Sed02,XM03,SAD03,BSAD07,MEDS09,chis11,Raue14,HOPY18a}. Also, a variety of software tools for identifiability has been developed that work for general classes of models (e.g. polynomial or rational), such as DAISY \cite{BSAD07}, COMBOS \cite{MKD14}, GenSSI \cite{LFC+17}, and SIAN \cite{HOPY18b}.

In this paper, we address the identifiability problem for a specific infinite class of models. Our aim is to obtain general statements about all the models in the class (see \cite{WE16,BME17} for prior results of this sort but for different classes of models).
More precisely, we consider a particular class of systems of equations arising from biochemical reaction networks under mass-action kinetics, which induces polynomial autonomous systems of differential equations. In this framework, in \cite{CP}, the authors describe necessary and sufficient conditions for the unique identifiability of the reaction rate constants (the parameters) of a chemical reaction network. Following their approach, we provide in this work sufficient conditions for uniquely identifying all the rate constants of a certain family of biochemical reaction networks from a reduced set of variables (see Definition~\ref{identvar}). Unlike other authors \cite{AKJ12}, we do not consider all the possible minimal sets of variables allowing parameter identifiability, but we only focus on certain biologically relevant sets.

The family of networks we deal with is abundant in the literature. One example is the multisite phosphorylation system which describes the phosphorylation of a protein in $L$ sites by a kinase($Y$)/phosphatase($\tilde{Y}$) pair in a sequential and distributive mechanism \cite{cyc-007}.
The substrate $S_i$ is the phosphoform obtained from the unphosphorylated substrate $S_0$ by attaching $i$ phosphate groups to it. Each phosphoform can accept (via an enzymatic reaction involving $Y$) or lose (via a reaction involving the phosphatase $\tilde{Y}$) at most one phosphate (the mechanism is ``distributive'') and there is a specific order to be followed for attaching and removing the phosphate groups (the phosphorylation is ``sequential'').

\begin{example} \label{multisite}
 The reactions in the $L$-site sequential phosphorylation/dephosphorylation network are represented by the following labeled digraph:
 \[
 \begin{array}{c}
 Y+S_0 \overset{a_1}{\underset{b_1}{\rightleftarrows}} U_1 \overset{c_1}{\rightarrow}
 Y+S_1 \overset{a_2}{\underset{b_2}{\rightleftarrows}} \dots
 \overset{a_L}{\underset{b_L}{\rightleftarrows}}U_L \overset{c_L}{\rightarrow}
 Y+S_L\\
 \tilde{Y}+S_L \overset{\tilde{a}_L}{\underset{\tilde{b}_L}{\rightleftarrows}} V_L \overset{\tilde{c}_L}{\rightarrow} \dots \overset{\tilde{c}_2}{\rightarrow}
 \tilde{Y}+S_1 \overset{\tilde{a}_{1}}{\underset{\tilde{b}_1}{\rightleftarrows}}  V_1 \overset{\tilde{c}_1}{\rightarrow}
 \tilde{Y}+S_0,
 \end{array}\]
 where $U_1,\dots,U_L,V_1,\dots,V_L$ are intermediate enzyme-substrate species. The mass-action dynamical system for this network is (see identity \eqref{CRN} in Section \ref{subsec: chemical reaction}):
  \[\begin{array}{rcl}
 \dot{s}_0 & = &-a_1ys_0+b_1u_1+\tilde{c}_1v_1,\\
 \dot{s}_L & = &c_Lu_L-\tilde{a}_L\tilde{y}s_L+\tilde{b}_Lv_L,\\
 \dot{u_i} & = & a_iys_{i-1}-(b_i+c_i)u_i, \ 1\le i \le L,\\
 \dot{v}_i & = & \tilde{a}_i\tilde{y}s_i-(\tilde{b}_i+\tilde{c}_i)v_i, \ 1\le i \le L,\\
 \dot{y}  & = &\underset{i=1}{\overset{L}{\sum}}-a_iys_{i-1}+(b_i+c_i)u_i,\\
 \dot{\tilde{y}} & = &\underset{i=1}{\overset{L}{\sum}}-\tilde{a}_i\tilde{y}s_i+(\tilde{b}_i+\tilde{c}_i)v_i,\\
 \dot{s}_i& = &c_iu_i-a_{i+1}ys_i+b_{i+1}u_{i+1}+\tilde{c}_{i+1}v_{i+1}-\tilde{a}_i\tilde{y}s_i+\tilde{b}_iv_i, \; 1\leq i\leq L-1,
\end{array}\]
where lower case letters represent the time-varying concentration of the corresponding chemical species. Here, the derivative with respect to time is represented with a dot over the corresponding variable.
\end{example}
As a consequence of Theorem~\ref{thm:eachconncomp} proved below, all the constants in the first connected component can be identified, in the sense of Definition \ref{identvar}, from the successive total derivatives of $s_L$ up to order $\max\{2,2L-1\}$ and all the constants in the second connected component can be identified from the successive total derivatives of $s_0$ up to the same order. Moreover, as proved in Proposition~\ref{prop:2comp}, all the constants in the whole network can be identified from the successive total derivatives of $s_L$ up to order $\max\{2,2L-1\}$.

\medskip

Another example of major biological importance are phosphorylation cascades, such as the mitogen-activated protein kinase (MAPK) cascade \cite{CDVS16,sig-016,kholo00,sig-051}. This cascade plays an essential role in signal transduction by modulating gene transcription in response to changes in the cellular environment. MAPK cascades participate in a number of diseases including chronic inflammation and cancer \cite{davis,kyriakis,pearson,schaeffer,zarubin} as they control key cellular functions \cite{hornberg,pearson,widmann}.
We depict in the following example the $2$-layer signaling cascade.

\begin{example} \label{2layer} Consider the graph associated to the two-layer simple phosphorylation cascade where the simplified diagram and the corresponding reactions are, respectively:

\begin{minipage}{0.5\textwidth}
 \[\begin{tikzpicture}[scale=0.7,node distance=0.5cm]
  \node[] at (1.1,-1.5) (dummy2) {};
  \node[left=of dummy2] (p0) {$S_{2,0}$};
  \node[right=of p0] (p1) {$S_{2,1}$}
    edge[->, bend left=45] node[below] {\textcolor{black!70}{\small{$F_2$}}} (p0)
    edge[<-, bend right=45] node[above] {} (p0);
  \node[] at (-1,0) (dummy) {};
  \node[left=of dummy] (s0) {$S_{1,0}$};
  \node[right=of s0] (s1) {\textcolor{blue}{$S_{1,1}$}}
    edge[->, bend left=45] node[below] {\textcolor{black!70}{\small{$F_1$}}} (s0)
    edge[<-, bend right=45] node[above] {\textcolor{black!70}{\small{$E$}}} (s0)
    edge[->,thick,color=blue, bend left=25] node[above] {} ($(p0.north)+(25pt,10pt)$);
 \end{tikzpicture}
\]
\end{minipage}
\begin{minipage}{0.5\textwidth}
 \[E+S_{1,0} \overset{a_1}{\underset{b_1}{\rightleftarrows}}
  U_1 \overset{c_1}{\rightarrow}
  E+S_{1,1}\]
\[F_1+S_{1,1} \overset{\tilde{a}_1}{\underset{\tilde{b}_1}{\rightleftarrows}}
  V_1 \overset{\tilde{c}_1}{\rightarrow}
  F_1+S_{1,0}\]
\[S_{1,1}+S_{2,0} \overset{a_2}{\underset{b_2}{\rightleftarrows}}
  U_2 \overset{c_2}{\rightarrow}
  S_{1,1}+S_{2,1} \]
\[F_2+S_{2,1} \overset{\tilde{a}_2}{\underset{\tilde{b}_2}{\rightleftarrows}}
  V_2 \overset{\tilde{c}_2}{\rightarrow}
  F_2+S_{2,0}.\]
\end{minipage}

The corresponding mass-action dynamical system is (see~\eqref{CRN}):
\[\begin{array}{rcl}
 \dot{s}_{1,0} & = &-a_1es_{1,0}+b_1u_1+\tilde{c}_1v_1,\\
 \dot{s}_{1,1} & = &c_1u_1-\tilde{a}_1f_1s_{1,1}+\tilde{b}_1v_1-a_2s_{1,1}s_{2,0}+(b_2+c_2)u_2,\\
 \dot{e}= -\dot{u_1} & = &-a_1es_{1,0}+(b_1+c_1)u_1,\\
 \dot{f}_1=-\dot{v}_1 & = &-\tilde{a}_1f_1s_{1,1}+(\tilde{b}_1+\tilde{c}_1)v_1,\\
 \dot{s}_{2,0}& = &-a_2s_{1,1}s_{2,0}+b_2u_2+\tilde{c}_2v_2,\\
 \dot{s}_{2,1}& = &c_2u_2-\tilde{a}_2f_2s_{2,1}+\tilde{b}_2v_2,\\
  \dot{u}_2& = &a_2s_{1,1}s_{2,0}-(b_2+c_2)u_2,\\
 \dot{f}_2=-\dot{v}_2 & = &-\tilde{a}_2f_2s_{2,1}+(\tilde{b}_2+\tilde{c}_2)v_2.

\end{array}\]
\end {example}

We prove in Theorem~\ref{thm:cascade} that all the parameters in a signaling cascade system can be identified from a single variable: the last product of the last layer ($S_{2,1}$ in the cascade presented in Example~\ref{2layer}). This species is usually an output of interest for this type of cascades \cite{AYK11,chen09,hagen13,LHG09}.

The organization of the paper is as follows. The next section provides introductory material on chemical reaction networks, mass-action kinetics equations and identifiability. Section~\ref{sec:assumptions} deals with the general assumptions required by the biochemical reaction networks we consider along the paper.
In Sections~\ref{sec:components} and~\ref{sec:cascade} we analyze the identifiability for sequential phosphorylation/dephosphorylation networks and phosphorylation cascades, respectively.
We illustrate these results in Section~\ref{sec:algorithm} with a procedure to determine, from (noise-free) data, the $30$ rate constants in the $3$-layer MAPK cascade, which relies on a heuristic to choose points to specialize the variables and solve for the rate constants.
Finally, we include an appendix with the complete proofs of the results stated in the paper.

\section{Preliminaries and Basic Notions} \label{sec:preliminaries}

\subsection{Chemical Reaction Systems} \label{subsec: chemical reaction}

We briefly recall the basic setup of chemical reaction networks and how they give rise to autonomous dynamical systems under mass-action kinetics.

Given a set of $s$ chemical species (denoted by capital letters), a {\em chemical reaction network} on this set of species is a finite directed graph whose vertices are indicated by complexes (non negative integer linear combinations of the species) and whose edges are labeled by parameters (positive reaction rate constants). The labeled digraph is denoted $G = (\Ve,\Edg, \Ka)$, with vertex set $\Ve$, edge set $\,\Edg$ and edge labels $\Ka \in \R_{>0}^{\#\Edg}$.
If $(y,y')\in \Edg$, we note $y\to y'$. The complexes determine  vectors in $\Z_{\ge 0}^{s}$ (the coefficients of the linear combinations) according to the stoichiometry of the species they consist of.
We identify each complex with its corresponding vector and also with the formal linear combination of species specified by its coordinates.

We present a basic example that illustrates how a chemical reaction network gives rise to a dynamical system. This example represents a classical mechanism of enzymatic reactions, usually known as the futile cycle \cite{sig-016,kholo00,ws08}:

\begin{example}\label{ex:enzsys}
Consider the following graph

\[ E+S_0 \overset{a}{\underset{b}{\rightleftarrows}} U \overset{c}{\rightarrow} E+S_1 \qquad
 F+S_1 \overset{\tilde{a}}{\underset{\tilde{b}}{\rightleftarrows}} V \overset{\tilde{c}}{\rightarrow} F+S_0.
 \]

The $s=6$ variables $U$, $V$, $S_0$, $S_1$, $E$, $F$, denote the chemical species. The source and the product of each reaction (i.e. the vertices) are the complexes (non negative linear combinations of the species). Finally the edge labels in $\Ka=(a,b,c,\tilde{a},\tilde{b},\tilde{c})$ are the reaction rate constants describing how concentrations of the six species change in time as the reactions occur.

The first three complexes give rise to
the vectors $(0,0,1,0,1,0)$, $(1,0,0,0,0,0)$ and $(0,0,0,1,1,0)$ while those in the second ones are $(0,0,0,1,0,1)$, $(0,1,0,0,0,0)$, and $(0,0,1,0,0,1)$.
\end{example}

A chemical reaction network $G$ as above, under the assumption of {\em mass-action kinetics} induces a polynomial dynamical system in the following way. Suppose that the species are $X_1,\ldots,X_s$
and their respective concentrations are denoted by $x_1,\ldots,x_s$ (denoted by small letters). We write $k_{yy'}$ for the reaction rate of each reaction $y\to y'$ in $\mathcal{R}$. We introduce the following chemical reaction dynamical system:
\begin{equation}
\label{CRN}
{\dot \bx}~=~\left( \frac{dx_1}{dt} ,\frac{dx_2}{dt}  ,\dots , \frac{dx_s}{dt}   \right)  ~=~
\underset{y\to y'}{\sum} k_{yy'} \,  \bx^y \, (y'-y),
\end{equation}
where $\bx:=(x_1,\dots,x_s)$ and $\bx^y:=x_1^{y_1}\cdots x_s^{y_s}$ if $y=(y_1,\ldots,y_s)$.
The right-hand side of each differential equation $\dot{x}_i$ is a polynomial $f_i(\bx,\Ka)$, in the variables $x_1,\dots, x_s$ with coefficients depending on the parameters $\Ka:=(k_{yy'})_{(y,y')\in \mathcal{R}}$.

For instance, in our previous Example \ref{ex:enzsys} this induced dynamical system is:

\begin{equation} \label{system enz}
\begin{array}{rcll}
\dot{u} & = & a es_0-(b+c)u,\\[1mm]
\dot{v} & = &\tilde{a} fs_1-(\tilde{b}+\tilde{c}) v,\\[1mm]
\dot{s}_0 & = &-a es_0+b u+\tilde{c} v,\\[1mm]
\dot{s}_1& = &-\tilde{a} fs_1+\tilde{b} v+c u,\\[1mm]
\dot{e} & = &-a es_0+(b+c) u,\\[1mm]
\dot{f} & = &-\tilde{a} fs_1+(\tilde{b}+\tilde{c}) v.
\end{array}
\end{equation}

\subsection{Identifiability in Chemical Reaction Systems}

Among all the different (not always equivalent) notions of identifiability in differential equations and control theory we have chosen to work from the one introduced in \cite{CP} since it seems specially well suited to the dynamical biochemical systems we consider here (see, for instance, \cite{chis11,Raue14} for a survey on the state of the art).

One of the main differences in the various approaches to identifiability is an assumption on the number of experiments that can be conducted with the same parameter values but different initial conditions: a single-experiment approach assumes the experiment is performed only once with some (often generic) initial condition (see, for example, \cite[Chapter 10]{DiStefano14}), whereas the multi-experiment approach we adopt in this paper assumes that it is allowed to perform as many experiments as needed with the same parameter values but different initial conditions.

\begin{definition} \label{ident}
Let $G = (\Ve,\Edg, \Ka)$ be a chemical reaction network with $s$ species. Its associated  reaction system \eqref{CRN} is called \emph{identifiable} if the map $\Phi:\R_{>0}^{\# \Edg}\to\R[\bx]^s$,
\[
\Phi(\Ka)=\underset{y\to y'}{\sum} k_{yy'}\bx^y(y'-y),
\]
is injective (here $\Ka=(k_{yy'})_{(y,y')\in\Edg}$ and $\R[\bx]$ is the polynomial ring in the variables $x_1,\ldots,x_s$).
\end{definition}

\begin{example}
In  Example \ref{ex:enzsys} (see the corresponding differential equation system (\ref{system enz})),
the domain of the map $\Phi$ is $\R_{>0}^6$, the target space is $\R[u,v,s_0,s_1,e,f]^6$ and the coordinate functions are the right-hand sides of the differential equations in (\ref{system enz}). It is clear that $\Phi$ is injective and therefore, the reaction system is identifiable: the right-hand sides of $\dot{s}_0$ and $\dot{s}_1$ determine the six constants $\Ka=(a,b,c,\tilde{a},\tilde{b},\tilde{c})$.
\end{example}

\begin{example}\label{ex:enzsys2} (see \cite[Section 2, Fig.1]{CP})
Consider the following graph
\[
\xymatrix{
2X_2\ar@{->}@<.5ex>[rr]^*-<1pt>{k_1} \ar@{->}@<.5ex>[ddr]^*-<1pt>{k_4} & & 2X_1 \ar@{->}@<.5ex>[ll]^*-<1pt>{k_6} \ar@{->}@<.5ex>[ddl]^*-<1pt>{k_2} \\ \\
& X_1+X_2 \ar@{->}@<.5ex>[uul]^*-<1pt>{k_3} \ar@{->}@<.5ex>[uur]^*-<1pt>{k_5}&
}.
\]
Here $s=2$, $\# \mathcal{R}=6$ and the associated dynamical system is
\begin{equation}\label{eq:reaction_CP}
\begin{array}{ccc}
\dot{x}_1&=&(2k_1+k_4)x_2^2-(k_2+2k_6)x_1^2+(k_5-k_3)x_1x_2\\[2mm]
\dot{x}_2&=&-(2k_1+k_4)x_2^2+(k_2+2k_6)x_1^2+(k_3-k_5)x_1x_2
\end{array}.
\end{equation}
Clearly, the map $\Phi$ is not injective: parameters $\Ka\in \R_{>0}^6$ define the same polynomials under $\Phi$ if and only if the linear forms $2k_1+k_4$, $k_2+2k_6$ and $k_5-k_3$ take the same values when evaluated at $\Ka$. For instance, $\Phi(1,1,1,1,1,1)=\Phi(1,1,2,1,2,1)=(3x_2^2-3x_1^2,-3x_2^2+3x_1^2)$.
Therefore, the system (\ref{eq:reaction_CP}) is not identifiable.
\end{example}

\begin{definition}
For a chemical reaction network $G$, we introduce the \emph{total derivative} (or \emph{Lie derivative}) associated to the induced differential equations system as follows: given a differentiable function $\varphi:\R^s\to \R$, its total derivative $\dot{\varphi}$ is defined as
\[
\dot{\varphi}:=\displaystyle{\sum_{i=1}^s  \dfrac{\partial \varphi}{\partial x_i} \dfrac{d x_i}{d t}}=\sum_{i=1}^s \dfrac{\partial \varphi}{\partial x_i} \sum_{y\to y'} k_{yy'}\bx ^y(y'_i-y_i),
\]
where each partial derivative $\dfrac{d x_i}{d t}$ is replaced according to system (\ref{CRN}).
For an integer $\ell \geq 1$, we denote by $\varphi^{(\ell)}$ the $\ell$-th iteration of the total derivative of $\varphi$ (in particular $\varphi^{(1)}=\dot{\varphi})$.

\end{definition}

For instance, for the network given in Example \ref{ex:enzsys}, its associated dynamical system (\ref{system enz}) and the function $\varphi=u^4+v$, we have
\[\dot{\varphi}=4u^3(a es_0-(b+c)u)+\tilde{a} fs_1-(\tilde{b}+\tilde{c})v.\]
Note that for a differentiable function $\varphi:\R^s\to \R$, the total derivative $\varphi^{(\ell)}$ can be regarded as a function depending on the $(s+\#\mathcal R)$-variables $\bx,\Ka$.

\begin{definition} \label{identvar}
Let $G = (\Ve,\Edg, \Ka)$ be a chemical reaction network with $s$ species. We say its associated  reaction system \eqref{CRN} is \emph{identifiable from the variables} $x_{i_1},\dots,x_{i_t}$ if there exists a positive integer $D$ such that the following injectivity condition holds: if $\Ka^{*},\Ka^{**}\in\R_{>0}^{\# \Edg}$ verify
\[ x^{(\ell)}_{i_j}(\bx,\Ka^{*})=x^{(\ell)}_{i_j}(\bx,\Ka^{**}),\]
for all $1\leq \ell\leq D$, $1\leq j\leq t$, then $\Ka^{*}=\Ka^{**}$.
\end{definition}

The introduction of the Lie derivative in identifiability is a usual and quite natural approach suitable adapted to our purposes (see, for instance, \cite{chis11}). Among other works following this approach,  \cite{Sed02}, \cite{CBB11} and \cite{AKJ12} also include a discussion about the number of derivatives needed for the proposed identifiability analysis.

Definitions \ref{ident} and \ref{identvar} are related in the obvious way:

\begin{proposition}
A chemical reaction system in the variables $\bx=x_1,\ldots, x_s$ is identifiable in the sense of Definition \ref{ident} if and only it is identifiable from the variables $x_1,\ldots, x_s$ in the sense of Definition \ref{identvar}.
\end{proposition}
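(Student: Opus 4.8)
The plan is to exploit the elementary fact that the first‑order total derivative of the coordinate function $x_i$ is exactly the $i$‑th component of $\Phi(\Ka)$, i.e.
\[
x_i^{(1)}(\bx,\Ka)=f_i(\bx,\Ka)=\sum_{y\to y'}k_{yy'}\,\bx^y(y'_i-y_i),
\]
together with the observation that the total derivative acts on $\R[\bx]$ as the first‑order differential operator $\mathcal{D}_{\Ka}:=\sum_{i=1}^s f_i(\bx,\Ka)\,\partial/\partial x_i$, so that $\varphi^{(\ell)}=\mathcal{D}_{\Ka}^{\ell}(\varphi)$ and, in particular, $x_i^{(\ell)}=\mathcal{D}_{\Ka}^{\ell}(x_i)$. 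The crucial point is that $\mathcal{D}_{\Ka}$ depends on $\Ka$ only through the polynomials $f_1(\bx,\Ka),\dots,f_s(\bx,\Ka)$.

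For the implication ``identifiable in the sense of Definition~\ref{ident} $\Rightarrow$ identifiable from $x_1,\dots,x_s$ in the sense of Definition~\ref{identvar}'', I would simply take $D=1$. If $x_i^{(1)}(\bx,\Ka^*)=x_i^{(1)}(\bx,\Ka^{**})$ as polynomials in $\bx$ for every $1\le i\le s$, then by the displayed identity this says $\Phi(\Ka^*)=\Phi(\Ka^{**})$, and injectivity of $\Phi$ gives $\Ka^*=\Ka^{**}$. Hence the required positive integer $D$ of Definition~\ref{identvar} exists (one may even take $D=1$).

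For the converse, suppose the system is identifiable from $x_1,\dots,x_s$ in the sense of Definition~\ref{identvar}, with associated integer $D$, and let $\Ka^*,\Ka^{**}\in\R_{>0}^{\#\Edg}$ satisfy $\Phi(\Ka^*)=\Phi(\Ka^{**})$. Then $f_i(\bx,\Ka^*)=f_i(\bx,\Ka^{**})$ in $\R[\bx]$ for every $i$, so $\mathcal{D}_{\Ka^*}$ and $\mathcal{D}_{\Ka^{**}}$ are the \emph{same} operator on $\R[\bx]$, whence $\mathcal{D}_{\Ka^*}^{\ell}=\mathcal{D}_{\Ka^{**}}^{\ell}$ for every $\ell\ge 1$. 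Applying these operators to $x_i$ yields $x_i^{(\ell)}(\bx,\Ka^*)=x_i^{(\ell)}(\bx,\Ka^{**})$ for all $i$ and all $\ell$, in particular for $1\le\ell\le D$; the identifiability‑from‑variables hypothesis then forces $\Ka^*=\Ka^{**}$, so $\Phi$ is injective.

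I do not expect a genuine obstacle here: the statement is essentially a bookkeeping remark reconciling the two definitions. The only points deserving a little care are to read each equality $x_i^{(\ell)}(\bx,\Ka^*)=x_i^{(\ell)}(\bx,\Ka^{**})$ as an identity of polynomials in $\bx$ (so that ``equality of the associated differential operators'' is the right notion), and to state cleanly why coincidence of the first‑order derivatives $f_i$ propagates to all higher‑order derivatives, namely because they are obtained by iterating one and the same operator $\mathcal{D}_{\Ka}$.
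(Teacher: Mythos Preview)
Your proof is correct and follows essentially the same approach as the paper's: both directions hinge on the identity $x_i^{(1)}=f_i$, taking $D=1$ for the forward implication, and on the fact that equality of all first derivatives forces equality of all higher derivatives (you phrase this via the operator $\mathcal{D}_{\Ka}$, the paper says the parameters are ``constants with respect to the total derivative''). The only cosmetic difference is that the paper argues the converse by contradiction whereas you argue directly.
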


\begin{proof}
First we observe that the identity  $\Phi=\dot{x}_1\times \dot{x}_2\times \cdots \times\dot{x}_s$ holds as functions of the argument $\Ka$. Thus, if $\Phi$ is injective, the condition of Definition \ref{identvar} is satisfied for the variables $x_1,\ldots, x_s$ and the integer $D=1$. Conversely, suppose that the chemical reaction system is identifiable from the variables $x_1,\ldots, x_s$ using a certain number $D$ of successive total derivatives. Then the function $\Phi$ is necessarily injective in the arguments $\Ka$: if it is not the case, there exist $\Ka^{*}\ne \Ka^{**}$ such that
$\dot{x}_i(\bx,\Ka^{*})=\dot{x}_i(\bx,\Ka^{**})$ as functions of the variables $\bx$ for all $i=1,\ldots,s$. Since the values of $\Ka^{*},\Ka^{**}$ are constants with respect to the total derivative we conclude that $x_i^{(\ell)}(\bx,\Ka^{*})=x_i^{(\ell)}(\bx,\Ka^{**})$ for all $\ell\in\N$ and all $1\le i \le s$, arriving at a contradiction.
\end{proof}

\begin{example}
Consider the graph
\[
X_1+X_2 \overset{k_1}{\longrightarrow} X_3 \overset{k_2}{\longrightarrow} X_4.
\]
and its associated system
\[
\dot{x}_1=-k_1x_1x_2,\quad \dot{x}_2=-k_1x_1x_2,\quad \dot{x}_3=k_1x_1x_2-k_2x_3,\quad \dot{x}_4=k_2x_3.
\]
The system is identifiable in the sense of Definition \ref{ident}. Following Definition \ref{identvar}, the system is identifiable from the single variable $x_3$ with one derivative (i.e. in this case $D=1$ in Definition \ref{identvar}). It is also identifiable from the variable $x_4$, but its total derivative of second order is needed in order to determine all the parameters (i.e. $D=2$ for this variable). On the other hand, the system is not identifiable from the set of variables $\{x_1,x_2\}$, since the constant $k_2$ does not appear in any of the successive total derivatives of $x_1$ nor $x_2$.
\end{example}

For technical reasons, we need to slightly generalize the notion of identifiability introduced in Definition \ref{identvar} above. The following definition is related to the notion of identifiability of parameter combinations \cite{Bo07,MEDS09}:

\begin{definition} \label{identvar2}
Let $G = (\Ve,\Edg, \Ka)$ be a chemical reaction network. Let $p\in \N$ and $\psi: \R_{>0}^{\# \Edg}\to \R^p$ be a map from the space of parameters in an affine space $\R^p$. We say that the map $\psi$ is \emph{identifiable from the variables} $x_{i_1},\dots,x_{i_t}$ if there exists a positive integer $D$ such that the following injectivity condition holds: if $\Ka^{*},\Ka^{**}\in\R_{>0}^{\# \Edg}$ verify
\[ x^{(\ell)}_{i_j}(\bx,\Ka^{*})=x^{(\ell)}_{i_j}(\bx,\Ka^{**}),\]
for all $1\leq \ell\leq D$, $1\leq j\leq t$, then $\psi(\Ka^{*})=\psi(\Ka^{**})$.
\end{definition}

Roughly speaking, Definition \ref{identvar2} says that the value of the function $\psi$ is uniquely determined by the values of the successive derivatives $x^{(\ell)}_{i_j}$.

Observe that the notion of identifiability of a system from the variables $x_{i_1},\ldots,x_{i_t}$ as it is defined in Definition \ref{identvar} can be translated in the sense of Definition \ref{identvar2} as the identifiability of the function $\psi:\R_{>0}^{\# \Edg}\to \R^{\# \Edg}$, $\psi(\Ka)=\Ka$.

For instance, in the (non identifiable) Example \ref{ex:enzsys2}, the function $\psi: \R_{>0}^6\to \R^3$, defined as $\psi(\Ka):=(2k_1+k_4,k_2+2k_6,k_5-k_3)$, is identifiable from $x_1$ (or $x_2$, or both variables). In this case we say simply that the constants $2k_1+k_4,\ k_2+2k_6,\ k_5-k_3$ can be identified from $x_1$.

This notion will be useful along the paper.
We will typically consider very simple functions $\psi$ whose coordinates are either the rate constants or the sum of all the rate constants leaving from one complex.

\section{Assumptions on the biochemical reaction networks} \label{sec:assumptions}

We will analyze the identifiability problem for a specific kind of chemical reaction networks. We start by describing the assumptions on the networks we will consider in the sequel.

First, we assume that the ``building blocks'' of the network have the following shape:
\[X_{1}+X_{2} \overset{a}{\underset{b}{\rightleftarrows}} U \overset{c}{\rightarrow} X_{1}+X_{3},\]
where $U$ is a species that only participates in those three reactions along all the network. We call $U$ an intermediate species and we say that species $X_1$ acts as an \emph{enzyme}, species $X_2$ acts as a \emph{substrate} and species $X_3$ acts as a \emph{product}.

\begin{definition}
 We say an intermediate species $U$ \emph{reacts to} the non-intermediate species $X_1$ if there exists another  non-intermediate species $X_2$ such that the reaction $U\to X_1+X_2$ exists. We say the non-intermediate
 species $X_1$ \emph{reacts with} the non-intermediate species $X_2$ if there exists an intermediate species
 $U$ such that the reaction $X_1+X_2\to U$ exists.
\end{definition}

\begin{example}[Example~\ref{2layer} continued]
Species $U_1,V_1,U_2,V_2$ are the intermediate species. $E$ and $F$ act as enzymes. $S_{1,0}$ acts as a substrate in the first connected component and as a product in the second one. Species $S_{2,0}$ and $S_{2,1}$ also act as both substrates and products (in the third and fourth connected components). Finally, $S_{1,1}$ acts as a product in the first connected component, as a substrate in the second one, and as an enzyme in the third one.
\end{example}

We make the following assumption concerning the structure of the network:
\begin{hypothesis}\label{hyp:network} \
\begin{enumerate}
\item Each connected component of the graph is of the following form:
\[Y+S_0\overset{a_1}{\underset{b_1}{\rightleftarrows}} U_1 \overset{c_1}{\rightarrow}
Y+S_1\overset{a_2}{\underset{b_2}{\rightleftarrows}} U_2 \overset{c_2}{\rightarrow} \dots
Y+S_{L-1}\overset{a_L}{\underset{b_L}{\rightleftarrows}} U_L
\overset{c_{L}}{\rightarrow}Y+S_L,
\]
where there is a unique enzyme $Y$ acting on all the reactions of the connected component.
\item The intermediate species $U_j$ appearing in the entire network are all different.
\item The non-intermediate species $S_j$ in each connected component are all different, but they may also appear in other connected components.
\item Each complex lies in a unique connected component of the network.
\end{enumerate}
\end{hypothesis}

Although the above assumption seems restrictive, it is satisfied by many networks such as the multisite phosphorylation system described in Example~\ref{multisite}, the phosphorylation cascades as the one described in Example~\ref{2layer}, and also the network in Example~\ref{ex:enzsys}. As we observed before, in Examples~\ref{multisite} and~\ref{ex:enzsys} each species plays a unique role but in  Example \ref{2layer} the species $S_{1,1}$ acts alternatively as a product (in the first connected component), as a substrate (in the second one) and as an enzyme (in the third one).\\

For an intermediate species $U$, we call
\begin{align*}
  \Sp_U=\{S: S \text{ acts as a substrate or a product in the }\\ \nonumber
  \text{connected component determined by } U \}.
\end{align*}

For instance, in Example \ref{2layer} we have $\Sp_{U_1}=\Sp_{V_1}=\{S_{1,0},S_{1,1}\}$ and  $\Sp_{U_2}=\Sp_{V_2}=\{S_{2,0},S_{2,1}\}$.\\

We finish our assumptions on the kind of graphs we consider with a slightly technical condition.

\begin{hypothesis}\label{hyp:partition}
There is a partition of the species of the graph, that is, a decomposition into nonempty disjoint subsets:
 \begin{equation*}
\Sp=\Sp^{(0)} \bigsqcup \Sp^{(1)} \bigsqcup \dots \bigsqcup \Sp^{(M)},
\end{equation*}
where $M \ge 2$, $\bigsqcup$ denotes the disjoint union, $\Sp^{(0)}$ is the set of intermediate species and given an intermediate species $U$ with $Y$ acting as an enzyme
in the corresponding connected component, there exists $\alpha\geq 1$ with $\Sp_U\subseteq \Sp^{(\alpha)}$ and $Y\notin\Sp^{(\alpha)}$.
\end{hypothesis}

\begin{remark} \label{extra}
Under Assumption \ref{hyp:network}, the new condition imposed on the graph by Assumption \ref{hyp:partition} implies the following fact: if $X_1$ reacts with $X_2$, then there exists $\alpha\neq\beta$ such that $X_1\in\Sp^{(\alpha)}$ and $X_2\in\Sp^{(\beta)}$. 
In particular, if $S_i$ and $S_j$ are two substrates or products in the same connected component, the complex $S_i + S_j$ is not present in the network.
\end{remark}

\begin{example}
In Example \ref{2layer} we can consider the following partition $\Sp^{(0)}=\{U_1,V_1,U_2, V_2\}$, $\Sp^{(1)}=\{S_{1,0},S_{1,1}\}$, $\Sp^{(2)}=\{S_{2,0},S_{2,1}\}$, $\Sp^{(3)}=\{E\}$, $\Sp^{(4)}=\{F_1\}$, $\Sp^{(5)}=\{F_2\}$.

However it is not the unique possible partition: for instance, another choice could be
$\Sp^{(0)}, \Sp^{(1)}$ and $\Sp^{(2)}$ as before, but $\Sp^{(3)}$, $\Sp^{(4)}$ and $\Sp^{(5)}$ are replaced by the single set $\{E,F_1,F_2\}$.
\end{example}

\section{Identifiability in connected components}\label{sec:components}

This section is devoted to dealing with the identifiability problem for chemical reaction networks satisfying the assumptions stated in Section \ref{sec:assumptions}. Our aim is to show that all reaction constants of the network can be identified from the successive derivatives of the variables in a certain family of non-intermediates.

In order to do this, we choose a suitable subset of variables and estimate the maximum number of successive derivatives of them that we need to identify all the reaction constants. Namely, we choose variables $x_{i_1},\dots, x_{i_t}$ and determine a number $D_j$ of successive derivatives of $x_{i_j}$, for $1\le j\le t$, so that the injectivity condition in Definition \ref{identvar} holds for $D = \max\{D_j\}$.

Since the derivatives $x_{i_j}^{(\ell)}(\bx, \Ka)$  are polynomials in the variables $\bx$ with coefficients that are polynomials in the reaction rate constants $\Ka$, showing that the parameters $\Ka$ are identifiable from $x_{i_j}^{(\ell)}(\bx, \Ka)$ for $1\le \ell \le D_j$, $1\le j \le t$, is the same as showing that they are uniquely determined by the coefficients of the polynomials $x_{i_j}^{(\ell)}(\bx, \Ka)$.  Thus, our strategy to proving identifiability will be to locate suitable subsets of monomials in the derivatives $x_{i_j}^{(\ell)}$ that enable us to prove that the values of all the reaction constants can be uniquely determined from  their corresponding coefficients.

\subsection{Identifying the constants in one connected component from one variable}\label{subsec:1comp}

The aim of this section is to show that all the reaction constants in a connected component
\begin{equation}\label{eq:conncomp}
Y+S_0\overset{a_1}{\underset{b_1}{\rightleftarrows}} U_1 \overset{c_1}{\rightarrow}
Y+S_1\overset{a_2}{\underset{b_2}{\rightleftarrows}} U_2 \overset{c_2}{\rightarrow} \dots
Y+S_{L-1}\overset{a_L}{\underset{b_L}{\rightleftarrows}} U_L
\overset{c_{L}}{\rightarrow}Y+S_L
\end{equation}
of a network satisfying the assumptions stated in Section \ref{sec:assumptions}
are identifiable from a limited number of successive derivatives of the variable $s_L$ representing the concentration of the last product.

We start by showing that all the constants $c_L, a_L, b_L$,  and, for $1\le j \le L-1$, $a_j$ and $b_j+c_j$ can be identified (in the sense of Definition \ref{identvar2}) simply from the first three derivatives of this variable. Then, we proceed to identify recursively all the constants $c_{j}$ (and consequently, also the constants $b_j$) for $j=L-1,\dots, 1$, from higher order derivatives of $s_L$. The main result of this section is the following:

\begin{proposition} \label{prop:1conncomp}
 All the constants in a connected component
 \[Y+S_0\overset{a_1}{\underset{b_1}{\rightleftarrows}} U_1 \overset{c_1}{\rightarrow}
Y+S_1\overset{a_2}{\underset{b_2}{\rightleftarrows}} U_2 \overset{c_2}{\rightarrow} \dots
Y+S_{L-1}\overset{a_L}{\underset{b_L}{\rightleftarrows}} U_L
\overset{c_{L}}{\rightarrow}Y+S_{L}\]
of a network satisfying the assumptions in Section \ref{sec:assumptions}
can be identified from $s^{(\ell)}_{L}$ with $1\leq \ell \leq \mathrm{max}\{2,2L-1\}$.
\end{proposition}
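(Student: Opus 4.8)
The strategy is to write out the total derivative $\dot s_L$ explicitly from the dynamical system \eqref{CRN}, then iterate, and at each stage read off a carefully chosen monomial whose coefficient isolates one (combination of) rate constant(s). Recall from the mass-action equations that, in the connected component \eqref{eq:conncomp}, the only reactions affecting $s_L$ directly are $U_L \overset{c_L}{\to} Y + S_L$ and possibly reactions from \emph{other} connected components in which $S_L$ participates as a substrate or product. However, Assumption~\ref{hyp:partition} (via Remark~\ref{extra}) guarantees that $S_L$ does not form a complex with any $S_i$ from the same component, and the partition structure lets us separate the monomials coming from this component from those coming from elsewhere. The key bookkeeping point will be to argue that the monomials we use — involving only the variables $y, s_0,\dots,s_{L-1}, u_1,\dots,u_L$ attached to this component — receive contributions only from the reactions inside the component.

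\textbf{Step 1 (base case, constants $c_L, a_L, b_L$ and $a_j, b_j+c_j$ for $j<L$).} Compute $\dot s_L = c_L u_L + (\text{terms from other components})$. The coefficient of the monomial $u_L$ (which cannot appear in contributions from other components, since $U_L$ is an intermediate occurring only in this component by Assumption~\ref{hyp:network}(2)) gives $c_L$. Next compute $\ddot s_L$; differentiating $c_L u_L$ via $\dot u_L = a_L y s_{L-1} - (b_L+c_L)u_L$ produces the monomials $y s_{L-1}$ and $u_L$ with coefficients $c_L a_L$ and $-c_L(b_L+c_L)$ respectively, so $a_L$ and $b_L + c_L$ — hence $b_L$, since $c_L$ is known — are identified. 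Then $s_L^{(3)}$, obtained by differentiating once more, brings in $\dot s_{L-1}$, which contains the term $c_{L-1}u_{L-1} - a_L y s_{L-1} + b_L u_L$ (plus terms from other components touching $s_{L-1}$), and $\dot y$, which sums $-a_j y s_{j-1} + (b_j+c_j)u_j$ over $j$. Reading the coefficients of the monomials $y^2 s_{j-1}$ and $y u_j$ (again these specific monomials, mixing $y$ with a same-component variable, are not produced by other components) yields $a_j$ and $b_j + c_j$ for $1 \le j \le L-1$. This uses at most $3$ derivatives, hence the $\max\{2,2L-1\}$ bound is fine when $L=1$ (only $2$ needed) and when $L=2$.

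\textbf{Step 2 (recursive identification of $c_j$, hence $b_j$, for $j = L-1, \dots, 1$).} Having identified everything except the individual $c_j, b_j$ (only the sums $b_j+c_j$ are known so far for $j<L$), I will descend: assuming $c_{j+1}, \dots, c_L$ are already known, I claim $c_j$ can be read off from $s_L^{(\ell)}$ for a suitable $\ell$ growing linearly in $L-j$. The mechanism is that iterating the total derivative propagates ``backward'' through the chain: the term $c_j u_j$ first enters $\dot s_j$, and after enough further differentiations (each step moving one link up the chain $s_j \to s_{j-1}$ or through $u_j \to s_{j-1}$) a monomial appears whose coefficient is $c_j$ times a known nonzero product of previously-identified constants. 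A clean choice is to track the highest power of $y$: the monomial $y^{\,\ell-1} s_{j-1}$ (or similar) appearing in $s_L^{(\ell)}$ for an appropriate $\ell \le 2L-1$ should have coefficient of the form $(\text{known nonzero})\cdot c_j$. Carefully checking that $\ell = 2(L-j)+1 \le 2L-1$ suffices and that the coefficient is genuinely nonzero and involves $c_j$ linearly is the crux. Once each $c_j$ is known, $b_j = (b_j+c_j) - c_j$ follows immediately, completing the identification of all constants $a_j, b_j, c_j$, $1 \le j \le L$.

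\textbf{Main obstacle.} The hard part is the combinatorial control in Step 2: proving that a specific monomial of bounded degree in $s_L^{(\ell)}$ has a coefficient that is a nonzero polynomial in the already-identified constants times exactly the new constant $c_j$, with no cancellation against contributions from other connected components or from other reactions within this component. This requires a disciplined induction on $\ell$ tracking a small set of ``marker'' monomials (built only from $y$ and same-component species) and showing Assumption~\ref{hyp:partition} forbids any competing monomial of the same shape from elsewhere in the network. I expect the proof to set up an invariant of the form ``$s_L^{(\ell)}$ contains the monomial $m_\ell$ with coefficient $\gamma_\ell \neq 0$ expressible in known constants, and $\dot m_\ell$ contains the next marker monomial'' and push it down the chain; the degree bookkeeping then yields the $2L-1$ bound.
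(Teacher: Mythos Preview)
Your overall plan matches the paper's: identify $c_L, a_L, b_L$ and the combinations $a_j, K_j:=b_j+c_j$ from the first three derivatives, then separate each $c_j$ (hence $b_j$) recursively from higher derivatives. Two points need correction.

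\textbf{Step 1, third derivative.} Your marker monomials $y^2 s_{j-1}$ and $y\, u_j$ are wrong and in fact do not occur in $s_L^{(3)}$ for $j<L-1$. Differentiating the term $c_L a_L\, y\, s_{L-1}$ in $\ddot s_L$ yields $c_L a_L\,(\dot y\, s_{L-1} + y\, \dot s_{L-1})$; the factor $\dot y$ is multiplied by $s_{L-1}$, not by $y$. The monomials that actually carry $a_j$ and $K_j$ for $j<L$ are therefore $y\, s_{j-1}\, s_{L-1}$ and $u_j\, s_{L-1}$, with coefficients $-c_L a_L a_j$ and $c_L a_L K_j$; this is what the paper uses (Lemma~\ref{lem:ajKj}, Table~\ref{table:1component}). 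One must also verify that these monomials receive no contribution from the remaining terms of $s_L^{(3)}$, and this is where Assumption~\ref{hyp:partition} (via Remark~\ref{extra}) is invoked in the paper.

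\textbf{Step 2.} This is the technical heart, and your sketch does not yet land. The marker you propose, $y^{\ell-1} s_{j-1}$, is a product of non-intermediates and will not isolate $c_j$: the constant $c_j$ appears on its own (not as part of $K_j$) only through $\dot s_j = c_j u_j + \cdots$, so a successful marker must carry the intermediate variable $u_j$. The paper's choice is the monomial $y^k u_{L-k}$. Lemma~\ref{lem:xn_several} proves by induction on the chain that this monomial first appears in $s_L^{(2k+1)}$, with coefficient $c_{L-k}\prod_{i=0}^{k-1} a_{L-i}\, c_{L-i}$, giving exactly the bound $2L-1$. The induction rests on a structural lemma (Lemma~\ref{lemma:structure}) showing that any occurrence of $y^r u_{L-k}$ in a derivative of a non-intermediate $X$ forces a block $Y+Z_w \rightleftarrows W \to Y+X$ inside the component together with an occurrence of $y^{r-1} u_{L-k}$ in a lower-order derivative of $z_w$. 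This is precisely the ``disciplined induction'' you anticipate, but with the correct marker and carried out in full.
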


The strategy in the proof of this result consists in the exact computation of the coefficients of certain distinguished monomials in the successive derivatives of $s_L$. This explicit computation enables us to achieve the identifiability of all the constants of the connected component by means of a recursive procedure, that we summarize in Table~\ref{table:1component}.  For a complete proof, see Proposition \ref{propApp:1conncomp} in Appendix \ref{sec:pf}.

\begin{center}
\begin{table}[h]
\renewcommand{\arraystretch}{1.5}
\begin{tabular}{|c|c|c|c|}
\hline
Derivative   & Monomial          & Coefficient (up to sign)  & Identified constant       \\ \hline
$\dot{s}_L$ & $u_L$             & $c_L$              & $c_L$         \\  \hline
\multirow{2}{*}
{$\ddot{s}_L$} & $y s_{L-1}$       & $c_L a_L$          & $a_L$         \\  \cline{2-4}
               & $u_L$             & $c_L (b_L+c_L)$    & $b_L$         \\   \hline
\multirow{2}{*}
{$s_L^{(3)}$} & $y s_{j-1}s_{L-1}$ & $c_L a_L a_j$     & $a_j$  {\small $(1\le j \le L-1)$}\\ \cline{2-4}
              & $u_j s_{L-1}$      & $c_L a_L (b_j+c_j)$     & $b_j+c_j$ {\small $(1\le j \le L-1)$}\\ \hline
{$s_L^{(2k+1)}$} & $y^k u_{L-k}$ & $c_{L-k} \prod_{j=0}^{k-1} c_{L-j}a_{L-j}$ & $c_{L-k}$, $b_{L-k}$ {\small $(1\le k \le L-1)$}\\ \hline
\end{tabular}

\caption{The constants in the connected component~\eqref{eq:conncomp} can be identified from
$s_L$. The table shows the monomials to be considered (column 2) in each of the successive derivatives of $s_L$ (column 1). For each monomial, taking into account the constants already identified, the corresponding coefficient (column 3) enables us to identify the constant appearing in the last column.}
\label{table:1component}
\end{table}
\end{center}

We illustrate the procedure underlying the proof of the previous statement with a simple example.

\begin{example}
Consider the network
\[\begin{array}{c}
Y+S_0\overset{a_1}{\underset{b_1}{\rightleftarrows}} U_1 \overset{c_1}{\rightarrow}
Y+S_1\overset{a_2}{\underset{b_2}{\rightleftarrows}} U_2 \overset{c_2}{\rightarrow} Y+S_2\\
Z+S_2\overset{\tilde{a}_1}{\underset{\tilde{b}_1}{\rightleftarrows}} W \overset{\tilde{c}_1}{\rightarrow}Z+S_3
\end{array}
\]
According to Proposition~\ref{prop:1conncomp}, all the constants in the first connected component can be identified from $s^{(\ell)}_2$ with $1\leq \ell \leq 3$.
In fact, if we call $K_1=b_1+c_1$, $K_2=b_2+c_2$ and $\tilde{K}_1=\tilde{b}_1+\tilde{c}_1$,
\begin{align*}
 \dot{s}_2 =&-\tilde{a}_1s_2z+\fboxrule=1.2pt\fcolorbox{blue}{white}{$c_2$}u_2+\tilde{b}_1w,\\
 \ddot{s}_2=& -\tilde{a}_1[\dot{s}_2z+s_2(\underset{\dot{z}}{\underbrace{-\tilde{a}_1s_2z+\tilde{K}_1w}})]+
  c_2[\underset{\dot{u}_2}{\underbrace{\fboxrule=1.2pt\fcolorbox{blue}{white}{$a_2$}s_1y-\fboxrule=1.2pt\fcolorbox{blue}{white}{$K_2$}u_2}}]+
  \tilde{b}_1[\underset{\dot{w}}{\underbrace{\tilde{a}_1s_2z-\tilde{K}_1w}}],\\
  s^{(3)}_2=&-\tilde{a}_1[\ddot{s}_2z+2\dot{s}_2\dot{z}+s_2(-\tilde{a}_1(\dot{s}_2z+s_2\dot{z})+\tilde{K}_1\dot{w})]+\\
  &+c_2[a_2(\underset{\dot{s}_1}{(\underbrace{-a_2s_1y+\fboxrule=1.2pt\fcolorbox{blue}{white}{$c_1$}u_1+b_2u_2})}y+
  s_1\underset{\dot{y}}{(\underbrace{-\fboxrule=1.2pt\fcolorbox{blue}{white}{$a_1$}s_0y+\fboxrule=1.2pt\fcolorbox{blue}{white}{$K_1$}u_1-a_2s_2y+K_2u_2})})-K_2\dot{u}_2]+\\
  &+\tilde{b}_1[\tilde{a}_1(\dot{s}_2z+s_2\dot{z})-\tilde{K}_1\dot{w}],
\end{align*}
where the constants $c_2,a_2,K_2$ (thus, also $b_2= K_2-c_2$), $a_1,K_1$ and $c_1$ (thus, also $b_1= K_1 - c_1$) are identified following Table \ref{table:1component}.
\end{example}

A direct consequence of Proposition \ref{prop:1conncomp} is the following theorem:

\begin{theorem} \label{thm:eachconncomp}
If a chemical reaction network satisfying the assumptions in Section \ref{sec:assumptions} consists of $N$ connected components
\[Y_1+S_{1,0}\overset{a_{1,1}}{\underset{b_{1,1}}{\rightleftarrows}} U_{1,1} \overset{c_{1,1}}{\rightarrow}
Y_1+S_{1,1}\overset{a_{1,2}}{\underset{b_{1,2}}{\rightleftarrows}} U_{1,2} \overset{c_{1,2}}{\rightarrow} \dots
Y_1+S_{1,L_1-1}\overset{a_{1,L_1}}{\underset{b_{1,L_1}}{\rightleftarrows}} U_{1,L_1}
\overset{c_{1,L_1}}{\rightarrow}Y_1+S_{1,L_1}\]
\[\dots\]
\[Y_N+S_{N,0}\overset{a_{N,1}}{\underset{b_{N,1}}{\rightleftarrows}} U_{N,1} \overset{c_{N,1}}{\rightarrow}
Y_N+S_{N,1}\overset{a_{N,2}}{\underset{b_{N,2}}{\rightleftarrows}} U_{N,2} \overset{c_{N,2}}{\rightarrow} \dots
Y_N+S_{N,L_N-1}\overset{a_{N,L_N}}{\underset{b_{N,L_N}}{\rightleftarrows}} U_{N,L_N}\overset{c_{N,L_N}}{\rightarrow}Y_N+S_{N,L_N},\]
then the associated system is identifiable from the variables $s_{1,L_1},\dots,s_{N,L_N}$ corresponding to the last products of each connected component of the network.
Moreover, for every $1\le i \le N$, the order of derivation needed for the variable $s_{i,L_i}$ is at most  $\max\{2,2L_i-1\}$.
\end{theorem}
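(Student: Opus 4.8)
The plan is to reduce the statement to Proposition~\ref{prop:1conncomp}, applied once to each of the $N$ connected components, and then to glue the conclusions together using the structural assumptions on the network. The key observation is that the total derivative operator is \emph{global}: when we compute $s_{i,L_i}^{(\ell)}(\bx,\Ka)$ we a priori substitute the right-hand sides $\dot{x}_j$ for \emph{all} species $x_j$ of the whole network, not only those in the $i$-th component. So the first point to check is that Proposition~\ref{prop:1conncomp}, which is stated for a single connected component viewed as a network in its own right, still applies verbatim inside the larger network. This is where Assumptions~\ref{hyp:network} and~\ref{hyp:partition} do their work: by Assumption~\ref{hyp:network}(4) every complex lies in a unique connected component, so the monomials $\bx^y$ that carry the constants $a_{i,j},b_{i,j},c_{i,j}$ of the $i$-th component come only from reactions of that component; by Assumption~\ref{hyp:network}(2) the intermediates $U_{i,j}$ are pairwise distinct across the whole network, so the distinguished monomials $u_{i,L_i}$, $y_i u_{i,L_i-k}$, etc., used in Table~\ref{table:1component} are not produced by reactions of any other component; and by Remark~\ref{extra} the "forbidden" complex $S_{i,m}+S_{i,m'}$ (two substrates/products of the same component) never appears, which is exactly the hypothesis under which the coefficient computations in Proposition~\ref{prop:1conncomp} were carried out. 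Hence each monomial singled out in Table~\ref{table:1component} for the $i$-th component retains, in the ambient network, the same coefficient as in the isolated component.

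With this in hand the argument is essentially formal. Suppose $\Ka^{*},\Ka^{**}\in\R_{>0}^{\#\Edg}$ satisfy $s_{i,L_i}^{(\ell)}(\bx,\Ka^{*})=s_{i,L_i}^{(\ell)}(\bx,\Ka^{**})$ for all $1\le i\le N$ and all $1\le\ell\le D:=\max_i\max\{2,2L_i-1\}$. Fix $i$. Then in particular the equalities hold for $1\le\ell\le\max\{2,2L_i-1\}$, and comparing coefficients of the monomials listed in Table~\ref{table:1component} (in the order prescribed there: first $c_{i,L_i}$ from $\dot s_{i,L_i}$, then $a_{i,L_i}$ and $b_{i,L_i}$ from $\ddot s_{i,L_i}$, then $a_{i,j}$ and $b_{i,j}+c_{i,j}$ for $1\le j\le L_i-1$ from $s_{i,L_i}^{(3)}$, and finally $c_{i,L_i-k}$, hence $b_{i,L_i-k}$, from $s_{i,L_i}^{(2k+1)}$ for $1\le k\le L_i-1$) we conclude, exactly as in the proof of Proposition~\ref{prop:1conncomp}, that $\Ka^{*}$ and $\Ka^{**}$ agree on every rate constant occurring in the $i$-th connected component. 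Since every edge of the network belongs to exactly one connected component (Assumption~\ref{hyp:network}(4)), letting $i$ range over $1,\dots,N$ shows $\Ka^{*}=\Ka^{**}$. This is precisely the injectivity condition of Definition~\ref{identvar} with $D=\max_i\max\{2,2L_i-1\}$, and the per-variable bound $\max\{2,2L_i-1\}$ for $s_{i,L_i}$ is read off directly from Proposition~\ref{prop:1conncomp}.

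The only genuine obstacle is the first step: verifying that no cross-talk between components spoils the coefficients of the distinguished monomials. Concretely, one must argue that a monomial such as $y_i u_{i,L_i-k}$ (or $y_i s_{i,j-1} s_{i,L_i-1}$, etc.) cannot be introduced, with a compensating coefficient, by the differentiation of terms coming from another component $i'\ne i$ — for instance when a species $S_{i,m}$ is shared with component $i'$ and therefore $\dot s_{i,m}$ contains contributions from reactions of $i'$. Here Assumption~\ref{hyp:partition} is essential: the partition $\Sp=\Sp^{(0)}\sqcup\cdots\sqcup\Sp^{(M)}$ forces $\Sp_{U_{i,j}}$ to sit inside a single block $\Sp^{(\alpha)}$ not containing the enzyme $Y_i$, so any monomial produced outside the $i$-th component either fails to involve $y_i$, or involves two species of the same block $\Sp^{(\alpha)}$ as a product complex (impossible by Remark~\ref{extra}), or is of strictly different multidegree; a bookkeeping check on multidegrees then shows these extraneous contributions land on monomials different from the ones in Table~\ref{table:1component}. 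Once this degree/block analysis is set up, the rest of the proof is the routine recursion already displayed in Table~\ref{table:1component}, and I would simply refer to the appendix (Proposition~\ref{propApp:1conncomp}) for the explicit coefficient formulas rather than re-deriving them.
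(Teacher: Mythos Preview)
Your proposal is correct and follows the same route as the paper, but you are doing work that is already absorbed into Proposition~\ref{prop:1conncomp}. You read that proposition as a statement about a single connected component ``viewed as a network in its own right'', and therefore spend most of your argument checking that cross-talk from other components does not spoil the distinguished coefficients in Table~\ref{table:1component}. But Proposition~\ref{prop:1conncomp} is stated for a connected component \emph{of a network satisfying the assumptions in Section~\ref{sec:assumptions}}; the derivatives $s_L^{(\ell)}$ there are the total derivatives computed with the dynamics of the \emph{whole} network, and the appendix proof (Lemmas~\ref{lem:ajKj}--\ref{lem:xn_several} and Proposition~\ref{propApp:1conncomp}) already carries out, under Assumptions~\ref{hyp:network} and~\ref{hyp:partition}, precisely the multidegree/block bookkeeping you sketch in your last paragraph. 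In the paper, Theorem~\ref{thm:eachconncomp} is therefore literally just ``apply Proposition~\ref{prop:1conncomp} to each of the $N$ components'': the rate constants of the $i$-th component are identified from $s_{i,L_i}^{(\ell)}$ with $1\le\ell\le\max\{2,2L_i-1\}$, and since every edge lies in a unique component this yields all of $\Ka$. Your formal injectivity paragraph is fine; the ``genuine obstacle'' you flag is not an obstacle at this stage, because it was already cleared upstream.
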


\subsection{Identifying the constants in two connected components from one variable}\label{subsec:2comp}

In this subsection we analyze the identifiability problem for a subclass of the networks we have been considering.
More precisely, we consider networks containing pairs of connected components of the following type:
\begin{equation}\label{eq:2conncomp}
 \begin{array}{c}
  Y+S_0\overset{a_1}{\underset{b_1}{\rightleftarrows}} U_1 \overset{c_1}{\rightarrow}
  Y+S_1\overset{a_2}{\underset{b_2}{\rightleftarrows}} U_2 \overset{c_2}{\rightarrow} \dots
  Y+S_{L-1}\overset{a_L}{\underset{b_L}{\rightleftarrows}} U_L
  \overset{c_{L}}{\rightarrow}Y+S_L,\\
  \widetilde Y +S_L\overset{\tilde{a}_L}{\underset{\tilde{b}_L}{\rightleftarrows}} V_L \overset{\tilde{c}_L}{\rightarrow}
  \widetilde Y+S_{L-1}\overset{\tilde{a}_{L-1}}{\underset{\tilde{b}_{L-1}}{\rightleftarrows}} V_{L-1} \overset{\tilde{c}_{L-1}}{\rightarrow} \dots
  \widetilde Y+S_1\overset{\tilde{a}_1}{\underset{\tilde{b}_1}{\rightleftarrows}} V_1
  \overset{\tilde{c}_1}{\rightarrow}\widetilde Y+S_0.
 \end{array}
 \end{equation}

As before, we work under the assumptions made in Section~\ref{sec:assumptions}. 

By Proposition \ref{prop:1conncomp}, we know that all the constants in the first connected component in \eqref{eq:2conncomp} can be identified from a certain number of successive derivatives of $s_L$. Using the specific structure of the second component, we can prove that the same derivatives also enable the identification of the reaction rate constants of that component.

We first prove that the constants $\tilde{a}_L, \tilde{b}_L, \tilde{c}_L$, and, for $1\le j \le L-1$, $\tilde{a}_j$ and $\tilde{b}_j + \tilde{c}_j$ can be identified from $\dot{s}_L$ and $\ddot{s}_L$ and, then, by means of a recursive explicit computation of coefficients of a family of distinguished monomials in higher order derivatives of $s_L$, we show how to successively identify the constants $\tilde{b}_j$ for $j={L-1},\dots, 1$, and, consequently, also the constants $\tilde{c}_j$.
In this way, we deduce:

\begin{proposition}\label{prop:2comp}
Given a chemical reaction network satisfying the assumptions in Section~\ref{sec:assumptions}, all the constants in two connected components of the type
\[
\begin{array}{c}
Y+S_0\overset{a_1}{\underset{b_1}{\rightleftarrows}} U_1 \overset{c_1}{\rightarrow}
  Y+S_1\overset{a_2}{\underset{b_2}{\rightleftarrows}} U_2 \overset{c_2}{\rightarrow} \dots
  Y+S_{L-1}\overset{a_L}{\underset{b_L}{\rightleftarrows}} U_L
  \overset{c_{L}}{\rightarrow}Y+S_{L},\\[4mm]
\widetilde{Y}+S_{L}\overset{\tilde{a}_L}{\underset{\tilde{b}_L}{\rightleftarrows}} V_L \overset{\tilde{c}_L}{\rightarrow}
  \widetilde{Y}+S_{L-1}\overset{\tilde{a}_{L-1}}{\underset{\tilde{b}_{L-1}}{\rightleftarrows}} V_{L-1} \overset{\tilde{c}_{L-1}}{\rightarrow} \dots
  \widetilde{Y}+S_1\overset{\tilde{a}_1}{\underset{\tilde{b}_1}{\rightleftarrows}} V_1
  \overset{\tilde{c}_1}{\rightarrow}\widetilde{Y}+S_0
\end{array}
\]
can be identified from $s^{(\ell)}_{L}$ with $1\leq \ell \leq \mathrm{max}\{2,2L-1\}$.
\end{proposition}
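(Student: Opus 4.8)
The plan is to build on Proposition~\ref{prop:1conncomp}: since all the constants $a_j, b_j, c_j$ of the first connected component are already identifiable from $s_L^{(\ell)}$, $1\le \ell\le\max\{2,2L-1\}$, it suffices to show that the constants $\tilde a_j,\tilde b_j,\tilde c_j$ of the second component can be identified from the same derivatives. The key structural observation is that the variable $s_L$ has dynamics $\dot s_L = c_L u_L - \tilde a_L \tilde y s_L + \tilde b_L v_L$ (by \eqref{CRN} and Assumption~\ref{hyp:network}, together with Remark~\ref{extra} which prevents any complex $S_i+S_j$ with both in the same component), so the second component's constants enter the derivatives of $s_L$ through the terms coming from $\dot v_L$, $\dot s_{L-1}$, etc. I would first handle the ``top'' constants: in $\dot s_L$ the monomial $\tilde y s_L$ has coefficient $-\tilde a_L$ (identifying $\tilde a_L$ directly, with $D=1$); in $\ddot s_L$, expanding $-\tilde a_L(\dot{\tilde y} s_L + \tilde y \dot s_L)$ and $\tilde b_L \dot v_L = \tilde b_L(\tilde a_L \tilde y s_L - (\tilde b_L+\tilde c_L)v_L)$, the monomial $v_L$ picks up a contribution $\tilde b_L(\tilde b_L+\tilde c_L)$ while the monomial $\tilde y^2 s_L$ (or a similarly isolated monomial) isolates $\tilde a_L\tilde b_L$-type terms, so that $\tilde b_L$ and $\tilde c_L$ follow; and analogously the monomials $\tilde y s_j$ and $v_j$ (for $1\le j\le L-1$), which first appear via the $\dot s_{L-1}$ chain, yield $\tilde a_j$ and $\tilde b_j+\tilde c_j$. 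The precise bookkeeping mirrors the first three rows of Table~\ref{table:1component}, with tildes, and needs $\ddot s_L$ at most.

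The second and harder part is the recursive identification of the individual $\tilde c_j$ (equivalently $\tilde b_j$, once $\tilde b_j+\tilde c_j$ is known) for $j=L-1,\dots,1$, from higher-order derivatives of $s_L$. The idea, paralleling the last row of Table~\ref{table:1component}, is to track a distinguished monomial in $s_L^{(2k+1)}$ whose coefficient, after substituting all previously identified constants, is a nonzero known multiple of $\tilde c_{L-k}$. The relevant monomial should be of the form $\tilde y^{\,k} v_{L-k}$ (the ``deepest'' term reachable in the second component after $k$ rounds of differentiation through the chain $s_L\to v_L\to s_{L-1}\to v_{L-1}\to\cdots$), with coefficient, up to sign, of the shape $\tilde c_{L-k}\prod_{j=0}^{k-1}\tilde c_{L-j}\tilde a_{L-j}$ (and possibly an extra factor coming from the first-component side, e.g.\ $c_L a_L$, which is already identified and nonzero). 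I would prove by induction on $k$ that (i) this monomial's coefficient has this closed form, and (ii) no other contribution to that monomial involves any not-yet-identified constant. Since all the $\tilde c_{L-j}$, $\tilde a_{L-j}$ with $j<k$ have been identified at previous stages and are strictly positive, the coefficient determines $\tilde c_{L-k}$. The top order needed is $2(L-1)+1 = 2L-1$, matching the bound $\max\{2,2L-1\}$; the case $L=1$ only needs $\ddot s_L$, hence the ``$\max$''.

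The main obstacle is step~(ii) of the induction: controlling the combinatorics of the total-derivative expansion so as to guarantee that the chosen monomial $\tilde y^{\,k} v_{L-k}$ is not also produced — with terms in unidentified constants — by other branches of the differentiation tree (for instance, branches that dip into the first component, or that revisit $\tilde y$ via $\dot{\tilde y}=\sum_i(-\tilde a_i\tilde y s_i+(\tilde b_i+\tilde c_i)v_i)$). Here the partition hypothesis (Assumption~\ref{hyp:partition}) and Remark~\ref{extra} are essential: they forbid complexes like $S_i+S_j$ and separate $\tilde y$ from the $\Sp_U$'s, which sharply limits how monomials in the $v_\bullet, s_\bullet, \tilde y$ variables can arise and rules out cancellations or parasitic contributions at the monomial of interest. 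I expect the clean way to organize this is to compute the coefficient of $\tilde y^{\,k} v_{L-k}$ in $s_L^{(2k+1)}$ exactly by identifying the unique "path" in the reaction graph that contributes its leading term, exactly as is done for Proposition~\ref{prop:1conncomp}; the formal version of this argument is deferred to the appendix (cf.\ the proof of Proposition~\ref{propApp:1conncomp}).
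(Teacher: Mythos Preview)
Your high-level plan is right up through the first stage: $\tilde a_L$ and $\tilde b_L$ are read off as coefficients of $\tilde y s_L$ and $v_L$ in $\dot s_L$, and $\tilde c_L$, $\tilde a_j$, $\tilde K_j$ ($1\le j\le L-1$) are obtained in $\ddot s_L$ via the term $-\tilde a_L\,\dot{\tilde y}\,s_L$ (the relevant monomials are $\tilde y s_j s_L$ and $v_j s_L$, not $\tilde y s_j$ and $v_j$; cf.\ Lemma~\ref{lem:tildeajKj}). The genuine gap is in the recursive step.

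The chain you propose, $s_L\to v_L\to s_{L-1}\to v_{L-1}\to\cdots$, does not exist at the level of derivatives: since the reaction $V_j\overset{\tilde c_j}{\to}\tilde Y+S_{j-1}$ is irreversible, one has $\dot v_j=\tilde a_j\tilde y s_j-\tilde K_j v_j$, which contains neither $s_{j-1}$ nor $v_{j-1}$. Consequently the monomial $\tilde y^{\,k}v_{L-k}$ you single out simply does not occur in $s_L^{(\ell)}$ for any $\ell$. Indeed, using the decomposition~\eqref{eq:simple_l} for $x=s_L$, every contribution to $s_L^{(\ell)}$ is either a single intermediate in $\Wp_{S_L}$ or a product $s_L^{(h)}z^{(i)}$ with $Z\in\Zp_{S_L}$ or $y^{(h)}s_{L-1}^{(i)}$; in each case, a degree-$2$ monomial must be a product of two intermediates (one in $\Wp_{S_L}$ or $\Wp_Y$, the other in $\Wp_Z$ or $\Wp_{S_{L-1}}$), and $\tilde y$ is not an intermediate. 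So the coefficient you plan to compute is identically zero and cannot identify $\tilde c_{L-k}$.

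The paper's key idea, which your proposal misses, is to descend to $s_{L-k}$ through the \emph{first} component (where the chain does work, since $\dot u_j=a_j\,y\,s_{j-1}-K_j u_j$ genuinely contains $s_{j-1}$) and then pick up $v_{L-k}$ from $\dot s_{L-k}$, which contains $\tilde b_{L-k}v_{L-k}$. The correct distinguished monomial is therefore $y^k v_{L-k}$ (with the first-component enzyme $y$, not $\tilde y$), and its coefficient in $s_L^{(2k+1)}$ is $\tilde b_{L-k}\prod_{j=0}^{k-1}a_{L-j}c_{L-j}$ (Lemma~\ref{lemApp:xn_several_segundo}). Since the $a_{L-j},c_{L-j}$ are already identified from the first component and are positive, this yields $\tilde b_{L-k}$, hence $\tilde c_{L-k}=\tilde K_{L-k}-\tilde b_{L-k}$. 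The combinatorial control you anticipate in step~(ii) is handled by Lemma~\ref{lemma:conn_comp_segundo}, which is the $V$-analogue of Lemma~\ref{lemma:structure} and uses Assumption~\ref{hyp:partition} exactly as you suspected.
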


We summarize the identifiability procedure underlying the proof of the previous proposition in  Table~\ref{table:2components}, and we also illustrate the result in Example~\ref{ex:2components}. For a complete proof, see Proposition \ref{propApp:2comp} in Appendix \ref{sec:pf}.

\begin{center}
\begin{table}[h]
 \renewcommand{\arraystretch}{1.5}

\begin{tabular}{|c|c|c|c|}
\hline
Derivative   & Monomial          & Coefficient (up to sign)  & Identified constant       \\ \hline
\multirow{3}{*}
{$\dot{s}_L$} & $u_L$             & $c_L$              & $c_L$         \\  \cline{2-4}
              & $\tilde y s_L$    & $\tilde{a}_L$      & $\tilde{a}_L$ \\  \cline{2-4}
              &   $v_L$           & $\tilde{b}_L$      & $\tilde{b}_L$ \\  \hline
\multirow{5}{*}
{$\ddot{s}_L$} & $y s_{L-1}$       & $c_L a_L$          & $a_L$         \\  \cline{2-4}
               & $u_L$             & $c_L (b_L+c_L)$          & $b_L$         \\  \cline{2-4}
               & $v_L$             & $\tilde{b}_L(\tilde{b}_L+\tilde{c}_L)$  & $\tilde{c}_L$ \\  \cline{2-4}
               & $\tilde{y}s_js_L$ & $\tilde{a}_L \tilde{a}_j$ & $\tilde{a}_j$ {\small $(1\le j \le L-1)$} \\  \cline{2-4}
               & $v_j s_L$         & $\tilde{a}_L (\tilde{b}_j+\tilde{c}_j)$ & $\tilde{b}_j+\tilde{c}_j$ {\small $(1\le j \le L-1)$} \\  \hline
\multirow{2}{*}
{$s_L^{(3)}$} & $y s_{j-1}s_{L-1}$ & $c_L a_L a_j$     & $a_j$  {\small $(1\le j \le L-1)$}\\ \cline{2-4}
              & $u_j s_{L-1}$      & $c_L a_L (b_j+c_j)$     & $b_j+c_j$ {\small $(1\le j \le L-1)$}\\ \hline
{$s_L^{(2k+1)}$} & $y^k u_{L-k}$ & $c_{L-k} \prod_{j=0}^{k-1} c_{L-j}a_{L-j}$ & $c_{L-k}$, $b_{L-k}$ \\ \cline{2-4}
{\small $k=1,\dots, L-1$}& $y^k v_{L-k}$ & $\tilde{b}_{L-k} \prod_{j=0}^{k-1} c_{L-j}a_{L-j}$ & $\tilde{b}_{L-k}$, $\tilde{c}_{L-k}$ \\ \hline
\end{tabular}
\caption{The constants in the two connected components in~\eqref{eq:2conncomp} can be identified from
$s_L$. Tha table shows the monomials to be considered (column 2) in each of the successive derivatives of $s_L$ (column 1). For each monomial, taking into account the constants already identified, the corresponding coefficient (column 3) enables us to identify the constant appearing in the last column.}\label{table:2components}
\end{table}
\end{center}

\begin{example}\label{ex:2components}
Consider the network
\[
\begin{array}{c}
Y+S_0\overset{a_1}{\underset{b_1}{\rightleftarrows}} U_1 \overset{c_1}{\rightarrow}
Y+S_1\overset{a_2}{\underset{b_2}{\rightleftarrows}} U_2 \overset{c_2}{\rightarrow} Y+S_2\\[4mm]
\tilde{Y}+S_2\overset{\tilde{a}_2}{\underset{\tilde{b}_2}{\rightleftarrows}} V_2 \overset{\tilde{c}_2}{\rightarrow}
\tilde{Y}+S_1\overset{\tilde{a}_1}{\underset{\tilde{b}_1}{\rightleftarrows}} V_1 \overset{\tilde{c}_1}{\rightarrow} \tilde{Y}+S_0
\end{array}
\]

According to Proposition~\ref{prop:2comp}, all the constants in the two connected components can be identified from $s^{(\ell)}_2$ with $1\leq \ell \leq 3$.
In fact, if we call $K_1=b_1+c_1$, $K_2=b_2+c_2$,  $\tilde{K}_1=\tilde{b}_1+\tilde{c}_1$ and $\tilde{K}_2=\tilde{b}_2+\tilde{c}_2$:

\begin{eqnarray*}
\dot{s}_2 &=& -\fboxrule=1.2pt\fcolorbox{blue}{white}{$\tilde{a}_2$}s_2\tilde{y}+
\fboxrule=1.2pt\fcolorbox{blue}{white}{$c_2$}u_2+\fboxrule=1.2pt\fcolorbox{blue}{white}{$\tilde{b}_2$}v_2,\\
\ddot{s}_2&=& -\tilde{a}_2[\dot{s}_2\tilde{y}+s_2(\underset{\dot{\tilde{y}}}{\underbrace{-\fboxrule=1.2pt\fcolorbox{blue}{white}{$\tilde{a}_1$}s_1\tilde{y}+\fboxrule=1.2pt\fcolorbox{blue}{white}{$\tilde{K}_1$}v_1-\tilde{a}_2s_2\tilde{y}+\fboxrule=1.2pt\fcolorbox{blue}{white}{$\tilde{K}_2$}v_2}})] +c_2[\underset{\dot{u}_2}{\underbrace{\fboxrule=1.2pt\fcolorbox{blue}{white}{$a_2$}s_1y-\fboxrule=1.2pt\fcolorbox{blue}{white}{$K_2$}u_2}}]+
  \tilde{b}_2[\underset{\dot{v_2}}{\underbrace{\tilde{a}_2s_2\tilde{y}-\tilde{K}_2v_2}}],\\
s^{(3)}_2&=&-\tilde{a}_2[\ddot{s}_2\tilde{y}+2\dot{s}_2\dot{\tilde{y}}+
s_2(-\tilde{a}_1(\dot{s}_1\tilde{y} +s_1\dot{\tilde{y}})
+\tilde{K}_1(\tilde{a}_1s_1\tilde{y}-\tilde{K}_1v_1)-\tilde{a}_2(\dot{s}_2\tilde{y}+s_2\dot{\tilde{y}})+\tilde{K}_2\dot{v}_2)]+\\
 & & +c_2[a_2((\underset{\dot{s}_1}{\underbrace{-a_2s_1y-\tilde{a}_1s_1\tilde{y}+
  \fboxrule=1.2pt\fcolorbox{blue}{white}{$c_1$}u_1+b_2u_2+\tilde{c}_2v_2+\fboxrule=1.2pt\fcolorbox{blue}{white}{$\tilde{b}_1$}v_1}})y+ \\
&&+s_1(\underset{\dot{y}}{\underbrace{-\fboxrule=1.2pt\fcolorbox{blue}{white}{$a_1$}s_0y+\fboxrule=1.2pt\fcolorbox{blue}{white}{$K_1$}u_1-a_2s_2y+K_2u_2}}))-K_2\dot{u}_2]+\tilde{b}_2[\tilde{a}_2(\dot{s}_2\tilde{y}+s_2\dot{\tilde{y}})-\tilde{K}_2\dot{v}_2].
\end{eqnarray*}

Here, the constants $c_2$, $\tilde{a}_2,  \tilde{b}_2$, $a_2, K_2$ (then, $b_2$), $\tilde{c}_2 $, $\tilde{a}_1, \tilde{K}_1$, $a_1$, $K_1$, $c_1$ (then, $b_1$), and $\tilde{b}_1$ (then, $\tilde{c}_1$) are identified following Table \ref{table:2components}.
\end{example}

A direct consequence of Proposition \ref{prop:2comp} is the following corollary:

\begin{corollary} \label{coro:2comp}
If a chemical reaction network satisfying the assumptions in Section \ref{sec:assumptions} consists of $2N$ connected components of the shape
\[
\begin{array}{c}
Y_1+S_{1,0}\overset{a_{1,1}}{\underset{b_{1,1}}{\rightleftarrows}} U_{1,1} \overset{c_{1,1}}{\rightarrow}
Y_1+S_{1,1}\overset{a_{1,2}}{\underset{b_{1,2}}{\rightleftarrows}} U_{1,2} \overset{c_{1,2}}{\rightarrow} \dots
Y_1+S_{1,L_1-1}\overset{a_{1,L_1}}{\underset{b_{1,L_1}}{\rightleftarrows}} U_{1,L_1}
\overset{c_{1,L_1}}{\rightarrow}Y_1+S_{1,L_1}\\[5mm]
\widetilde{Y}_1+S_{1,L_1}\overset{\tilde{a}_{1,L_1}}{\underset{\tilde{b}_{1,L_1}}{\rightleftarrows}} V_{1,L_1} \overset{\tilde{c}_{1,L_1}}{\rightarrow}
\widetilde{Y}_1+S_{1,L_1-1}\overset{\tilde{a}_{1,L_1-1}}{\underset{\tilde{b}_{1,L_1-1}}{\rightleftarrows}} V_{1,L_1-1} \overset{\tilde{c}_{1,L_1-1}}{\rightarrow} \dots
\widetilde{Y}_1+S_{1,1}\overset{\tilde{a}_{1,1}}{\underset{\tilde{b}_{1,1}}{\rightleftarrows}} V_{1,1}
\overset{\tilde{c}_{1,1}}{\rightarrow}\widetilde{Y}_1+S_{1,0}
\end{array}
\]
\[\dots\]
\[\begin{array}{c}
Y_N+S_{N,0}\overset{a_{N,1}}{\underset{b_{N,1}}{\rightleftarrows}} U_{N,1} \overset{c_{N,1}}{\rightarrow}
Y_N+S_{N,1}\overset{a_{N,2}}{\underset{b_{N,2}}{\rightleftarrows}} U_{N,2} \overset{c_{N,2}}{\rightarrow} \dots
Y_N+S_{N,L_N-1}\overset{a_{N,L_N}}{\underset{b_{N,L_N}}{\rightleftarrows}} U_{N,L_N}
\overset{c_{N,L_N}}{\rightarrow}Y_N+S_{N,L_N}\\[5mm]
\widetilde{Y}_N+S_{N,L_N}\overset{\tilde{a}_{N,L_N}}{\underset{\tilde{b}_{N,L_N}}{\rightleftarrows}} V_{N,L_N} \overset{\tilde{c}_{N,L_N}}{\rightarrow}
  \widetilde{Y}_N+S_{N,L_N-1}\overset{\tilde{a}_{N,L_N-1}}{\underset{\tilde{b}_{N,L_N-1}}{\rightleftarrows}} V_{N,L_N-1} \overset{\tilde{c}_{N,L_N-1}}{\rightarrow} \dots
  \widetilde{Y}_N+S_{N,1}\overset{\tilde{a}_{N,1}}{\underset{\tilde{b}_{N,1}}{\rightleftarrows}} V_{N,1}
  \overset{\tilde{c}_{N,1}}{\rightarrow}\widetilde{Y}_N+S_{N,0}
\end{array}
\]
then the associated system is identifiable from the variables $s_{1,L_1},\dots,s_{N,L_N}$.
Moreover, for every $1\le i \le N$, the order of derivation needed for the variable $s_{i,L_i}$ is at most  $\max\{2,2L_i-1\}$.
\end{corollary}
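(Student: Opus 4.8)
The plan is to reduce the statement to $N$ independent applications of Proposition~\ref{prop:2comp}, one per pair of connected components, and then to combine the resulting partial identifications into a single uniform bound.

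First I would fix $i$ with $1\le i\le N$ and observe that the pair formed by the $i$-th forward component (enzyme $Y_i$, intermediates $U_{i,1},\dots,U_{i,L_i}$, products $S_{i,0},\dots,S_{i,L_i}$) and the $i$-th backward component (enzyme $\widetilde Y_i$, intermediates $V_{i,1},\dots,V_{i,L_i}$) is exactly of the type~\eqref{eq:2conncomp}, sitting inside a network that satisfies the assumptions of Section~\ref{sec:assumptions}. Proposition~\ref{prop:2comp} then applies verbatim: with $D_i:=\max\{2,2L_i-1\}$, any two parameter vectors $\Ka^{*},\Ka^{**}\in\R_{>0}^{\#\Edg}$ with $s_{i,L_i}^{(\ell)}(\bx,\Ka^{*})=s_{i,L_i}^{(\ell)}(\bx,\Ka^{**})$ for $1\le\ell\le D_i$ must coincide on every rate constant of this $i$-th pair, namely the $a_{i,j},b_{i,j},c_{i,j},\tilde a_{i,j},\tilde b_{i,j},\tilde c_{i,j}$ with $1\le j\le L_i$. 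Although the total derivatives $s_{i,L_i}^{(\ell)}$ are computed with respect to the full system~\eqref{CRN}, this is already subsumed by the statement of Proposition~\ref{prop:2comp}: the distinguished monomials used in its proof (Table~\ref{table:2components}) involve only the species of the $i$-th pair, so the remaining $2(N-1)$ components contribute nothing to the coefficients that carry the identification.

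Then I would set $D:=\max_{1\le i\le N}D_i$ and suppose $\Ka^{*},\Ka^{**}$ satisfy $s_{i,L_i}^{(\ell)}(\bx,\Ka^{*})=s_{i,L_i}^{(\ell)}(\bx,\Ka^{**})$ for all $1\le\ell\le D$ and all $1\le i\le N$. Restricting to $1\le\ell\le D_i$ for each $i$ and invoking the previous paragraph, $\Ka^{*}$ and $\Ka^{**}$ agree on all constants of the $i$-th pair; since the network is the union of exactly these $2N$ components, every edge label of $G$ lies in some pair, whence $\Ka^{*}=\Ka^{**}$. This gives identifiability from $s_{1,L_1},\dots,s_{N,L_N}$ in the sense of Definition~\ref{identvar} with uniform bound $D=\max\{2,2L_1-1,\dots,2L_N-1\}$; and since only derivatives of $s_{i,L_i}$ up to order $D_i=\max\{2,2L_i-1\}$ were used, the refined claim in the ``moreover'' part also follows.

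I do not expect a genuine obstacle here: the proof is bookkeeping around Proposition~\ref{prop:2comp}. The one point deserving explicit mention is the independence noted above — that recovering the constants of one pair from $s_{i,L_i}$ is unaffected by the other components — but this is already built into the hypothesis of Proposition~\ref{prop:2comp}, which permits the ambient network to be arbitrarily large provided it obeys the assumptions of Section~\ref{sec:assumptions}.
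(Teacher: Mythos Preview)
Your proposal is correct and matches the paper's approach: the paper states Corollary~\ref{coro:2comp} as ``a direct consequence of Proposition~\ref{prop:2comp}'' without further proof, and your argument is precisely the unpacking of that implication---apply Proposition~\ref{prop:2comp} once to each of the $N$ pairs (which is legitimate because the proposition is stated for a pair sitting inside an arbitrary ambient network satisfying the assumptions of Section~\ref{sec:assumptions}), then take the union. Your explicit remark about why the other components do not interfere is a helpful gloss but, as you note, is already absorbed into the hypotheses of Proposition~\ref{prop:2comp}.
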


\section{Identifying the cascade} \label{sec:cascade}

We will consider in this section networks that are called \emph{cascades}. Signaling cascades are biochemical networks of major biological importance as they participate in a number of several diseases and also control key cellular functions \cite{davis,kyriakis,pearson,schaeffer,widmann,zarubin}.  The Mitogen-Activated Protein Kinase (MAPK) cascade is a network present in all eukaryotic cells and one of the most extensively modeled signaling systems \cite{hornberg,sig-016,qiao07}. A schematic representation of the network is the following

{\small
\begin{minipage}{0.99\textwidth}
\[
\xymatrix{
S_{1,0}\ar@{->}@/^1pc/[r]^*-<1pt>{E} &
S_{1,1} \ar@{->}@/^1pc/[l]^*-<1pt>{F_1}& & \\ \\
& S_{2,0}\ar@{->}@/^1pc/[r]^*-<1pt>{\textcolor{red}{S_{1,1}}} &
S_{2,1} \ar@{->}@/^1pc/[r]^*-<1pt>{\textcolor{red}{S_{1,1}}} \ar@{->}@/^1pc/[l]^*-<1pt>{F_2}&
\textcolor{blue}{S_{2,2}}\ar@{->}@/^1pc/[l]^*-<1pt>{F_2} & \\ \\
& &  S_{3,0}\ar@{->}@/^1pc/[r]^*-<1pt>{\textcolor{blue}{S_{2,2}}} &
S_{3,1} \ar@{->}@/^1pc/[r]^*-<1pt>{\textcolor{blue}{S_{2,2}}} \ar@{->}@/^1pc/[l]^*-<1pt>{F_3}&
S_{3,2}\ar@{->}@/^1pc/[l]^*-<1pt>{F_3} \\
}
\]
\begin{tikzpicture}[remember picture, overlay]
  \draw[->,color=red!75,thick] (6.8,5.6) to [out=-60,in=90] node[] {} (7.2,4.3) ;
  \draw[->,color=red!75,thick] (6.8,5.6) to [out=-40,in=90] node[] {} (8.8,4.3) ;
  \draw[->,color=blue!75,thick] (9.7,3.1) to [out=-90,in=90] node[] {} (8.9,1.9) ;
  \draw[->,color=blue!75,thick] (9.7,3.1) to [out=-100,in=90] node[] {} (10.6,1.9) ;
\end{tikzpicture}
\end{minipage}
}

\noindent where $S_{1,0}$ represents the kinase {\small MAPKKK}, and $S_{1,1}$ represents the activated form {\small MAPKKK$^*$}. $S_{2,0}$, $S_{2,1}$ and $S_{2,2}$ stand for {\small MAPKK}, {\small MAPKK-P} and {\small MAPKK-PP}, respectively. And finally, $S_{3,0}$, $S_{3,1}$ and $S_{3,2}$ stand for {\small MAPK}, {\small MAPK-P} and {\small MAPK-PP}, respectively. $F_1$ represents the enzyme that deactivates {\small MAPKKK$^*$}, and $F_2$ and $F_3$ represent the corresponding phosphatase of each layer.

\medskip

More generally, cascades consist of  $N\geq 1$ layers and are represented by the following scheme:

\medskip

{\small
\begin{minipage}{0.99\textwidth}
\[
\xymatrix{
S_{1,0}\ar@{->}@/^1pc/[r]^*-<1pt>{E} & S_{1,1} \ar@{->}@/^1pc/[r]^*-<1pt>{E}\ar@{->}@/^1pc/[l]^*-<1pt>{F_1}&
{\phantom{S}}\dots{\phantom{S}}\ar@{->}@/^1pc/[r]^*-<1pt>{E}\ar@{->}@/^1pc/[l]^*-<1pt>{F_1}
& \textcolor{red}{S_{1,L_1}}\ar@{->}@/^1pc/[l]^*-<1pt>{F_1} & & \\ \\
& S_{2,0}\ar@{->}@/^1pc/[r]^*-<1pt>{\textcolor{red}{S_{1,L_1}}} &  S_{2,1} \ar@{->}@/^1pc/[r]^*-<1pt>{\textcolor{red}{S_{1,L_1}}} \ar@{->}@/^1pc/[l]^*-<1pt>{F_2}&
{\phantom{S}}\dots{\phantom{S}}\ar@{->}@/^1pc/[r]^*-<1pt>{\textcolor{red}{S_{1,L_1}}} \ar@{->}@/^1pc/[l]^*-<1pt>{F_2}
&  \textcolor{blue}{S_{2,L_2}}\ar@{->}@/^1pc/[l]^*-<1pt>{F_2} & \\ \\
&  &  & \dots &  & \\ \\
& &  S_{N,0}\ar@{->}@/^1pc/[r]^*-<1pt>{\textcolor{blue}{S_{N-1,L_{N-1}}}} &  S_{N,1} \ar@{->}@/^1pc/[r]^*-<1pt>{\textcolor{blue}{S_{N-1,L_{N-1}}}} \ar@{->}@/^1pc/[l]^*-<1pt>{F_N}&
{\phantom{S}}\dots{\phantom{S}}\ar@{->}@/^1pc/[r]^*-<1pt>{\textcolor{blue}{S_{N-1,L_{N-1}}}} \ar@{->}@/^1pc/[l]^*-<1pt>{F_N}
&  S_{N,L_N}\ar@{->}@/^1pc/[l]^*-<1pt>{F_N} \\
}
\]
\begin{tikzpicture}[remember picture, overlay]
  \draw[->,color=red!75,thick] (8.8,7.8) to [out=-100,in=90] node[] {} (5.7,6.5) ;
  \draw[->,color=red!75,thick] (8.8,7.8) to [out=-60,in=90] node[] {} (7.6,6.5) ;
  \draw[->,color=red!75,thick] (8.8,7.8) to [out=-40,in=90] node[] {} (9.7,6.5) ;
  \draw[->,color=blue!75,thick] (11,5.3) to [out=-110,in=50] node[] {} (9.3,4.3) ;
  \draw[->,color=blue!75,thick] (11,5.3) to [out=-110,in=90] node[] {} (10,4.3) ;
  \draw[->,color=blue!75,thick] (11,5.3) to [out=-110,in=110] node[] {} (11,4.3) ;
  \draw[->,color=blue!75,thick] (11,3) to [out=-120,in=90] node[] {} (7.8,1.8) ;
  \draw[->,color=blue!75,thick] (11,3) to [out=-90,in=90] node[] {} (9.8,1.8) ;
  \draw[->,color=blue!75,thick] (11,3) to [out=-100,in=90] node[] {} (11.8,1.8) ;
\end{tikzpicture}
\end{minipage}
}

One important feature of cascades is that  the enzyme on the first connected component of a certain layer is the last product of the first component of the previous layer. For instance, $S_{1,L_1}$ is the enzyme on the second layer and so on.
The corresponding reaction network for the $N$-layer cascade is the following

\begin{minipage}{.99\textwidth}
\begin{equation*}
E+S_{1,0} \; \textcolor{black!70}{\overset{a_{1,1}}{\underset{b_{1,1}}{\rightleftarrows}}} \; U_{1,1}  \, \textcolor{black!70}{\overset{c_{1,1}}{\rightarrow}} \;
E+S_{1,1}\; \textcolor{black!70}{\overset{a_{1,2}}{\underset{b_{1,2}}{\rightleftarrows}}} \; U_{1,2} \, \textcolor{black!70}{\overset{c_{1,2}}{\rightarrow}} \dots
E+S_{1,L_1-1}\textcolor{black!70}{\overset{a_{1,L_1}}{\underset{b_{1,L_1}}{\rightleftarrows}}} U_{1,L_1}
  \textcolor{black!70}{\overset{c_{1,L_1}}{\rightarrow}}E+S_{1,L_1}
\end{equation*}
\begin{equation*}
F_1+S_{1,L_1}\textcolor{black!70}{\overset{\tilde{a}_{1,L_1}}{\underset{\tilde{b}_{1,L_1}}{\rightleftarrows}}} V_{1,L_1} \textcolor{black!70}{\overset{\tilde{c}_{1,L_1}}{\rightarrow}}  F_1+ S_{1, L_1-1}
\; \textcolor{black!70}{\overset{\tilde{a}_{1,L_1-1}}{\underset{\tilde{b}_{1,L_1-1}}{\rightleftarrows}}} \; V_{1,L_1-1} \textcolor{black!70}{\overset{\tilde{c}_{1,L_1-1}}{\rightarrow}}\dots
F_1+S_{1,1}\textcolor{black!70}{\overset{\tilde{a}_{1,1}}{\underset{\tilde{b}_{1,1}}{\rightleftarrows}}} V_{1,1}
  \textcolor{black!70}{\overset{\tilde{c}_{1,1}}{\rightarrow}} F_1+S_{1,0}
\end{equation*}
\begin{equation*}
S_{1,L_1}+S_{2,0}\; \textcolor{black!70}{\overset{a_{2,1}}{\underset{b_{2,1}}{\rightleftarrows}}} \; U_{2,1} \, \textcolor{black!70}{\overset{c_{2,1}}{\rightarrow}} \;
  S_{1,L_1}+S_{2,1} \; \textcolor{black!70}{\overset{a_{2,2}}{\underset{b_{2,2}}{\rightleftarrows}}} \; U_{2,2} \, \textcolor{black!70}{\overset{c_{2,2}}{\rightarrow}} \dots
  S_{1,L_1}+S_{2,L_2-1}\textcolor{black!70}{\overset{a_{2,L_2}}{\underset{b_{2,L_2}}{\rightleftarrows}}} U_{2,L_2}
  \textcolor{black!70}{\overset{c_{2,L_2}}{\rightarrow}} S_{1,L_1}+S_{2,L_2}
\end{equation*}
\begin{equation*}
 F_2+S_{2,L_2}\textcolor{black!70}{\overset{\tilde{a}_{2,L_2}}{\underset{\tilde{b}_{2,L_2}}{\rightleftarrows}}} V_{2,L_2} \textcolor{black!70}{\overset{\tilde{c}_{2,L_2}}{\rightarrow}}  F_2+ S_{2, L_2-1}
\; \textcolor{black!70}{\overset{\tilde{a}_{2,L_2-1}}{\underset{\tilde{b}_{2,L_2-1}}{\rightleftarrows}}} \; V_{2,L_2-1} \textcolor{black!70}{\overset{\tilde{c}_{2,L_2-1}}{\rightarrow}} \dots
  F_2+S_{2,1}\textcolor{black!70}{\overset{\tilde{a}_{2,1}}{\underset{\tilde{b}_{2,1}}{\rightleftarrows}}} V_{2,1}
  \textcolor{black!70}{\overset{\tilde{c}_{2,1}}{\rightarrow}}F_2+S_{2,0}
\end{equation*}
\begin{equation}\label{eq:cascade}
 \vdots
\end{equation}
\begin{equation*}
S_{N-1,L_{N-1}}+S_{N,0}\textcolor{black!70}{\overset{a_{N,1}}{\underset{b_{N,1}}{\rightleftarrows}}} U_{N,1} \textcolor{black!70}{\overset{c_{N,1}}{\rightarrow}} 
\dots
S_{N-1,L_{N-1}}+S_{N,L_N-1}\textcolor{black!70}{\overset{a_{N,L_N}}{\underset{b_{N,L_N}}{\rightleftarrows}}} U_{N,L_N}\textcolor{black!70}{\overset{c_{n,L_n}}{\rightarrow}} S_{N-1,L_{N-1}}+S_{N,L_N}
\end{equation*}
\begin{equation*}
F_N+S_{N,L_N}\textcolor{black!70}{\overset{\tilde{a}_{N,L_N}}{\underset{\tilde{b}_{N,L_N}}{\rightleftarrows}}} V_{N,L_N} \textcolor{black!70}{\overset{\tilde{c}_{N,L_N}}{\rightarrow}} F_N+ S_{N, L_N-1}
\; \textcolor{black!70}{\overset{\tilde{a}_{N,L_N-1}}{\underset{\tilde{b}_{N,L_N-1}}{\rightleftarrows}}} \; V_{N,L_N-1} \textcolor{black!70}{\overset{\tilde{c}_{N,L_N-1}}{\rightarrow}}\dots
  F_N+S_{N,1}\textcolor{black!70}{\overset{\tilde{a}_{N,1}}{\underset{\tilde{b}_{N,1}}{\rightleftarrows}}} V_{N,1}
  \textcolor{black!70}{\overset{\tilde{c}_{N,1}}{\rightarrow}} F_N+S_{N,0}.
\end{equation*}

\end{minipage}

\bigskip

We will assume $F_i\neq F_j$ if $i\neq j$, and consider the following partition of the non-intermediate species, which satisfies \hypo~\ref{hyp:partition}:
\[
\Sp=\Sp^{(1)} \bigsqcup \Sp^{(2)} \bigsqcup \dots \bigsqcup \Sp^{(2N+1)},
\]
with $\Sp^{(m)}=\{S_{m,0}, \dots,S_{m,L_m} \}$ and $\Sp^{(N+m)}=\{F_m\}$, for $1\le m \le N$, and $\Sp^{(2N+1)}=\{E\}$.

\bigskip

As our running example for this section, we will consider the $2$-layer cascade with $18$ reactions.

\begin{example}\label{ex:cascade_18reactions}
 \begin{equation*}
  E+S_{1,0} \overset{a_{1,1}}{\underset{b_{1,1}}{\rightleftarrows}}
  U_{1,1} \overset{c_{1,1}}{\rightarrow}
  E+S_{1,1}
 \end{equation*}
 \begin{equation*}
  F_1+S_{1,1} \overset{\tilde{a}_{1,1}}{\underset{\tilde{b}_{1,1}}{\rightleftarrows}}
  V_{1,1} \overset{\tilde{c}_{1,1}}{\rightarrow}
  F_1+S_{1,0}
 \end{equation*}
 \begin{equation*}
  S_{1,1}+S_{2,0} \overset{a_{2,1}}{\underset{b_{2,1}}{\rightleftarrows}}
  U_{2,1} \overset{c_{2,1}}{\rightarrow}
  S_{1,1}+S_{2,1} \overset{a_{2,2}}{\underset{b_{2,2}}{\rightleftarrows}}
  U_{2,2} \overset{c_{2,2}}{\rightarrow}
  S_{1,1}+S_{2,2}
 \end{equation*}
 \begin{equation*}
  F_2+S_{2,2} \overset{\tilde{a}_{2,2}}{\underset{\tilde{b}_{2,2}}{\rightleftarrows}}
  V_{2,2} \overset{\tilde{c}_{2,2}}{\rightarrow}
  F_2+S_{2,1} \overset{\tilde{a}_{2,1}}{\underset{\tilde{b}_{2,1}}{\rightleftarrows}}
  V_{2,1} \overset{\tilde{c}_{2,1}}{\rightarrow}
  F_2+S_{2,0}.
 \end{equation*}

 The first layer consists of two connected components. The first component consists of one modification performed by the enzyme $E$  on the substrate $S_{1,0}$, which is transformed into the product $S_{1,1}$. On the second connected component, the enzyme $F_1$ performs  the reverse modification on the substrate $S_{1,1}$. The second layer is similar.

For this network we have
 $\Sp^{(1)}=\{S_{1,0},S_{1,1}\}$, $\Sp^{(2)}=\{S_{2,0},S_{2,1},S_{2,2}\}$, $\Sp^{(3)}=\{F_1\}$, $\Sp^{(4)}=\{F_2\}$, and $\Sp^{(5)}=\{E\}$.
\end{example}

\subsection{Identifiability of constants in a general cascade}

The aim of this section is to show that all the constants in the cascades introduced in \eqref{eq:cascade} can be identified from successive derivatives of the variable corresponding to the last product of the last layer, $S_{N,L_N}$.
In order to prove this, we relate the derivatives of the last product of a given layer of the cascade with the derivatives of the last product of the layer immediately above.

To shorten notation, we will denote $K_{m, j} = b_{m,j}+ c_{m,j}$ and $\tilde{K}_{m, j} =  \tilde{b}_{m,j}+ \tilde{c}_{m,j}$ for every $1\le m\le N$, $1\le j \le L_m$. Also, for unifying purposes, we set $S_{0,L_{0}}:= E$.

For $1\le n\le N$, consider the variable $s_{n, L_n}$ corresponding to the last product of the $n$th layer of the cascade. We have that
\[
 \dot{s}_{n,L_n}=c_{n,L_n} u_{n,L_n} -\tilde{a}_{n,L_n} s_{n,L_n} f_n+\tilde{b}_{n,L_n}v_{n,L_n}- \sum_{j=1}^{L_{n+1}} a_{n+1,j} s_{n, L_n} s_{n+1, j-1}+ \sum_{j=1}^{L_{n+1}} K_{n+1,j} u_{n+1,j}
\]
and, for $n=N$, only the three first terms appear in the derivative, \textit{i.e.} $a_{N+1,j}=0$, $K_{N+1,j}=0$ for all $j$. The second derivative of $s_{n,L_n}$ is
\[
\begin{array}{c}
\ddot{s}_{n,L_n} = c_{n, L_n} (a_{n, L_n} \colorbox{red!30}{$s_{n-1, L_{n-1}}$}\, s_{n, L_n-1} - K_{n, L_n} u_{n, L_n})
   - \tilde{a}_{n, L_n} (\dot{s}_{n, L_n} f_n + s_{n, L_n} \dot{f}_n) + \qquad {} \\[2mm]
   \qquad {} + \tilde{b}_{n,L_n} (\tilde{a}_{n, L_n} s_{n, L_n} f_n  -\tilde{K}_{n, L_n}v_{n, L_n}) -  \displaystyle\sum_{j=1}^{L_{n+1}} a_{n+1, j} (\dot{s}_{n, L_n} s_{n+1, j-1} +s_{n, L_n} \dot{s}_{n+1, j-1}) + \\
 {}+ \displaystyle\sum_{j=1}^{L_{n+1}} K_{n+1, j} (a_{n+1, j} s_{n, L_n} s_{n+1, j-1} - K_{n+1, j} u_{n+1, j}).
\end{array}
\]

We can see that the variable $s_{n-1, L_{n-1}}$ corresponding to the last product of the $(n-1)$th layer appears in the second derivative of $s_{n, L_n}$. More precisely, from the above expression, it follows easily that it only appears in the term $c_{n, L_n} a_{n, L_n} s_{n-1, L_{n-1}} s_{n, L_n-1}$, since $S_{n-1, L_{n-1}}$ does not react with or to $F_n$ or $S_{n+1, j}$ for any $j$.
Thus, two differentiation steps enable us to ``jump'' from one layer of the cascade to the layer immediately above. Inductively, the idea is that, for $m<n$, by taking $2(n-m)$ derivatives of $s_{n, L_n}$ we will reach the $m$th layer; that is, the variable $s_{m, L_m}$ will appear and so, the successive derivatives of $s_{m, L_m}$ will appear in higher order derivatives of $s_{n, L_n}$.

Now, by the results in Section \ref{subsec:2comp} for the case of two connected components of the form \eqref{eq:2conncomp}, we can identify all constants in the $m$th layer of the cascade by looking at the coefficients of certain monomials of the derivatives of $s_{m, L_m}$.
Then, our previous considerations will imply that those constants can be identified from successive derivatives of $s_{n, L_n}$ as well. In order to ensure that this can be achieved, we prove that certain monomials effectively appear in the derivatives of $s_{n, L_n}$, and compute their coefficients (see Proposition \ref{propApp:monomials} in Appendix A for a precise statement and its proof).

When considering the last product of the last layer of the cascade, we obtain our main result:

\begin{theorem}\label{thm:cascade}
All the constants in the network~\eqref{eq:cascade}
can be identified from $s^{(\ell)}_{N,L_N}$ with $1\leq \ell \leq \max\{2N; 2(N-m+L_m)-1, \ 1\leq m\leq N\}$.
\end{theorem}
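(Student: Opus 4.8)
The plan is to reduce the claim to the already-proved two-component case (Proposition~\ref{prop:2comp}) by exhibiting, for every layer $m$, distinguished monomials in sufficiently high derivatives of $s_{N,L_N}$ whose coefficients reproduce — up to a nonzero factor built from already-identified constants — exactly the coefficients used in Table~\ref{table:2components} for the pair of connected components associated to the $m$th layer. The key structural observation, already set out in the excerpt, is that two differentiation steps let us ``jump'' one layer upward: the variable $s_{n-1,L_{n-1}}$ appears in $\ddot{s}_{n,L_n}$ only through the term $c_{n,L_n}a_{n,L_n}\,s_{n-1,L_{n-1}}\,s_{n,L_n-1}$, because $S_{n-1,L_{n-1}}$ does not react with or to $F_n$ nor to any $S_{n+1,j}$. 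Iterating, after $2(N-m)$ derivatives of $s_{N,L_N}$ the variable $s_{m,L_m}$ enters multiplied by a monomial $\mu_m$ in the $S_{i,L_i-1}$'s (for $m\le i\le N$) with coefficient $\prod_{i=m+1}^{N} c_{i,L_i}a_{i,L_i}$, and then $\ell$ further derivatives bring in, inside this controlled ``shell'', the $\ell$th total derivative of $s_{m,L_m}$ as computed in the $m$th layer.

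First I would make precise the ``shell'' statement: I claim that for each $m$ and each $\ell\ge 1$, the derivative $s_{N,L_N}^{(2(N-m)+\ell)}$ contains the monomial $\mu_m\cdot\mathrm{Mon}$ for every monomial $\mathrm{Mon}$ appearing in $s_{m,L_m}^{(\ell)}$ that is used in Table~\ref{table:2components}, with coefficient equal to $\big(\prod_{i=m+1}^{N} c_{i,L_i}a_{i,L_i}\big)$ times the Table~\ref{table:2components} coefficient, \emph{and} that no other term of $s_{N,L_N}^{(2(N-m)+\ell)}$ contributes to that monomial. This is exactly the content of Proposition~\ref{propApp:monomials} invoked in the excerpt; I would use it as a black box here. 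The disjointness of the partition $\Sp^{(1)}\sqcup\cdots\sqcup\Sp^{(2N+1)}$, together with Remark~\ref{extra}, is what guarantees there is no ``leakage'': the species $S_{i,L_i-1}$ used to build $\mu_m$ live in pairwise distinct blocks $\Sp^{(i)}$, the enzymes $F_i$ live in their own blocks, and the factor $S_{N-1,L_{N-1}}$ (or more generally $S_{i-1,L_{i-1}}$) never coincides with a substrate/product of layer $i$ or with $F_i$, so the only way to produce $\mu_m$ is by differentiating along the chain $N\to N-1\to\cdots\to m$.

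Granting the shell statement, the identification proceeds layer by layer, top down. For the top layer $m=N$: Proposition~\ref{prop:1conncomp} (resp.\ Proposition~\ref{prop:2comp} for the pair of components of layer $N$) identifies every constant $c_{N,L_N},a_{N,L_N},b_{N,L_N},\tilde a_{N,L_N},\tilde b_{N,L_N},\tilde c_{N,L_N}$ and the $a_{N,j}$, $K_{N,j}$, $c_{N,j}$, $b_{N,j}$, $\tilde a_{N,j}$, $\tilde K_{N,j}$, $\tilde c_{N,j}$ for $j<L_N$ from $s_{N,L_N}^{(\ell)}$, $1\le\ell\le\max\{2,2L_N-1\}$. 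Inductively, once all constants of layers $N,N-1,\dots,m+1$ are known, the product $\prod_{i=m+1}^{N} c_{i,L_i}a_{i,L_i}$ is a known nonzero number, so dividing the coefficient of $\mu_m\cdot\mathrm{Mon}$ in $s_{N,L_N}^{(2(N-m)+\ell)}$ by it recovers the Table~\ref{table:2components} coefficient, hence all constants of layer $m$ exactly as in Proposition~\ref{prop:2comp}, using $1\le\ell\le\max\{2,2L_m-1\}$, i.e.\ total order $\le 2(N-m)+\max\{2,2L_m-1\}$. Taking the maximum over $m$ and simplifying ($2(N-m)+2 = 2(N-m+1)\le 2N$ and $2(N-m)+2L_m-1 = 2(N-m+L_m)-1$) gives the bound $\max\{2N;\,2(N-m+L_m)-1,\ 1\le m\le N\}$. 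The main obstacle is purely the bookkeeping in the shell statement: one must verify that \emph{no} cross-terms arising from the enzyme $F_i$ factors, the auxiliary substrates, or the feedback of $s_{i,L_i}$ onto layer $i+1$ ever reproduce the monomial $\mu_m\cdot\mathrm{Mon}$, which is where Assumptions~\ref{hyp:network} and~\ref{hyp:partition} (and Remark~\ref{extra}) are used in full force; this combinatorial verification is carried out in Appendix~A (Proposition~\ref{propApp:monomials}), and the theorem follows by the inductive argument above.
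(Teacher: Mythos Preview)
Your approach is essentially the paper's own: invoke Proposition~\ref{propApp:monomials} as the key technical result furnishing distinguished monomials in $s_{N,L_N}^{(2(N-m)+\ell)}$ whose coefficients determine the layer-$m$ constants, and then run a top-down induction on $m$ using the already-known factor $\cC_m=\prod_{i=m+1}^N c_{i,L_i}a_{i,L_i}$. The derivative-counting to arrive at the bound $\max\{2N;\,2(N-m+L_m)-1\}$ is also carried out exactly as the paper does.

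One point in your ``shell statement'' is not quite right and is worth flagging. You assert that the coefficient of $\mu_m\cdot\mathrm{Mon}$ in $s_{N,L_N}^{(2(N-m)+\ell)}$ equals $\cC_m$ times the Table~\ref{table:2components} coefficient of $\mathrm{Mon}$ in $s_{m,L_m}^{(\ell)}$, with no other contributions. This fails for the monomials $u_{m,L_m}\cP_m$ and $v_{m,L_m}\cP_m$ at order $2(N-m)+2$: their coefficients are $-c_{m,L_m}(K_{m,L_m}+\cK_m)\cC_m$ and $-\tilde b_{m,L_m}(\tilde K_{m,L_m}+\cK_m)\cC_m$, where $\cK_m=\sum_{i=m+1}^N K_{i,L_i}$ (see Table~\ref{table:cascade} and the last statement of Proposition~\ref{propApp:monomials}, which relies on the second part of Lemma~\ref{lem:recursion_sn_int}). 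The shell is not perfectly clean here because at the next-to-minimal order the recursion picks up contributions from two consecutive derivatives of $s_{n-1,L_{n-1}}$, introducing the $-K_{n,L_n}$ correction at each step. Your induction still goes through, since $\cK_m$ is built from constants of layers $>m$ and is therefore already known when you reach layer $m$; but the identification of $b_{m,L_m}$ and $\tilde c_{m,L_m}$ requires subtracting this known correction, not merely dividing by $\cC_m$. (Also, a small index slip: your $\mu_m$ should run over $m+1\le i\le N$, matching $\cP_m$, not $m\le i\le N$.)
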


We now summarize the identifiability procedure which proves the previous theorem. The procedure obtains recursively, for $m= N, N-1, \dots, 1$,  the values of the constants $a_{m, j}$, $\tilde{a}_{m, j}$, $b_{m, j}$, $\tilde{b}_{m, j}$, $c_{m, j}$, and $\tilde{c}_{m, j}$, for $1\le j \le L_m$, from the successive derivatives of $s_{N, L_N}$, according to Table~\ref{table:cascade}.

In order to shorten notation, let $\cP_N:=1$, $\cC_N:= 1$, $\cK_N:=0$ and, for $1\le m \le  N-1$, $\cP_m:=\prod\limits_{i=m+1}^N s_{i, L_i-1}$, $\cC_m :=\prod\limits_{i=m+1}^N c_{i, L_i}a_{i, L_i}$ and $\cK_m := \sum\limits_{i=m+1}^N K_{i, L_i}$.

\begin{center}
\begin{table}[ht]

\renewcommand{\arraystretch}{1.8}

\begin{tabular}{|c|c|c|c|}
\hline
Derivative                 & Monomial          & Coefficient (up to sign)   & Constant       \\ \hline
\multirow{3}{*}
{${s}_{N,L_N}^{(2(N-m)+1)}$}
              & $u_{m,L_m}\cP_m$               & $c_{m,L_m} \cC_m$          & $c_{m,L_m}$         \\  \cline{2-4}
              & $f_m s_{m,L_m}\cP_m$           & $\tilde{a}_{m, L_m} \cC_m$ & $\tilde{a}_{m, L_m}$ \\  \cline{2-4}
              & $v_{m, L_m}\cP_m$              & $\tilde{b}_{m, L_m} \cC_m$ & $\tilde{b}_{m, L_m}$ \\  \hline
\multirow{5}{*}
{${s}_{N,L_N}^{(2(N-m)+2)}$}
    & $s_{m-1,L_{m-1}}s_{m,L_m-1}\cP_m$ & $c_{m,L_m} a_{m,L_m}\cC_m$        & $a_{m,L_m}$ \\ \cline{2-4}
    & $u_{m,L_m}\cP_m$                  & $c_{m,L_m}(K_{m,L_m}+\cK_m)\cC_m$ & $b_{m,L_m}$  \\  \cline{2-4}
    & $v_{m,L_m}\cP_m$  & $\tilde{b}_{m,L_m}(\tilde{K}_{m,L_m}+\cK_m)\cC_m$ & $\tilde{c}_{m,L_m}$ \\ \cline{2-4}
    & $f_m s_{m,j}s_{m,L_m}\cP_m$ & $\tilde{a}_{m,j}\tilde{a}_{m, L_m}\cC_m$ & $\tilde{a}_{m,j}$ {\scriptsize $(1\le j \le L_m-1)$} \\  \cline{2-4}
    & $v_{m,j}s_{m,L_m}\cP_m$   & $\tilde{K}_{m,j}\tilde{a}_{m,L_m}\cC_m$   & $\tilde{K}_{m,j}$ {\scriptsize $(1\le j \le L_m-1)$} \\  \hline
\multirow{2}{*}
{${s}_{N,L_N}^{(2(N-m)+3)}$}
    & $s_{m-1,L_{m-1}}s_{m,j-1}s_{m,L_m-1}\cP_m$ & $c_{m,L_m}a_{m,L_m}a_{m,j}\cC_m$ & $a_{m,j}$ {\scriptsize $(1\le j \le L_m-1)$}\\ \cline{2-4}
    & $u_{m,j} s_{m,L_m-1}\cP_m$                 & $c_{m,L_m}a_{m,L_m}K_{m,j}\cC_m$ & $K_{m,j}$ {\scriptsize $(1\le j \le L_m-1)$}\\ \hline
{${s}_{N,L_N}^{(2(N-m)+2k+1)}$}
    & $s_{m-1,L_{m-1}}^k u_{m,L_m-k}\cP_m$ & $c_{m,L_m-k} \prod\limits_{j=L_m-k+1}^{L_m}c_{m,j}a_{m,j}\, \cC_m$ & $c_{m,L_m-k}$, $b_{m,L_m-k}$ \\ \cline{2-4}
{\scriptsize $1\le k\le L_m-1$}
    & $s_{m-1,L_{m-1}}^k v_{m,L_m-k}\cP_m$& $\tilde{b}_{m,L_m-k} \prod\limits_{j=L_m-k+1}^{L_m}c_{m,j}a_{m,j}\, \cC_m$ & $\tilde{b}_{m,L_m-k}$, $\tilde{c}_{m,L_m-k}$ \\ \hline
\end{tabular}
\caption{The constants in the cascade can be identified from $s_{N,L_N}$. Tha table shows the monomials to be considered (column 2) in each of the successive derivatives of $s_{N,L_N}$ (column 1). For each monomial, taking into account the constants already identified, the corresponding coefficient (column 3) enables us to identify the constant appearing in the last column}\label{table:cascade}
\end{table}
\end{center}

\begin{example}[Example~\ref{ex:cascade_18reactions} continued]
Here we find the monomials relevant for identifiability in the 2-layer cascade. We highlight with blue boxes the constants that we are identifying in each derivative. We moreover highlight with green boxes the monomials that we used to identify $b_{1,1}$ and $\tilde{c}_{1,1}$ from $K_{1,1}$ and $\tilde{K}_{1,1}$, respectively (see rows 5 and 6 in Table~\ref{table:cascade}).

\begin{tabular}{rcl}
 $\dot{s}_{2,2}$ & $=$ & $-\fboxrule=1.2pt\fcolorbox{blue}{white}{$\tilde{a}_{2,2}$}f_2s_{2,2}+\fboxrule=1.2pt\fcolorbox{blue}{white}{$\tilde{b}_{2,2}$}v_{2,2}+\fboxrule=1.2pt\fcolorbox{blue}{white}{$c_{2,2}$}u_{2,2}$\\
 $\ddot{s}_{2,2}$ & $=$ & $-\tilde{a}_{2,2}[(-\tilde{a}_{2,2}f_2s_{2,2}+\tilde{K}_{2,2}v_{2,2}-\fboxrule=1.2pt\fcolorbox{blue}{white}{$\tilde{a}_{2,1}$}f_2s_{2,1}+\fboxrule=1.2pt\fcolorbox{blue}{white}{$\tilde{K}_{2,1}$}v_{2,1})s_{2,2}+f_2\dot{s}_{2,2}]+$\\
  & & $+\tilde{b}_{2,2}(\tilde{a}_{2,2}f_2s_{2,2}-\fboxrule=1.2pt\fcolorbox{blue}{white}{$\tilde{K}_{2,2}$}v_{2,2})+c_{2,2} (\fboxrule=1.2pt\fcolorbox{blue}{white}{$a_{2,2}$}s_{1,1}s_{2,1}-\fboxrule=1.2pt\fcolorbox{blue}{white}{$K_{2,2}$}u_{2,2})$ \\
 $s_{2,2}^{(3)}$ & $=$ & $\displaystyle\sum_{h+i \le 2} \beta_{f_2,h,i}\, f_2^{(h)} s_{2,2}^{(i)}+\delta_{v_{2,2}} v_{2, 2} +c_{2,2}(a_{2,2}[(\fboxrule=1.2pt\fcolorbox{blue}{white}{$c_{1,1}$}u_{1,1}-\fboxrule=1.2pt\fcolorbox{blue}{white}{$\tilde{a}_{1,1}$}f_1s_{1,1}+$\\
 & & $+\fboxrule=1.2pt\fcolorbox{blue}{white}{$\tilde{b}_{1,1}$}v_{1,1}-\fboxrule=1.2pt\fcolorbox{blue}{white}{$a_{2,1}$}s_{1,1}s_{2,0}+\fboxrule=1.2pt\fcolorbox{blue}{white}{$K_{2,1}$}u_{2,1}-a_{2,2}s_{1,1}s_{2,1}+K_{2,2}u_{2,2})s_{2,1}+$\\
 & & $+s_{1,1}(\fboxrule=1.2pt\fcolorbox{blue}{white}{$c_{2,1}$}u_{2,1}-a_{2,2}s_{1,1}s_{2,1}+b_{2,2}u_{2,2}+\tilde{c}_{2,2}v_{2,2}-\tilde{a}_{2,1}f_2s_{2,1}+\fboxrule=1.2pt\fcolorbox{blue}{white}{$\tilde{b}_{2,1}$}v_{2,1})]-$\\
 & & $-K_{2,2}(a_{2,2}s_{1,1}s_{2,1}-K_{2,2}u_{2,2}))$\\
 $s_{2,2}^{(4)}$ & $=$ & $\displaystyle\sum_{h+i \le 3} \beta_{f_2,h,i}\, f_2^{(h)} s_{2,2}^{(i)}+\delta_{v_{2,2}} v_{2, 2}+c_{2,2}(a_{2,2}[(c_{1,1}(\fboxrule=1.2pt\fcolorbox{blue}{white}{$a_{1,1}$}es_{1,0}-\fboxrule=1.2pt\fcolorbox{blue}{white}{$K_{1,1}$}\fboxrule=1.2pt\fcolorbox{green}{white}{$u_{1,1}$})-$\\
 & & $-\tilde{a}_{1,1}(\dot{f}_1s_{1,1}+f_1\dot{s}_{1,1})+\tilde{b}_{1,1}(\tilde{a}_{1,1}f_1s_{1,1}-\fboxrule=1.2pt\fcolorbox{blue}{white}{$\tilde{K}_{1,1}$}\fboxrule=1.2pt\fcolorbox{green}{white}{$v_{1,1}$})-a_{2,1}(\dot{s}_{1,1}s_{2,0}+s_{1,1}\dot{s}_{2,0})+$\\
 & & $+K_{2,1}\dot{u}_{2,1}-a_{2,2}(\dot{s}_{1,1}s_{2,1}+s_{1,1}\dot{s}_{2,1})+K_{2,2}\dot{u}_{2,2})s_{2,1}+2\dot{s}_{1,1}\dot{s}_{2,1}+s_{1,1}\ddot{s}_{2,1}]-$\\
 & & $ -K_{2,2}(a_{2,2}[(c_{1,1}\fboxrule=1.2pt\fcolorbox{green}{white}{$u_{1,1}$}-\tilde{a}_{1,1}f_1s_{1,1}+\tilde{b}_{1,1}\fboxrule=1.2pt\fcolorbox{green}{white}{$v_{1,1}$}-a_{2,1}s_{1,1}s_{2,0}+K_{2,1}u_{2,1}-$\\
 & & $-a_{2,2}s_{1,1}s_{2,1}+K_{2,2}u_{2,2})s_{2,1}+s_{1,1}\dot{s}_{2,1}]-K_{2,2}\dot{u}_{2,2})$ \\
\end{tabular}

\end{example}

\subsection{An example of how to obtain the rate constants from data}\label{sec:algorithm}

Here, we will illustrate our previous theoretical identifiability results in a specific example, showing how they can be used as a guidance in experimental design for practical parameter identification from observable data.

The 3-layer cascade with $L_1=1, L_2=L_3=2$ represents the well known MAPK signaling cascade with $s_{3,2}$ representing the concentration of the doubly phosphorylated kinase MAPK-PP \cite{CDVS16,sig-016,kholo00,sig-051}. Consider, then, the cascade~\eqref{eq:cascade} for $N=3$ and $L_1=1$, $L_2=L_3=2$, whose schematic representation is introduced at the beginning of Section~\ref{sec:cascade}. In this case, we have $22$ species  concentrations $\bx$ and 30 rate constants $\Ka$ which can be identified from $s^{(\ell)}_{3,2}$, $1\leq \ell \leq 6$, by Theorem~\ref{thm:cascade}. According to Definition~\ref{identvar}, this means that if we consider the polynomial system
\begin{equation}\label{eq:lhs}
 s_{3,2}^{(\ell)}(\bx,\Ka)=p_{\ell}(\bx,\Ka)
\end{equation}
for the corresponding polynomials $p_{\ell}$ obtained from~\eqref{CRN} by computing the successive total derivatives of $s_{3,2}$, the function that maps the vector of rate constants $\Ka$ to the coefficients of the polynomials $p_{\ell}$'s (considered as polynomials in the species concentrations $\bx$) is injective. This means that all the rate constants can be recovered from noise-free data by a suitable interpolation procedure: if we evaluate these polynomials at ``sufficiently many'' points $\bx\in\R^{22}$, we may reconstruct the coefficients and, consequently, determine uniquely the values of the rate constants. 

However, it is not clear which $\bx\in\R^{22}$ are suitable for identifying the parameters of the system, nor how many of them are enough for this purpose. We give here a heuristic to choose a list of $\bx\in\R^{22}$ based on the monomials in the second column of Table~\ref{table:30constants}, which is the adapted version of Table~\ref{table:cascade} for this particular case. This heuristic can be used as an aid to design experiments to obtain the rate constants values. Each initial state $\bx\in\R^{22}$ is in correspondence with a different experiment.

\medskip

In order to recover the value of the $30$ rate constants in this case, we propose the following algorithm:
\begin{enumerate}
 \item[Step~1.] Consider $\bx_1,\bx_2,\dots,\bx_{30}\in\R^{22}$ defined as follows: for the $i$-th monomial in Table~\ref{table:30constants}, consider $\bx_i\in\R^{22}$ where all the coordinates are $0$ except for those coordinates corresponding to variables that divide the monomial, which are equal to $1$. For example, for the monomial $u_{1,1}s_{2,1}s_{3,1}$, all the coordinates of the associated point are equal to $0$, except for the three coordinates corresponding to $u_{1,1}$, $s_{2,1}$ and $s_{3,1}$ that are equal to $1$.
 \item[Step~2.] For each $i\in\{1,\dots,30\}$, obtain the value $s_{3,2}^{(\ell)}(\bx_i,\Ka)$ for the order $\ell$ that corresponds to the $i$-th monomial in Table~\ref{table:30constants}. Ideally, these values should be obtained experimentally, for instance considering $\bx_i$ the initial state at time $t=0$.
 \item[Step~3.] Construct a (nonlinear) polynomial equation system from~\eqref{eq:lhs}, of $30$ equations in the $30$ unknowns $\Ka$, by evaluating the right-hand sides at $\bx_1,\dots,\bx_{30}$ and replacing the left-hand sides with the values obtained in the previous step.
 \item[Step~4.] Solve the polynomial system in the unknowns $\Ka$.
\end{enumerate}

A vague explanation of why this heuristic works is that each monomial in Table~\ref{table:30constants}  incorporates a new variable that comes paired with the new rate constant to be identified. Further research is needed to find a rigorous proof for this conjecture.

We implemented the algorithm above by reconstructing the values of the left-hand sides of~\eqref{eq:lhs} with the rate constants in the third column of Table~S2 in the Supporting Information of \cite{qiao07}. We used Maple~\cite{maple} to solve the system of equations and successfully obtained the following values (in a few seconds using a standard desktop computer).

\medskip

\begin{center}
 \begin{minipage}{0.95\textwidth}
  \noindent $a_{1,1} = 337.2299998$, $a_{2,1} = 1226.000001$, $a_{2,2} = 3383.7$, $a_{3,1} = 229.5699981$, $a_{3,2} = 3388.7$, $\tilde{a}_{1,1} = 1841.000002$, $\tilde{a}_{2,1} = 2960.300016$, $\tilde{a}_{2,2} = 1956.8$, $\tilde{a}_{3,1} = 297.0$, $\tilde{a}_{3,2} = 974.7$, $b_{1,1} = 261.1000013$, $b_{2,1} = 623.1700002$, $b_{2,2} = 605.3100002$, $b_{3,1} = 694.13$, $b_{3,2} = 485.3499999$, $\tilde{b}_{1,1} = 198.47$, $\tilde{b}_{2,1} = 163.0$, $\tilde{b}_{2,2} = 48.804$, $\tilde{b}_{3,1} = 301.09$, $\tilde{b}_{3,2} = 587.45$, $c_{1,1} = 146.07$, $c_{2,1} = 420.0000001$, $c_{2,2} = 214.65$, $c_{3,1} = 43.658$, $c_{3,2} = 65.732$, $\tilde{c}_{1,1} = 338.4400021$, $\tilde{c}_{2,1} = 668.2000111$, $\tilde{c}_{2,2} = 67.97000003$, $\tilde{c}_{3,1} = 31.743$, $\tilde{c}_{3,2} = 175.91$.
 \end{minipage}
\end{center}

\medskip

The same $3$-layer cascade may be completely identified also by means of the result stated in Theorem~\ref{thm:eachconncomp}: in this case the rate constants in each connected component can be identified from $s^{(\ell_1)}_{1,1}$,  $s^{(\ell_2)}_{1,0}$, $s^{(\ell_3)}_{2,2}$, $s^{(\ell_4)}_{2,0}$, $s^{(\ell_5)}_{3,2}$, and $s^{(\ell_6)}_{3,0}$ respectively, for $1\leq \ell_1,\ell_2\leq 2$ and $1\leq \ell_3,\ell_4,\ell_5,\ell_6\leq 3$. By Corollary~\ref{coro:2comp} we can also identify the constants from $s^{(\ell_1)}_{1,1}$, $s^{(\ell_3)}_{2,2}$, and $s^{(\ell_5)}_{3,2}$, for $1\leq \ell_1\leq 2$ and $1\leq \ell_3,\ell_5\leq 3$. We adapted the procedure above and implemented it in Maple, and we obtained the same rate constants as before.

Throughout the article we assume that one can use noise-free data in order to recover the rate constants values. Nevertheless, there are certain numerical errors that appear at Step~4, when the polynomial system in the unknowns $\Ka$ is solved. If we moreover implement the algorithm with numerical approximations of the total derivatives, more numerical errors are bound to occur. The major drawback of considering the last two approaches, based on Theorem~\ref{thm:eachconncomp} or Corollary~\ref{coro:2comp}, is that more species have to be measured. However, the value that has to be numerically estimated corresponds to a derivative of order at most three, which can be approximated more accurately and with fewer time measurements than those values of higher order derivatives.

\medskip

The Maple code for both procedures can be found at \newline
 http://cms.dm.uba.ar/Members/mpmillan/identifiability.

\medskip

\begin{center}
\begin{table}

\renewcommand{\arraystretch}{1.5}
\begin{tabular}{|c|c|c|c|}
\hline
Derivative   & Monomial          & Coefficient (up to sign)  & Constant       \\ \hline
\multirow{3}{*}
{$\dot{s}_{3,2}$} & $u_{3,2}$       & $c_{3,2}$              & $c_{3,2}$         \\  \cline{2-4}
                  & $f_3s_{3,2}$    & $\tilde{a}_{3,2}$      & $\tilde{a}_{3,2}$ \\  \cline{2-4}
                  & $v_{3,2}$       & $\tilde{b}_{3,2}$      & $\tilde{b}_{3,2}$ \\  \hline
\multirow{5}{*}
{$\ddot{s}_{3,2}$} & $s_{2,2} s_{3,1}$   & $c_{3,2} a_{3,2}$                 & $a_{3,2}$         \\  \cline{2-4}
                   & $u_{3,2}$           & $c_{3,2}K_{3,2}$                  & $b_{3,2}$ \\  \cline{2-4}
                   & $v_{3,2}$           & $\tilde{b}_{3,2}\tilde{K}_{3,2}$  & $\tilde{c}_{3,2}$ \\  \cline{2-4}
                   & $f_3s_{3,1}s_{3,2}$ & $\tilde{a}_{3,1}\tilde{a}_{3,2}$  & $\tilde{a}_{3,1}$ \\  \cline{2-4}
                   & $v_{3,1}s_{3,2}$    & $\tilde{K}_{3,1}\tilde{a}_{3,2}$  & $\tilde{K}_{3,1}$ \\  \hline
\multirow{8}{*}
{$s_{3,2}^{(3)}$} & $s_{2,2} s_{3,0}s_{3,1}$ & $c_{3,2} a_{3,2} a_{3,1}$         & $a_{3,1}$\\ \cline{2-4}
          & $u_{3,1} s_{3,1}$        & $c_{3,2} a_{3,2} K_{3,1}$         & $K_{3,1}$\\ \cline{2-4}
                  & $s_{2,2} u_{3,1}$        & $c_{3,1} c_{3,2} a_{3,2}$         & $c_{3,1}, b_{3,1}$\\  \cline{2-4}
                  & $v_{3,1} s_{2,2}$        & $\tilde{b}_{3,1} c_{3,2} a_{3,2}$ & $\tilde{b}_{3,1},\tilde{c}_{3,1}$ \\ \cline{2-4}
                  & $u_{2,2} s_{3,1}$        & $c_{2,2} c_{3,2} a_{3,2}$         & $c_{2,2}$ \\  \cline{2-4}
                  & $f_2 s_{2,2}s_{3,1}$     & $\tilde{a}_{2,2} c_{3,2} a_{3,2}$ & $\tilde{a}_{2,2}$ \\  \cline{2-4}
                  & $v_{2,2} s_{3,1}$        & $\tilde{b}_{2,2} c_{3,2} a_{3,2}$ & $\tilde{b}_{2,2}$ \\ \hline
\multirow{5}{*}
{$s_{3,2}^{(4)}$} & $s_{1,1} s_{2,1} s_{3,1}$  & $c_{2,2}a_{2,2} c_{3,2} a_{3,2}$    & $a_{2,2}$ \\ \cline{2-4}
          & $u_{2,2} s_{3,1}$          & $c_{2,2} (K_{2,2}+K_{3,2}) \cC_2$ & $b_{2,2}$ \\ \cline{2-4}
                  & $v_{2,2} s_{3,1}$          & $\tilde{b}_{2,2} (\tilde{K}_{2,2}+K_{3,2}) \cC_2$  & $\tilde{c}_{2,2}$ \\ \cline{2-4}
                  & $f_2s_{2,1}s_{2,2}s_{3,1}$ & $\tilde{a}_{2,1}\tilde{a}_{2,2}\cC_2$ & $\tilde{a}_{2,1}$\\ \cline{2-4}
                  & $v_{2,1}s_{2,2}s_{3,1}$    & $\tilde{K}_{2,1}\tilde{a}_{2,2}\cC_2$ & $\tilde{K}_{2,1}$\\ \hline
\multirow{7}{*}
{$s_{3,2}^{(5)}$} & $u_{1,1} s_{2,1} s_{3,1}$      & $c_{1,1}\cC_1$  & $c_{1,1}$ \\ \cline{2-4}
          & $f_1s_{1,1}s_{2,1}s_{3,1}$     & $\tilde{a}_{1,1}\cC_1$ & $\tilde{a}_{1,1}$\\ \cline{2-4}
          & $v_{1,1}s_{2,1}s_{3,1}$        & $\tilde{b}_{1,1}\cC_1$ & $\tilde{b}_{1,1}$\\ \cline{2-4}

          & $s_{1,1}s_{2,0}s_{2,1}s_{3,1}$ & $c_{2,2}a_{2,2}a_{2,1}\cC_2$ & $a_{2,1}$\\ \cline{2-4}
          & $u_{2,1}s_{2,1}s_{3,1}$        & $c_{2,2}a_{2,2}K_{2,1}\cC_2$ & $K_{2,1}$ \\ \cline{2-4}
          & $s_{1,1}u_{2,1}s_{3,1}$        & $c_{2,1}c_{2,2}a_{2,2}a_{3,2}c_{3,2}$ & $c_{2,1}$, $b_{2,1}$\\ \cline{2-4}
          & $s_{1,1}v_{2,1}s_{3,1}$        & $\tilde{b}_{2,1}\cC_1$ & $\tilde{b}_{2,1}$,$\tilde{c}_{2,1}$\\ \hline
\multirow{3}{*}
{$s_{3,2}^{(6)}$} & $es_{1,0} s_{2,1} s_{3,1}$ & $c_{1,1}a_{1,1}\cC_1$   & $a_{1,1}$ \\ \cline{2-4}
                  & $u_{1,1} s_{2,1} s_{3,1}$  & $c_{1,1}(K_{1,1}+K_{2,2}+K_{3,2})\cC_1$  & $b_{1,1}$ \\ \cline{2-4}
                  & $v_{1,1}s_{2,1}s_{3,1}$    & $\tilde{b}_{1,1}(\tilde{K}_{1,1}+K_{2,2}+K_{3,2})\cC_1$ & $\tilde{c}_{1,1}$ \\ \hline
\end{tabular}

\caption{The constants in the the 3-layer cascade with 30 constants can be identified from
$s_{3,2}$. Tha table shows the monomials to be considered (column 2) in each of the successive derivatives of $s_{3,2}$ (column 1). For each monomial, taking into account the constants already identified, the corresponding coefficient (column 3) enables us to identify the constant appearing in the last column.  Here, we consider $\cC_1= c_{2,2}a_{2,2}c_{3,2}a_{3,2}$ and $\cC_2=c_{3,2}a_{3,2}$.
}\label{table:30constants}
\end{table}
\end{center}

\section{Discussion and further work}

The main contribution of this paper has been to prove that all the rate constants in several well known chemical reaction networks, that are abundant in the literature, can be identified from a reduced set of kinetic variables.
The work here extends previous results by Craciun and Pantea \cite{CP} and avoids computationally expensive procedures such as differential elimination and Gr\"obner basis \cite{BSAD07,Bo07,MEDS09}.

We should point out that we assumed that there is a special partition of the set of chemical species, and that every connected component of the chemical reaction network has a particular shape (see Section~\ref{sec:assumptions}). Both assumptions are natural when modeling multisite phosphorylation systems and signaling cascades \cite{ws08,sig-016}. We have then shown, in Section~\ref{sec:components}, how to identify the rate constants in every connected component, or two related connected components, from a single species. In Section~\ref{sec:cascade} we have moreover proved that all the rate constants in signaling cascade networks can be identified from only one species: the last product of the first component of the last layer. Additionally, we have presented in Section~\ref{sec:algorithm} an example showing how to compute the values of  the rate constants from noise free data according to our theoretical results in the previous sections. The procedure is based on a heuristic to choose the right input data; it would be of great interest to find a formal proof for establishing a good set of sufficient data for any network of the class considered in this paper.

We expect that the techniques used in this paper could be applied for identifiability from a few variables to a number of modifications of the networks we have considered here. For instance, it would be interesting to introduce more intermediate complexes within different reactions. Another potential adaptation is relaxing the assumption $F_i\neq F_j$ for $i\neq j$ in the cascade network, and allowing for repetition of these enzymes. Both modifications are natural extensions of the networks we have analized and we conjecture that similar results can be obtained. We moreover would like to apply our techniques to more general but hence well structured networks such as MESSI networks \cite{PMD18}. Another future research direction is to characterize which other variables can be considered to identify the rate constants of either a whole connected component or the entire biochemical network.

\bigskip

\noindent \textbf{Acknowledgements}
 The authors wish to thank the anonymous referees for their thoughtful comments which helped to improve the manuscript. This is a pre-print of an article published in the Bulletin of Mathematical Biology. The final authenticated version is available online at: https://doi.org/10.1007/s11538-019-00594-0.

\newpage

\appendix

\section{Proofs}\label{sec:pf}

Throughout this Appendix, we maintain the notation and assumptions introduced in Sections \ref{sec:preliminaries} and \ref{sec:assumptions}.

Before stating and proving our results, we introduce some further notation and formulas we will use in our analysis. We consider an autonomous dynamical system
\begin{equation}
\label{eqApp:CRN}
{\dot \bx}= \underset{y\to y'}{\sum} k_{yy'} \,  \bx^y \, (y'-y),
\end{equation}
arising from a chemical reaction network satisfying the assumptions stated in Section \ref{sec:assumptions}.

For a non-intermediate species $X$, let
\begin{equation}\label{eq:reactants}
\Zp_X = \{Z : Z \mbox{ reacts with } X\} \quad \hbox{and} \quad \Wp_X = \{ W : W \mbox{ reacts to } X\}.
\end{equation}
By the shape of the networks we consider, $\Zp_X$ is a set of non-intermediate species and $\Wp_X$ is a set of intermediate species.
From \eqref{eqApp:CRN}, we then have that
\begin{equation}\label{eq:dotx}
\dot{x}=-\sum_{Z\in \Zp_X} \mu_{z} x z+ \sum_{W\in \Wp_X}\eta_{w}w,
\end{equation}
for suitable non negative real numbers $\mu_z$ and $\eta_w$.
For $\ell \ge 2$, Leibniz rule implies that
\begin{equation}\label{eq:deriv_xl}
 x^{(\ell)}=-\sum_{Z\in\Zp_X}\mu_{z}\sum_{h+i=\ell-1}\binom{\ell-1}{h}x^{(h)}z^{(i)}+
 \sum_{W\in \Wp_X}\eta_{w}w^{(\ell-1)}.
\end{equation}
If $W\in \Wp_X$ is involved in a block of reactions
 \[Z_{w,1}+Z_{w,2} \overset{a_w}{\underset{b_w}{\rightleftarrows}} W \overset{c_w}{\rightarrow} Z_{w,3}+X,\]
then, according to \eqref{eqApp:CRN}, the differential equation $\dot{w}=a_wz_{w,1}z_{w,2}-K_w w,$ with $K_w = b_w+c_w$, is satisfied, and
\begin{equation}\label{eq:deriv_ul}
 w^{(\ell-1)}= \sum_{h+i\le \ell -2} (-K_w)^{\ell-2-h-i}a_w\binom{h+i}{h}z_{w,1}^{(h)}z_{w,2}^{(i)}+(-K_w)^{\ell-1} w.
\end{equation}
By separating the cases where $X\in \{Z_{w,1}, Z_{w,2}\}$ and $X\notin\{Z_{w,1}, Z_{w,2}\}$, we can simplify:
\begin{equation}\label{eq:simple_l}
 x^{(\ell)}=\sum_{Z\in\Zp_X} \sum_{h+i\le \ell - 1} \beta_{z,h,i}\,x^{(h)}z^{(i)}
+\sum_{W\in \Wp_X \atop X\notin\{Z_{w,1},Z_{w,2}\}}
\sum_{h+i\le \ell-2}\gamma_{w,h,i}\,z_{w,1}^{(h)}z_{w,2}^{(i)} + \sum_{w\in \Wp_X} \delta_w \, w,
\end{equation}
for suitable real numbers $\beta_{z,h,i}, \gamma_{w,h,i}$ and $\delta_w$ that depend on $\ell$ and the reaction rate constants.

From the previous formulas interpreted as polynomials in the variables $x, z, w$, we deduce straightforwardly:

\begin{lemma}\label{lem:cte} For a reaction network satisfying the assumptions of Section \ref{sec:assumptions}, we have:
\begin{enumerate}
\item The constant monomial does not appear in any derivative of any species.
\item The only monomials of degree $1$ appearing in a derivative $x^{(\ell)}$, $\ell \ge 1$, for a non-intermediate species $X$, are the monomials $w$ corresponding to $W\in \Wp_X$, that is, those that appear in $\dot x$.
\end{enumerate}
\end{lemma}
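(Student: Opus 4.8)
Both items concern only the lowest-degree part of the polynomials at hand, so the plan is to prove item (1) by induction on the order $\ell$ of differentiation, treating intermediate and non-intermediate species simultaneously, and then to obtain item (2) at once from formula \eqref{eq:simple_l} combined with item (1). The single algebraic fact doing all the work is elementary: a product of two polynomials, each with no constant term, has no monomial of degree $\le 1$; equivalently, if each factor has all its monomials of degree $\ge 1$, the product has all its monomials of degree $\ge 2$.

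For item (1), I would induct on $\ell \ge 1$. In the base case $\ell = 1$, the derivative of a non-intermediate species $X$ is given by \eqref{eq:dotx}, whose monomials are the quadratic terms $xz$ ($Z \in \Zp_X$) and the linear terms $w$ ($W \in \Wp_X$); the derivative of an intermediate species $W$ is $\dot w = a_w z_{w,1} z_{w,2} - K_w w$, with monomials of degrees $2$ and $1$. Hence no constant monomial occurs, in either case. For the inductive step at order $\ell \ge 2$, the induction hypothesis (together with the trivial case $j = 0$, where $p^{(0)} = p$ is a single variable) gives that $p^{(j)}$ has no constant monomial for every species $P$ and every $0 \le j < \ell$, so every product $p_1^{(j_1)} p_2^{(j_2)}$ with $j_1, j_2 < \ell$ has all its monomials of degree $\ge 2$. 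Now, if $X$ is non-intermediate, \eqref{eq:simple_l} exhibits $x^{(\ell)}$ as a linear combination of such products together with the linear terms $\delta_w w$, $W \in \Wp_X$, so no constant monomial appears; if $W$ is intermediate, \eqref{eq:deriv_ul} written for $w^{(\ell)}$ (i.e.\ with $\ell+1$ in place of $\ell$) exhibits it as a linear combination of products $z_{w,1}^{(h)} z_{w,2}^{(i)}$ with $h + i \le \ell - 1$ together with the term $(-K_w)^{\ell} w$, so again no constant monomial appears. This closes the induction.

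For item (2), given a non-intermediate species $X$ and $\ell \ge 1$, I would apply \eqref{eq:simple_l} once more (it reduces to \eqref{eq:dotx} when $\ell = 1$): each contribution $x^{(h)} z^{(i)}$ or $z_{w,1}^{(h)} z_{w,2}^{(i)}$ appearing there is a product of two polynomials with no constant monomial, by item (1) (or trivially when $h = 0$ or $i = 0$), hence has all its monomials of degree $\ge 2$; the only remaining terms are $\delta_w w$ with $W \in \Wp_X$. Therefore the only degree-$1$ monomials of $x^{(\ell)}$ lie in $\{w : W \in \Wp_X\}$, which is exactly the set of degree-$1$ monomials of $\dot x$ by \eqref{eq:dotx}.

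I do not expect a real obstacle here — this is the ``straightforward'' deduction announced in the text. The only points that warrant a line of care are the index bookkeeping when passing from \eqref{eq:deriv_ul} for $w^{(\ell-1)}$ to the analogous expression for $w^{(\ell)}$, and the logical structure of the induction, in which item (1) at orders $< \ell$ is precisely what supplies the degree bound needed both for item (1) at order $\ell$ and for item (2). Note also that one does not assert that every $w$ with $W \in \Wp_X$ truly occurs in $x^{(\ell)}$ — the coefficient $\delta_w$ may vanish for particular values of $\ell$ — only that no linear monomial outside this set can occur.
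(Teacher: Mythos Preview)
Your proposal is correct and is precisely the ``straightforward deduction'' the paper alludes to: the paper gives no explicit proof beyond the sentence preceding the lemma, and your induction on $\ell$ using \eqref{eq:dotx}, \eqref{eq:deriv_ul}, and \eqref{eq:simple_l} is exactly the intended argument. The only cosmetic remark is that \eqref{eq:simple_l} is stated in the paper for $\ell\ge 2$, so the base case $\ell=1$ of item (2) should be read off directly from \eqref{eq:dotx} rather than from a degenerate instance of \eqref{eq:simple_l}---which you already note parenthetically.
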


\subsection{Proofs of Section \ref{subsec:1comp}: Identifying the constants in one connected component from one variable}

Here we give the proofs of our identifiability result for a connected component of the type:
\begin{equation}\label{eqApp:conncomp}
Y+S_0\overset{a_1}{\underset{b_1}{\rightleftarrows}} U_1 \overset{c_1}{\rightarrow}
Y+S_1\overset{a_2}{\underset{b_2}{\rightleftarrows}} U_2 \overset{c_2}{\rightarrow} \dots
Y+S_{L-1}\overset{a_L}{\underset{b_L}{\rightleftarrows}} U_L
\overset{c_{L}}{\rightarrow}Y+S_L
\end{equation}
We maintain the hypotheses and notations introduced in Section \ref{sec:assumptions} and previously in this Appendix.

\begin{lemma}\label{lem:ajKj}
Given a connected component as in \eqref{eqApp:conncomp},
the constants $a_L, b_L$ and $c_L$ can be identified from $\dot{s}_L$ and $\ddot{s}_L$, and, if $L>1$, the constants $a_j$ and $K_j:=b_j+c_j$, for $1\le j \le L-1$, can be identified from $\dot{s}_L, \ddot{s}_L$ and $s_L^{(3)}$.
\end{lemma}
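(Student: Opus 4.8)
The plan is to compute explicitly the first three total derivatives of $s_L$, isolate certain distinguished monomials in each, and read off the desired constants one at a time, using at each stage the constants already identified in the previous stages. Since $S_L$ is the last product of the connected component \eqref{eqApp:conncomp}, from \eqref{eq:dotx} we have $\dot s_L = c_L u_L - \sum_{Z\in\Zp_{S_L}}\mu_z s_L z + \sum_{W\in\Wp_{S_L}}\eta_w w$, where $c_L$ is precisely the coefficient of the degree-one monomial $u_L$ (note $U_L\in\Wp_{S_L}$). By Lemma~\ref{lem:cte}(2), $u_L$ genuinely appears as a degree-one monomial in $\dot s_L$ and its coefficient is exactly $c_L$ (up to sign), so $c_L$ is identified from $\dot s_L$.

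Next I would differentiate once more. In $\ddot s_L$ the term $c_L\dot u_L$ contributes, via $\dot u_L = a_L\, y\, s_{L-1} - K_L u_L$, the monomials $y s_{L-1}$ with coefficient $c_L a_L$ and $u_L$ with coefficient $c_L K_L$. The key point to check is that neither $y s_{L-1}$ nor $u_L$ can receive any other contribution in $\ddot s_L$: this follows from Assumption~\ref{hyp:network} together with Remark~\ref{extra}, which guarantee that $S_{L-1}$ does not react with $Y$ in any \emph{other} block, that the intermediates $U_j$ are all distinct, and that the complex $y s_{L-1}$ appears only in the block $Y+S_{L-1}\rightleftarrows U_L$; I would also use Lemma~\ref{lem:cte} to rule out degree-one interference. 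Since $c_L$ is already known and $c_L\neq 0$, dividing the coefficient of $y s_{L-1}$ by $c_L$ identifies $a_L$, and dividing the coefficient of $u_L$ by $c_L$ identifies $K_L = b_L + c_L$, hence $b_L = K_L - c_L$. This settles the case $L=1$.

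For $L>1$, I would take the third derivative. Differentiating the term $c_L a_L\, y\, s_{L-1}$ inside $\ddot s_L$ produces, through $\dot s_{L-1}$, a contribution $c_L a_L\, y\,\dot s_{L-1}$; and $\dot s_{L-1}$ contains, for each $j$ with $1\le j\le L-1$, the monomials coming from $S_{L-1}$ acting as a substrate with enzyme $Y$ and as a product of $U_j$ — concretely $\dot s_{j-1}$-type terms are not what we want, rather we track the monomials $y s_{j-1} s_{L-1}$ (coefficient $c_L a_L a_j$) and $u_j s_{L-1}$ (coefficient $c_L a_L K_j$) as recorded in Table~\ref{table:1component}. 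The main obstacle, and the part that needs the most care, is precisely the \emph{verification that these monomials are ``clean''} — that is, that $y s_{j-1} s_{L-1}$ and $u_j s_{L-1}$ receive no competing contributions from the other terms of $s_L^{(3)}$ (from the $\mu_z s_L z$ and $\eta_w w$ summands, or from the other pieces of $\dot y$ and $\dot s_{L-1}$). This is where Remark~\ref{extra} is essential: since $S_{j-1}$, $S_{L-1}$ lie in the same connected component, the complex $S_{j-1}+S_{L-1}$ is absent from the network, so a monomial like $y s_{j-1} s_{L-1}$ can only arise from the $y$-factor of $c_L a_L\, y\, s_{L-1}$ being multiplied by the $-a_j s_{j-1} y$ term of $\dot s_{j-1}\subseteq\dot y$'s expansion — tracing this carefully through \eqref{eq:simple_l} and the block structure gives exactly the claimed coefficients. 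Once that bookkeeping is done, knowing $c_L$ and $a_L$ (both nonzero) lets us extract $a_j$ from the coefficient of $y s_{j-1} s_{L-1}$ and $K_j$ from the coefficient of $u_j s_{L-1}$, for every $1\le j\le L-1$, completing the proof.
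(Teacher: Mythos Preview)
Your overall strategy matches the paper's proof exactly: read $c_L$ off $\dot s_L$, then $a_L$ and $K_L$ off $\ddot s_L$, then the remaining $a_j$ and $K_j$ off specific monomials of $s_L^{(3)}$. The first two steps are fine.

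In the third step, however, you misidentify the \emph{source} of the monomials $y\, s_{j-1}\, s_{L-1}$ and $u_j\, s_{L-1}$. You write that differentiating $c_L a_L\, y\, s_{L-1}$ produces ``through $\dot s_{L-1}$, a contribution $c_L a_L\, y\,\dot s_{L-1}$'' and then try to find $a_j,K_j$ inside $\dot s_{L-1}$. But $\dot s_{L-1}$ only involves the constants of the blocks adjacent to $S_{L-1}$ (namely $a_{L-1},b_{L-1},c_{L-1},a_L,b_L$ and whatever other components $S_{L-1}$ sits in); it does \emph{not} contain $a_j\, y\, s_{j-1}$ or $K_j\, u_j$ for arbitrary $1\le j\le L-1$. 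The correct term to look at is the \emph{other} piece of the product-rule derivative, $c_L a_L\, \dot y\, s_{L-1}$: since $Y$ is the enzyme in \emph{every} block of the connected component, $\dot y=\sum_{j=1}^L(-a_j\, y\, s_{j-1}+K_j\, u_j)+\cdots$, and this is precisely what carries all the $a_j$ and $K_j$ at once. Multiplying by $s_{L-1}$ then produces the monomials $y\, s_{j-1}\, s_{L-1}$ and $u_j\, s_{L-1}$ with the coefficients in Table~\ref{table:1component}. Your final sentence (``the $-a_j s_{j-1} y$ term of $\dot s_{j-1}\subseteq\dot y$'s expansion'') seems to be groping toward this, but the notation is incoherent and the earlier emphasis on $y\,\dot s_{L-1}$ is simply wrong.

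Once you fix the source to $\dot y\, s_{L-1}$, the ``cleanliness'' verification you describe is exactly what the paper does: one checks, using Assumption~\ref{hyp:partition} (in particular Remark~\ref{extra}, so $S_{j-1}$ and $S_{L-1}$ do not react together) and Lemma~\ref{lem:cte}, that no other term of $s_L^{(3)}$ contributes to $y\, s_{j-1}\, s_{L-1}$ or $u_j\, s_{L-1}$. In particular one must also rule out a contribution from $y\,\dot s_{L-1}$ itself to $y\, s_{j-1}\, s_{L-1}$, which again follows from Remark~\ref{extra}.
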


\begin{proof}
Following (\ref{eq:dotx}), we have
\begin{equation}\label{eq:dot_xn+1}
\dot{s}_{L}=-\sum_{Z\in \Zp_L} \mu_{z}s_{L}z+ \sum_{W\in \Wp_L}\eta_{w}w
\end{equation}
where, using the notation in (\ref{eq:reactants}), $\Zp_L:= \Zp_{S_L}$ and $\Wp_L:= \Wp_{S_L}$.
By separating the term corresponding to $U_L\in \Wp_L$, we obtain
\[\dot{s}_{L}=-\sum_{Z\in \Zp_L} \mu_{z}s_{L}z+ c_L u_L+\sum_{W\in \Wp_L^*}\eta_{w}w
 \]
where $\Wp_L^*:= \Wp_L\setminus \{U_L\}$. Then, we can  identify  $c_L$ from $\dot{s}_L$ as the coefficient of the monomial $u_L$.

Consider now
\begin{equation*}
 \ddot{s}_{L}=-\sum_{Z\in \Zp_L} \mu_{z} [\dot{s}_{L}z+s_{L}\dot{z}]+
 c_L(a_Lys_{L-1}-K_Lu_L)+\sum_{W\in \Wp_L^*}\eta_{w} [a_w z_{w,1} z_{w,2}-K_w w].
\end{equation*}
From this expression, since $c_L \ne 0$, we can identify $a_L$ and $K_L$ from the coefficients of the monomials $ys_{L-1}$ and $u_L$ (which only appear in $\ddot{s}_L$ from the derivative $\dot{u}_L$) and, as we know $c_L$, we can also identify $b_L$. If $L=1$ we have identified all the constants.

If $L>1$, consider the third derivative
\begin{equation}\label{eq:d3_xn+1}
\displaystyle
\begin{array}{rcl}
s^{(3)}_{L}&=&\displaystyle\sum_{Z\in \Zp_L} \sum_{h+i\le 2} \beta_{z,h,i} \, s_{L}^{(h)} z^{(i)} +
 c_L a_L[\dot{y}s_{L-1}+y\dot{s}_{L-1}]-c_L K_L(a_Lys_{L-1}-K_Lu_L)+ \\
 & & {} + \displaystyle\sum_{W\in \Wp_L^* \atop S_L \notin \{ Z_{w,1}, Z_{w,2}\}}\gamma_{w,1} [\dot{z}_{w,1} z_{w,2}+ z_{w,1} \dot{z}_{w,2}] + \gamma_{w,0} z_{w,1} z_{w,2} + \delta_w w.
\end{array}
\end{equation}

The constants $a_j$ and $K_j$, for $1\le j \le L-1$, appear in
$\dot y =  \sum_{1\le j \le L} (- a_j y s_{j-1} + K_j u_j) + \cdots$
as the coefficients (up to sign) of the monomials $y s_{j-1}$ and $u_j$ respectively.
Then, they appear in the expression (\ref{eq:d3_xn+1}) from the product $\dot{y} s_{L-1}$ in the coefficients of the monomials  $y s_{j-1} s_{L-1}$ and $u_{j}s_{L-1}$, for $1\leq j\leq L-1$.
We will now look for these monomials in the whole expression (\ref{eq:d3_xn+1}) and show that they come only from the product $\dot{y} s_{L-1}$.

As $Y \notin \Zp_L$ and, for every $Z\in \Zp_L$, by \hypo~\ref{hyp:partition},
we have $Z\neq S_l$ for all $0\leq l\leq L-1$, the monomials $y s_{j-1} s_{L-1}$ and $u_{j}s_{L-1}$, for $1\leq j\leq L-1$, do not appear in $s_L^{(h)} z$, for $0\le h \le 2$. Also, it is clear that they do not appear in $s_L z^{(i)}$, for $0\le i \le 2$. On the other hand, every monomial of degree $3$ that appears in a product of two derivatives of order $1$ is a multiple of an intermediate; so, $y s_{j-1} s_{L-1}$ does not appear in $\dot{s}_L \dot{z}$ and, by Lemma \ref{lem:cte}, $u_j s_{L-1}$ does not appear either since no derivative contains a constant term or the degree one monomial $s_{L-1}$.

Now, consider $W\in \Wp_L^*$ such that $S_L\notin \{Z_{w,1}, Z_{w,2}\}$, and the corresponding block of reactions $Z_{w,1}+Z_{w,2}\rightleftarrows W \rightarrow Z_{w,3} + S_L$. Since $U_j\notin \Wp_L^*$ for every $1\le j \le L$, then $Z_{w,1}+Z_{w,2} \ne Y +S_{j-1}$. Also, by \hypo~\ref{hyp:partition},
$Z_{w,1}+Z_{w,2} \ne S_{l} +S_{L-1}$ for every $0\le l\le L$.
Every monomial in $\dot{z}_{w,1} z_{w,2}$ is either of the form $w_0 z_{w,2}$ for an intermediate $W_0$ that reacts to $Z_{w,1}$ or of the form $z_0 z_{w,1} z_{w,2}$ for a non-intermediate $Z_0$ that reacts with $Z_{w,1}$. If $z_0 z_{w,1} z_{w,2} = y s_{j-1} s_{L-1}$, it follows that $Z_{w,1} +Z_{w,2}\in \{ Y+S_{j-1}, Y+S_{L-1}, S_{j-1}+S_{L-1}\}$, leading to a contradiction.
If $w_0 z_{w,2} = u_j s_{L-1}$, then $Z_{w,2} = S_{L-1}$ and $U_j$ reacts to $Z_{w,1}$, meaning that $Z_{w,1}\in \{Y, S_{j-1}, S_j\}$, which is not possible.

Finally, the monomial $ys_{j-1} s_{L-1}$ does not appear in $y \dot{s}_{L-1}$ since, by \hypo~\ref{hyp:partition},
$S_{j-1}$ does not react with $S_{L-1}$ for every $j$.

We conclude that, for $1\le j \le L-1$, the coefficients in $s_L^{(3)}$ of the monomials $ys_{j-1} s_{L-1}$ and $u_j s_{L-1}$ are $ -c_L a_L a_j$ and $c_L a_L K_j$, respectively.
As we have already identified $c_L$ and $a_L$, these coefficients enable us to identify $a_j$ and $K_j=b_j+c_j$, for $1\le j \le L-1$.
\end{proof}

\bigskip

We show now some auxiliary results concerning the behavior of monomials appearing in the successive derivatives of some variables and their relations with the reaction network. They will allow us to prove Lemma \ref{lem:xn_several} below,
the key recursive tool to show the identifiability results of Section \ref{subsec:1comp}.

\begin{lemma}\label{lem:prod_no_interm}
If $\underset{i=1}{\overset{m}{\prod}}z_i$, with $m\ge 2$, is a monomial of $x^{(\ell)}$  where $Z_i$ is a non-intermediate species for every $i$, then there exist $1\leq i_1<i_2\leq m$ such that $Z_{i_1}$ reacts with $Z_{i_2}$.
\end{lemma}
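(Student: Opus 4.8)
The plan is to induct on $\ell$, using the structural formulas \eqref{eq:simple_l} and \eqref{eq:deriv_ul} as the main engine. For $\ell=1$ there is nothing to prove: by \eqref{eq:dotx}, every degree-$2$ monomial in $\dot x$ is of the form $xz$ with $Z\in\Zp_X$, so $X$ itself reacts with $Z$, and no monomial of degree $\ge 3$ appears. So assume $\ell\ge 2$ and that the claim holds for all species and all orders $<\ell$; we must analyze where a monomial $\prod_{i=1}^m z_i$ of degree $m\ge 2$, all factors non-intermediate, can come from in \eqref{eq:simple_l}.

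The expression \eqref{eq:simple_l} has three kinds of terms. First I would handle $\sum_{Z\in\Zp_X}\sum_{h+i\le\ell-1}\beta_{z,h,i}\,x^{(h)}z^{(i)}$: a monomial of $x^{(h)}z^{(i)}$ all of whose factors are non-intermediate splits as a product of a non-intermediate monomial of $x^{(h)}$ and one of $z^{(i)}$. If $h=i=0$ the monomial is $xz$, and $X$ reacts with $Z$ (so $X$ and $Z$ are among the $Z_i$, done, after noting $m=2$). If $h\ge 1$, by Lemma~\ref{lem:cte}(2) the degree-one monomials of $x^{(h)}$ are intermediates, so the non-intermediate monomial of $x^{(h)}$ contributing here has degree $\ge 2$ (or degree $0$, handled by Lemma~\ref{lem:cte}(1) which rules it out) — actually more carefully, if one factor, say $x^{(h)}$'s contribution, has degree $\ge 2$ in non-intermediates, apply the inductive hypothesis to $x^{(h)}$; if it has degree $\le 1$ and $i\ge 1$ then push the argument onto $z^{(i)}$ analogously; the remaining case has both factors of degree $\le 1$ in non-intermediates yet their product has all non-intermediate factors, forcing $h=i=0$ again. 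Either way the inductive hypothesis (applied to whichever derivative factor carries $\ge 2$ non-intermediate factors) produces the required pair $Z_{i_1},Z_{i_2}$.

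Next I would treat the term $\sum_{W\in\Wp_X,\ X\notin\{Z_{w,1},Z_{w,2}\}}\sum_{h+i\le\ell-2}\gamma_{w,h,i}\,z_{w,1}^{(h)}z_{w,2}^{(i)}$. A non-intermediate monomial here is a product of a non-intermediate monomial of $z_{w,1}^{(h)}$ and one of $z_{w,2}^{(i)}$. If $h=i=0$ the monomial is $z_{w,1}z_{w,2}$, and by definition $Z_{w,1}$ reacts with $Z_{w,2}$ (the block $Z_{w,1}+Z_{w,2}\rightleftarrows W$), giving the pair. Otherwise one of the factors, say $z_{w,1}^{(h)}$ with $h\ge 1$, again by Lemma~\ref{lem:cte} has its degree-one monomials intermediate, so the piece it contributes is either degree $0$ (excluded) or degree $\ge 2$ in non-intermediates, whence the inductive hypothesis applies to $z_{w,1}^{(h)}$; symmetrically for $z_{w,2}^{(i)}$. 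Finally the last term $\sum_{w\in\Wp_X}\delta_w\,w$ is linear and $W$ is an intermediate, so it contributes no monomial of the required form. This exhausts \eqref{eq:simple_l}.

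The main obstacle is purely bookkeeping: in the first two cases one must argue cleanly that at least one of the two derivative factors must carry two or more non-intermediate factors (so that the inductive hypothesis is actually applicable), ruling out the pathological possibility that each factor contributes exactly one non-intermediate while still producing $m\ge 2$ — which only happens when $h=i=0$, i.e. exactly the base-step monomials $xz$ or $z_{w,1}z_{w,2}$ that are handled directly. Making this dichotomy (``$h=i=0$, handled by hand'' versus ``$h+i\ge 1$, handled by induction'') explicit and verifying it covers all of \eqref{eq:simple_l} is the only delicate point; everything else is a direct reading of the formulas together with Lemma~\ref{lem:cte}.
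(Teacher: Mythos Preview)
Your proof is correct, but the paper's argument is considerably shorter and uses a different decomposition. Instead of the structural formula \eqref{eq:simple_l}, the paper writes
\[
x^{(\ell)} = \sum_{v} \frac{\partial x^{(\ell-1)}}{\partial v}\,\dot v,
\]
the sum over all species variables $v$. If $\prod_{i=1}^m z_i$ appears in $x^{(\ell)}$, it factors as $M_1\cdot M_2$ with $M_2$ a monomial of some $\dot v$. But for \emph{any} species $V$ (intermediate or not), every monomial of $\dot v$ is either a single intermediate or a product $z_{i_1}z_{i_2}$ of two non-intermediates that react together. Since $\prod z_i$ has no intermediate factors, $M_2$ must be of the second type, and we are done. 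No inductive hypothesis is ever invoked, and Lemma~\ref{lem:cte} is not needed.

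What each approach buys: the paper's chain-rule decomposition isolates a single $\dot v$ factor, which immediately carries the reacting pair; this sidesteps the entire $h=i=0$ versus $h+i\ge 1$ dichotomy you had to manage. Your route through \eqref{eq:simple_l} is the systematic one used elsewhere in the appendix, and your bookkeeping (that for $h\ge 1$ the non-intermediate factor of $x^{(h)}$ has degree $\ge 2$ by Lemma~\ref{lem:cte}) is sound, but it is more work than the problem requires.
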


\begin{proof}
 If $\ell=1$, this is true. Assume $\ell \ge 2$. Recalling that $x^{(\ell)} = \sum\limits_{v} \frac{\partial{x^{(\ell-1)}}}{\partial v} \dot v$ (where the sum runs over all variables $v$ representing non-intermediates or intermediate species),  it follows that $\underset{i=1}{\overset{m}{\prod}}z_i$ is a monomial in $\frac{\partial{x^{(\ell-1)}}}{\partial v} \dot v$ for some variable $v$. Since every monomial appearing in $\dot v$ is either a single intermediate or a product of two non-intermediate species that react together, the result follows.
\end{proof}

\begin{corollary}\label{cor:powers}
 If $X$ is a non-intermediate species, no derivative $x^{(\ell)}$ for $\ell \ge 1$ contains a monomial which is a pure power of degree $m\ge 2$ of a variable corresponding to a non-intermediate species.
\end{corollary}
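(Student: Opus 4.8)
The plan is to derive this as an immediate consequence of Lemma~\ref{lem:prod_no_interm} combined with the structural constraint recorded in Remark~\ref{extra}. Suppose, towards a contradiction, that for some non-intermediate species $X$, some $\ell\ge 1$, and some non-intermediate species $Z$, the derivative $x^{(\ell)}$ contains the monomial $z^m$ with $m\ge 2$. I would write $z^m=\prod_{i=1}^{m}z_i$ by setting $Z_i:=Z$ for every $1\le i\le m$; each $Z_i$ is then a non-intermediate species, so Lemma~\ref{lem:prod_no_interm} applies and furnishes indices $1\le i_1<i_2\le m$ with $Z_{i_1}$ reacting with $Z_{i_2}$. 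Since $Z_{i_1}=Z_{i_2}=Z$, this says precisely that $Z$ reacts with itself.

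It then remains to observe that no non-intermediate species can react with itself. This is exactly what Remark~\ref{extra} rules out: applying it with $X_1=X_2=Z$ would produce indices $\alpha\neq\beta$ with $Z\in\Sp^{(\alpha)}$ and $Z\in\Sp^{(\beta)}$, contradicting the fact that the subsets $\Sp^{(0)},\dots,\Sp^{(M)}$ in \hypo~\ref{hyp:partition} are pairwise disjoint. This contradiction shows that no such monomial $z^m$ can occur, which is the assertion of the corollary. I do not expect any real obstacle here; the only point worth a remark is that Remark~\ref{extra} is being invoked with equal arguments, which is legitimate since its hypothesis ``$X_1$ reacts with $X_2$'' does not presuppose $X_1\neq X_2$ (equivalently, one could argue directly: $Z$ reacting with itself would mean a block $2Z\rightleftarrows W\to Z+X_3$ in the network, forcing $Z$ to be simultaneously the enzyme and a substrate of the corresponding connected component, which \hypo~\ref{hyp:partition} prohibits).
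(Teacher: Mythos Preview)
Your argument is correct and is exactly the intended one: the paper states the corollary immediately after Lemma~\ref{lem:prod_no_interm} without an explicit proof, and the implicit reasoning is precisely to apply that lemma to $z^m=\prod_{i=1}^m z$ and then observe that a non-intermediate species cannot react with itself by the partition in \hypo~\ref{hyp:partition} (equivalently, Remark~\ref{extra}). Your parenthetical alternative justification via the block structure is also fine.
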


\begin{lemma} \label{lemma:structure}
Given an intermediate species $U$ and non-intermediate species  $X$ and $Y$ such that $Y\ne X$, if a monomial $y^r u$, $r\ge 0$, appears in $x^{(\ell)}$ for some $\ell \ge 1$, then  either $U$ reacts to $X$ or $\ell \ge 2$, the network contains a block of reactions
\begin{equation}\label{eq:block}
Y+Z_w \rightleftarrows W \rightarrow  \widetilde Z_w + X,
\end{equation}
where $Z_w\ne X$, and a monomial $y^{t} u$ with $t<r$ appears in $z_w^{(i)}$ for some $i\le \ell-2$.
If, in addition, $Y$ acts as an enzyme in all the reactions of the connected component determined by $U$, then $X \in \Sp_U$ and the block of reactions in \eqref{eq:block} is $Y+Z_w \rightleftarrows W \rightarrow  Y + X$, and it is contained in the connected component determined by $U$.

Moreover, if $U$ does not react to $X$ and $\ell$ is the smallest integer such that a monomial $y^r u$ appears in $x^{(\ell)}$, then $r\ge 1$, $\ell \ge 2$,  and the monomial $y^{r-1} u $ appears in $z_w^{(i)}$ for some $i\le \ell-2$.
\end{lemma}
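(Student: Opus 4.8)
\emph{The plan is to prove all the assertions together by strong induction on $\ell$, inspecting the possible origins of a monomial $y^r u$ in the structural expansion \eqref{eq:simple_l} (equivalently, \eqref{eq:deriv_xl}--\eqref{eq:deriv_ul}), and using Corollary \ref{cor:powers} and Lemma \ref{lem:cte} to control how powers of a non-intermediate can occur.} For $\ell=1$, formula \eqref{eq:dotx} writes $\dot x$ as a combination of degree-one monomials $w$, $W\in\Wp_X$, and degree-two monomials $xz$, $Z\in\Zp_X$; since $U$ is an intermediate and each $Z$ is not, $y^r u$ must be one of the $w$'s, so $r=0$ and $U\in\Wp_X$, i.e.\ $U$ reacts to $X$. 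Whenever we land in this ``first alternative'' (for any $\ell$), under the enzyme hypothesis the unique reaction $U\to X+X'$ must release the enzyme $Y$ as $X'$ (as $X\ne Y$), so $X$ is the product of the building block of $U$ and $X\in\Sp_U$.

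For $\ell\ge2$ I would locate $y^ru$ in one of the three sums of \eqref{eq:simple_l}. If it lies in $\sum_w\delta_w w$, again $r=0$ and $U\in\Wp_X$. If it lies in a term $z_{w,1}^{(h)}z_{w,2}^{(i)}$ from a block $Z_{w,1}+Z_{w,2}\rightleftarrows W\to Z_{w,3}+X$ with $X\notin\{Z_{w,1},Z_{w,2}\}$, write $y^ru=m_1m_2$ with $m_j$ a monomial of $z_{w,j}^{(\cdot)}$; since $U$ is an intermediate, $u$ divides exactly one factor, say $m_1$, so $m_2$ is a power of $y$, and Corollary \ref{cor:powers} together with Lemma \ref{lem:cte} force $m_2=y$, $i=0$, $Z_{w,2}=Y$. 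Thus the block is as in \eqref{eq:block} with $Z_w:=Z_{w,1}\ne X$, and $m_1=y^{r-1}u$ appears in $z_w^{(h)}$, $h\le\ell-2$; taking $t=r-1$ gives the second alternative. If instead $y^ru$ lies in $x^{(h)}z^{(i)}$, $Z\in\Zp_X$, $h+i\le\ell-1$, the same argument, now using $X\ne Y$ so that no positive power of $y$ divides a monomial of $x^{(h)}$, forces $u$ to divide the monomial of $x^{(h)}$, $Z=Y$ and $i=0$; hence $y^{r-1}u$ appears in $x^{(h)}$ with $1\le h\le\ell-1$, and the inductive hypothesis applied to $(x^{(h)},y^{r-1}u)$ yields either that $U$ reacts to $X$ or a block \eqref{eq:block} with $y^tu$, $t<r-1<r$, in $z_w^{(i)}$, $i\le h-2\le\ell-2$. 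This proves the dichotomy.

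The enzyme addendum I would fold into the same induction, additionally recording that in the second alternative the block \eqref{eq:block} lies in the component of $U$ and has $\widetilde Z_w=Y$: from $y^tu$ in $z_w^{(i)}$, $i<\ell$, the inductive hypothesis applied to $(z_w^{(i)},Z_w)$ (legitimate since $Z_w$ is a non-intermediate and $Z_w\ne Y$ by Remark \ref{extra}) shows $Z_w$ is a product in the component of $U$, so the complex $Y+Z_w$ occurs there; by \hypo~\ref{hyp:network}(4) every reaction on that complex lies in that component, so the block $Y+Z_w\rightleftarrows W\to\widetilde Z_w+X$ does too, whence $\widetilde Z_w=Y$ and $X$ is the next product, in particular $X\in\Sp_U$. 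For the ``moreover'', take $\ell$ minimal with $y^ru$ in $x^{(\ell)}$ and $U$ not reacting to $X$; then $r\ge1$ (otherwise $u$ is a degree-one monomial of $x^{(\ell)}$, forcing $U\in\Wp_X$ by Lemma \ref{lem:cte}(2)) and $\ell\ge2$. Re-running the analysis, $y^ru$ must come from a block term, once one rules out the possibility that it arises through an $x^{(h)}z^{(i)}$ term with $Z=Y$ — for then $y^{r-1}u$ would occur in some $x^{(h)}$ with $h<\ell$, and pushing it forward through the explicit formula \eqref{eq:deriv_ul} for the intermediate given by the inductive hypothesis would make $y^ru$ reappear in a derivative of $x$ of order $<\ell$, against minimality. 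Hence $y^ru$ arises from a block term with $X\notin\{Z_{w,1},Z_{w,2}\}$, which by the computation above places exactly $y^{r-1}u$ in $z_w^{(i)}$, $i\le\ell-2$.

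I anticipate the main obstacle to be precisely this last step: making the transport of the distinguished monomials through \eqref{eq:deriv_xl}--\eqref{eq:deriv_ul} cancellation-free, so that both the survival of $y^{r-1}u$ in $z_w^{(i)}$ and the minimality of $\ell$ are genuinely enforced. The rest is systematic bookkeeping with Corollary \ref{cor:powers} and Lemma \ref{lem:cte}.
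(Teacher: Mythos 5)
Your proof of the main dichotomy and of the enzyme addendum is correct and is essentially the paper's own argument: the paper inducts on the exponent $r$ rather than on the order $\ell$, but the case analysis of \eqref{eq:simple_l} via Lemma \ref{lem:cte} and Corollary \ref{cor:powers}, the conclusion that the $y$-factor must come from a zero-order factor equal to $y$, and the use of the inductive hypothesis on $Z_w$ together with \hypo~\ref{hyp:network}(4) to force $\widetilde Z_w=Y$ and $X\in\Sp_U$ are the same (your explicit appeal to Remark \ref{extra} for $Z_w\ne Y$ is a point the paper leaves implicit). One phrase needs repair: ``no positive power of $y$ divides a monomial of $x^{(h)}$'' is false as stated — monomials of $x^{(h)}$ can certainly be divisible by $y$; what your argument actually uses is that no monomial of $x^{(h)}$ \emph{is} a pure power of $y$ (Lemma \ref{lem:cte} for degrees $0,1$, Corollary \ref{cor:powers} for degree $\ge 2$, and $X\ne Y$ when $h=0$), so that the factor containing $u$ must be the one coming from $x^{(h)}$.

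The genuine gap is in the ``moreover'' part, and it is exactly the obstacle you flag. You read the minimality as attached to the fixed exponent $r$, and then, to exclude the case where $y^ru$ arises from a term $x^{(h)}y$, you propose to ``push forward'' $y^{r-1}u$ from $x^{(h)}$ so as to recreate $y^ru$ in a derivative of $x$ of order $<\ell$. That step requires the converse, cancellation-prone direction of the transport (a monomial appearing in one summand of \eqref{eq:deriv_xl}--\eqref{eq:deriv_ul} need not survive in the full derivative), and even granting no cancellation it fails when $h=\ell-1$, since multiplying by $y$ only reaches order $\ell$, not $<\ell$. The intended reading — and the one actually used when the lemma is applied in Lemma \ref{lem:xn_several} — is that $\ell$ is the smallest order at which \emph{any} monomial of the form $y^tu$ appears in a derivative of $x$. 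Under that reading the case $x^{(h)}y$ is excluded at once: it would place the monomial $y^{r-1}u$ in $x^{(h)}$ with $h\le\ell-1<\ell$, contradicting minimality, so only the block term remains and no transport argument is needed anywhere. Note also that the other half of your stated worry, the ``survival'' of $y^{r-1}u$ in $z_w^{(i)}$, is not an issue: if $y^ru$ has nonzero coefficient in a product of two polynomials, then some factorization $m_1m_2=y^ru$ has both coefficients nonzero, so this direction is automatic.
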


\begin{proof} We prove the first part by induction on $r$. If $r=0$, then $u$ appears in $x^{(\ell)}$ for some $\ell \ge 1$;  by Lemma \ref{lem:cte} (2), this is equivalent to the fact that $U$ reacts to $X$. In particular, if $Y\ne X$ acts as an enzyme in the connected component determined by $U$, then $X\in \Sp_U$.

Now, if $r\ge 1$, since no monomial $y^r u$ with $r\ge 1$ appears in $\dot x$, it follows that $\ell \ge 2$. Then, by identity (\ref{eq:simple_l}), the monomial $y^ru$ can only appear in a product of derivatives of two species, and by Lemma \ref{lem:cte} and Corollary \ref{cor:powers}, one of these species must be $Y$ and the corresponding order of derivation must be zero.

If $y^r u$ appears in a product $x^{(h)} z^{(i)}$ for some $Z\in \Zp_X$ and $h+i \le \ell -1$, as $X\ne Y$, then $Z=Y$ and $y^{r-1} u$ appears in $x^{(h)}$; then, the result follows by the inductive hypothesis.

Finally, if $y^r u$ appears in a product $z_{w,1}^{(h)} z_{w,2}^{(i)}$ with $h+i \le \ell -2$ for some $W \in \Wp_X$ such that $X\notin\{ Z_{w,1}, Z_{w, 2}\}$, again by Lemma \ref{lem:cte} and Corollary \ref{cor:powers}, we may assume that $Y= Z_{w,1}$ and $y^{r-1} u$ appears in $z_{w,2}^{(i)}$. Since $X\ne Z_{w,2}$ and $W$ reacts to $X$, we must have
 $Y+Z_{w,2} \rightleftarrows W \rightarrow \widetilde Z_w + X$ for some species $\widetilde Z_w$, that is, a block of reactions as in \eqref{eq:block}. By the induction hypothesis applied to the non-intermediate $Z_{w,2} \ne Y$, if $Y$ acts as an enzyme in the connected component determined by $U$, it follows that $Z_{w,2}\in \Sp_U$. Then, $Y+Z_{w,2}$ is a complex in the connected component determined by $U$, where $Y$ acts as an enzyme. As $X\ne Y$, necessarily $\widetilde Z_{w} =Y$ and $X\in \Sp_U$.

To see that the last statement of the lemma holds, note that if $U$ does not react to $X$ and a monomial $y^r u$ appears in a derivative $x^{(\ell)}$, then $r\ge 1$ and $\ell \ge 2$ and, by assuming $\ell$  minimal, the only possibility in the above reasoning is the last one.
\end{proof}

Now, we are able to prove the key lemma for the proof of our main result on the identifiability of constants in a single connected component. We keep our previous notation and assumptions.

For technical reasons, we define the empty product of factors $\alpha_i$ as $\underset{i=0}{\overset{-1}{\prod}}\alpha_i=1$.

\begin{lemma}\label{lem:xn_several}
Given a connected component as in \eqref{eqApp:conncomp},
with $L\ge 1$, let $1\leq n\leq L$ and $0\leq k \leq n-1$ be fixed.
If $\ell$ is minimum such that $y^r u_{n-k}$ is a monomial of $s^{(\ell)}_n$ for some $r \ge 0$, then
$\ell=2k+1$, $r=k$ and the coefficient of $y^k u_{n-k}$ in $s^{(2k+1)}_n$ is
\[c_{n-k} \prod_{j=0}^{k-1} a_{n-j} \, c_{n-j}.\]
\end{lemma}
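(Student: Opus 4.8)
The statement is naturally attacked by induction on $k$, mirroring the recursive structure of Lemma~\ref{lemma:structure}. The base case $k=0$ says that $\ell$ minimal with $u_n$ appearing in $s_n^{(\ell)}$ is $\ell=1$, with coefficient $c_n$; this is exactly the content of the computation of $\dot s_n$ as in \eqref{eq:dot_xn+1} (separating the term $c_n u_n$ coming from $U_n\in\Wp_{S_n}$), together with Lemma~\ref{lem:cte}(2), which guarantees $u_n$ does not appear in any lower-order derivative. For the inductive step, suppose the claim holds for $k-1$ and consider the smallest $\ell$ such that $y^r u_{n-k}$ appears in $s_n^{(\ell)}$ for some $r\ge 0$. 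Since $U_{n-k}$ does not react to $S_n$ (the only intermediate reacting to $S_n$ is $U_n$, and $n-k<n$), the last part of Lemma~\ref{lemma:structure}, applied with $X=S_n$, $Y$ the enzyme of the connected component, $U=U_{n-k}$, forces $r\ge 1$, $\ell\ge 2$, and produces a block of reactions $Y+Z_w\rightleftarrows W\to Y+S_n$ inside the connected component with the monomial $y^{r-1}u_{n-k}$ appearing in $z_w^{(i)}$ for some $i\le \ell-2$. In our component the only such block ending in $Y+S_n$ is $Y+S_{n-1}\rightleftarrows U_n\to Y+S_n$, so $W=U_n$, $Z_w=S_{n-1}$, and $y^{r-1}u_{n-k}$ appears in $s_{n-1}^{(i)}$ for some $i\le\ell-2$.

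The next step is to descend one level: I want to rerun the same argument replacing $s_n$ by $s_{n-1}$. The monomial $u_{n-k}=u_{(n-1)-(k-1)}$ with $y$-power $r-1$ appears in $s_{n-1}^{(i)}$; by induction on $k$ applied to the variable $s_{n-1}$ (with parameters $n-1$ and $k-1$), the minimal order at which $y^{r'}u_{(n-1)-(k-1)}$ appears in $s_{n-1}^{(\bullet)}$ is $2(k-1)+1=2k-1$, achieved with $r'=k-1$, and the coefficient of $y^{k-1}u_{n-k}$ there is $c_{n-k}\prod_{j=0}^{k-2}a_{n-1-j}c_{n-1-j}$. Hence $i\ge 2k-1$ and therefore $\ell\ge i+2\ge 2k+1$; minimality of $i$ also forces $r-1\ge k-1$, i.e.\ $r\ge k$. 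To get the reverse inequality $\ell\le 2k+1$ and pin down $r=k$, I exhibit $y^k u_{n-k}$ explicitly in $s_n^{(2k+1)}$: differentiating $\dot s_n=-\sum_{Z\in\Zp_{S_n}}\mu_z s_n z+c_n u_n+\sum_{W\in\Wp^*}\eta_w w$ twice produces (via $\dot u_n=a_n y s_{n-1}-K_n u_n$ and then $\dot y\supseteq K_{n-k}u_{n-k}+\cdots$, $\dot s_{n-1}\supseteq c_{n-1}u_{n-1}+\cdots$, etc.) the term $c_n a_n\, y\,\dot s_{n-1}$ at order $3$, and iterating the differentiation of $s_{n-1}$ according to the same recursion carries the monomial $y^{k-1}u_{n-k}$ of $s_{n-1}^{(2k-1)}$ up to $y^k u_{n-k}$ in $s_n^{(2k+1)}$ with the extra factor $a_n c_n$. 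Thus $\ell=2k+1$, $r=k$, and the coefficient is $(a_n c_n)\cdot c_{n-k}\prod_{j=0}^{k-2}a_{n-1-j}c_{n-1-j}=c_{n-k}\prod_{j=0}^{k-1}a_{n-j}c_{n-j}$, as claimed.

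The one genuine subtlety — and the step I expect to be the main obstacle — is the claim that, in the derivative $s_n^{(2k+1)}$, the monomial $y^k u_{n-k}$ receives \emph{no} contributions other than the one I tracked through the chain $\dot u_n\to y\dot s_{n-1}\to y^2\dot s_{n-2}\to\cdots$; in other words, I must argue that the coefficient is exactly $c_{n-k}\prod_{j=0}^{k-1}a_{n-j}c_{n-j}$ and not that value plus some spurious cancellation-prone extra terms. This is where the hypotheses of Section~\ref{sec:assumptions} — in particular Assumption~\ref{hyp:partition} and Remark~\ref{extra} — are essential, together with Lemma~\ref{lem:cte} and Corollary~\ref{cor:powers}: in the expansion \eqref{eq:simple_l} of $s_n^{(\ell)}$, every degree-$(k+1)$ monomial living in a product $z_{w,1}^{(h)}z_{w,2}^{(i)}$ or $s_n^{(h)}z^{(i)}$ must contain a non-pure-power structure, and tracing which complexes could generate $y^k u_{n-k}$ forces the reaction block to be $Y+S_{n-1}\rightleftarrows U_n\to Y+S_n$ at each level, exactly as in the proof of Lemma~\ref{lemma:structure}. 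I would organize this by an auxiliary claim, proved by a downward induction on the level $n,n-1,\dots,n-k$, identifying at each stage the unique product of derivatives in \eqref{eq:simple_l} that can host the relevant monomial $y^{\bullet}u_{n-k}$, thereby both establishing minimality of the order of appearance and computing the coefficient with no ambiguity. Once this bookkeeping is in place the coefficient formula follows by multiplying the factors $a_{n-j}c_{n-j}$ picked up at the $k$ descent steps and the final factor $c_{n-k}$ coming from $\dot u_{n-k}$'s appearance inside $\dot y$ (or, when $k=0$, directly from $\dot s_n$).
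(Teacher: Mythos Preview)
Your proposal is correct and follows essentially the same route as the paper's proof: reduce via Lemma~\ref{lemma:structure} to the unique block $Y+S_{n-1}\rightleftarrows U_n\to Y+S_n$, apply the inductive hypothesis to $s_{n-1}$ with parameters $(n-1,k-1)$ to get $\ell\ge 2k+1$, and then read off the coefficient from the single surviving term $\gamma_{u_n,0,2k-1}\, y\, s_{n-1}^{(2k-1)}$ in \eqref{eq:simple_l}. The paper phrases the induction on $n$ rather than on $k$, but since the step goes from $(n,k)$ to $(n-1,k-1)$ the two framings are equivalent. One small slip: the factor $c_{n-k}$ does not come from ``$\dot u_{n-k}$'s appearance inside $\dot y$'' (there the coefficient is $K_{n-k}$, not $c_{n-k}$); it is the base-case coefficient of $u_{n-k}$ in $\dot s_{n-k}$, which is what your recursion actually uses.
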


\begin{proof} For $k= 0$,  first notice that, for all $1\leq n \leq L$, as $U_n$ reacts to $S_n$, then $u_n$ appears in $\dot{s}_n$ and so, $\ell=1$, $r=0$, and the coefficient of $u_{n}$ is $c_n$, as we wanted to prove.

We follow the proof by induction on $n$.

If $n=1$, the only possibility is $k=0$, which we have already proven.

Assume now $n\ge 2$, and let $k\geq 1$.  If a monomial $y^r u_{n-k}$ appears in $s_n^{(\ell)}$ and considering $\ell$ minimal,  as $U_{n-k}$ does not react to $S_n$, by Lemma \ref{lemma:structure} 
applied to $U:=U_{n-k}$ and $X:= S_n$,
the network contains a block of reactions $Y+Z_w \rightleftarrows W \rightarrow  Y + S_n,$
and the monomial $y^{r-1} u_{n-k}$ appears in $z_w^{(i)}$ for some $i\le \ell-2$. This block of reactions is necessarily $Y+S_{n-1} \rightleftarrows U_n \rightarrow   Y+ S_n$ and so, $y^{r-1} u_{n-k}$ appears in $s_{n-1}^{(i)}$ for some $i\le \ell-2$. Moreover, by formula (\ref{eq:simple_l}) applied to $x=s_{n}$, the only terms contributing to the monomial $y^r u_{n-k}$ come from products $y s_{n-1}^{(i)}$ with $i\le \ell-2$. Since $y^{r-1} u_{n-k} = y^{r-1} u_{(n-1)-(k-1)}$, by the induction hypothesis,  $i\ge 2(k-1)+1= 2k-1$; then, $\ell -2\ge 2k-1$ or, equivalently, $\ell \ge 2k +1$.

Consider now formula (\ref{eq:simple_l}) for $s_n^{(2k+1)}$. The only product of derivatives where a monomial  $y^r u_{n-k}$ may appear is $y s_{n-1}^{(2k-1)}$, since $i\le 2k-1$ for all derivatives $s_{n-1}^{(i)}$ involved. Then, the coefficient of $y^r u_{n-k}$ in $s_n^{(2k+1)}$ equals $\gamma_{u_n, 0, 2k-1}$ multiplied by the coefficient of $y^{r-1} u_{n-k}$ in $s_{n-1}^{(2k-1)}$.
By the induction hypothesis, a monomial $y^{r-1}u_{n-k}$ appears with nonzero coefficient in $s_{n-1}^{(2k-1)}$ if and only if $r-1 = k-1$, that is $r=k$, and the corresponding coefficient is $c_{n-1-(k-1)}\prod_{j=0}^{k-2} a_{n-1-j} c_{n-1-j}$.
To determine $\gamma_{u_n, 0, 2k-1}$ note that, by formula (\ref{eq:deriv_ul}) applied to $u_n$, the product $y s_{n-1}^{(2k-1)}$ appears in $u_n^{(2k)}$ multiplied by $a_n$ and, by formula  (\ref{eq:deriv_xl}), $u_n^{(2k)}$ appears in $s_n^{(2k+1)}$ multiplied by $c_n$; then,
 $\gamma_{u_n, 0, 2k-1} = c_n a_n$.

Summarizing, the monomial $y^k u_{n-k}$ appears with  nonzero coefficient in $s_n^{(2k+1)}$; hence, $\ell = 2k+1$. Moreover, it is the only monomial of the form $y^r u_{n-k}$ effectively appearing  in $s_n^{(2k+1)}$, and its corresponding coefficient is
$c_n a_n c_{n-k}\prod_{j=0}^{k-2} a_{n-1-j}\, c_{n-1-j} = c_{n-k} \prod_{j=0}^{k-1} a_{n-j} \, c_{n-j}.$
\end{proof}

\begin{remark}
An interesting fact is that the previous lemmas also hold for networks where not all the reactions are enzymatic. By this we mean that the blocks of reactions are of the form:
\[X_1+X_2\overset{a}{\underset{b}{\rightleftarrows}}U \overset{c}{\to} X_3+X_4,\]
with $X_1\neq X_2$, $X_3\neq X_4$ but not necessarily $\{X_1,X_2\}\cap\{X_3,X_4\}\neq \emptyset$.
\end{remark}

Combining Lemmas \ref{lem:ajKj} and \ref{lem:xn_several}, we may now prove the main result of Section  \ref{subsec:1comp} (Proposition \ref{prop:1conncomp} in the main text):

\begin{proposition} \label{propApp:1conncomp}
All the constants in a connected component as \eqref{eqApp:conncomp} of a network satisfying the assumptions in Section \ref{sec:assumptions} can be identified from $s^{(\ell)}_{L}$ with $1\leq \ell \leq \mathrm{max}\{2,2L-1\}$.
\end{proposition}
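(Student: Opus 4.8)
The plan is to combine the two preparatory lemmas — Lemma~\ref{lem:ajKj}, which already identifies the ``top'' constants $a_L,b_L,c_L$ together with all the $a_j$ and the sums $K_j:=b_j+c_j$, and Lemma~\ref{lem:xn_several}, which pins down the coefficient of the distinguished monomial $y^k u_{L-k}$ in $s_L^{(2k+1)}$ — and then to run a recursion on $k$ peeling off the remaining constants $c_{L-k}$ (hence $b_{L-k}$) one at a time. First I would dispose of the case $L=1$: here $\max\{2,2L-1\}=2$, and Lemma~\ref{lem:ajKj} already shows that $c_1,a_1,b_1$, that is, all the constants of the component, are determined by $\dot s_1$ and $\ddot s_1$. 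So assume $L\ge 2$, where $\max\{2,2L-1\}=2L-1\ge 3$. By Lemma~\ref{lem:ajKj}, from $\dot s_L$, $\ddot s_L$ and $s_L^{(3)}$ we identify $a_L,b_L,c_L$ and, for every $1\le j\le L-1$, the constants $a_j$ and $K_j=b_j+c_j$; hence the only constants of the component still to be recovered are $c_1,\dots,c_{L-1}$ (each $b_j$ then following from $b_j=K_j-c_j$), while $a_1,\dots,a_L$ and $c_L$ are known and, being positive reaction rate constants, nonzero.

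Next I would run the recursion. For $k=1,2,\dots,L-1$ in this order, apply Lemma~\ref{lem:xn_several} with $n=L$: the monomial $y^k u_{L-k}$ appears in $s_L^{(2k+1)}$ with coefficient $c_{L-k}\prod_{j=0}^{k-1}a_{L-j}c_{L-j}$. At stage $k$, every factor of $\prod_{j=0}^{k-1}a_{L-j}c_{L-j}$, namely $a_L,c_L,a_{L-1},c_{L-1},\dots,a_{L-k+1},c_{L-k+1}$, has already been identified ($a_L,\dots,a_{L-k+1}$ and $c_L$ in the first step, and $c_{L-1},\dots,c_{L-k+1}$ in the previous stages of the recursion), so that product is a known nonzero number; dividing the observed coefficient by it yields $c_{L-k}$, and then $b_{L-k}=K_{L-k}-c_{L-k}$. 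The highest derivative order invoked is $2(L-1)+1=2L-1$, as claimed. To restate this as identifiability in the sense of Definition~\ref{identvar}: if $\Ka^{*},\Ka^{**}\in\R_{>0}^{\#\Edg}$ satisfy $s_L^{(\ell)}(\bx,\Ka^{*})=s_L^{(\ell)}(\bx,\Ka^{**})$ for $1\le\ell\le 2L-1$, then all the monomial coefficients used above coincide, and since each constant of the component has been written as an arithmetic expression in these coefficients whose denominators are everywhere-positive products of constants, $\Ka^{*}$ and $\Ka^{**}$ agree on the whole connected component.

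The genuinely delicate point — that $y^k u_{L-k}$ occurs in $s_L^{(2k+1)}$ with \emph{exactly} the stated coefficient, receiving no other contribution and not occurring in any earlier derivative — is precisely the content of Lemma~\ref{lem:xn_several}, which is in turn where Assumption~\ref{hyp:partition} and the structural results (Lemma~\ref{lem:prod_no_interm}, Corollary~\ref{cor:powers}, Lemma~\ref{lemma:structure}) are used to rule out spurious cancellations. So the main obstacle is already resolved upstream; within this proof the only real care needed is the bookkeeping of which constants are available at each recursion step and the verification of the $2L-1$ bound.
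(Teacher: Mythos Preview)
Your proof is correct and follows essentially the same approach as the paper: first invoke Lemma~\ref{lem:ajKj} to identify $a_L,b_L,c_L$ and, for $L\ge 2$, the remaining $a_j$ and $K_j$, then recursively apply Lemma~\ref{lem:xn_several} for $k=1,\dots,L-1$ to extract $c_{L-k}$ (hence $b_{L-k}$) from the coefficient of $y^k u_{L-k}$ in $s_L^{(2k+1)}$. Your additional bookkeeping of which constants are available at each stage and the explicit link to Definition~\ref{identvar} are welcome clarifications but do not change the argument.
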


\begin{proof}
By Lemma \ref{lem:ajKj}, we can identify $a_L, b_L$ and $c_L$ from $\dot{s}_L$ and $\ddot{s}_{L}$, which implies the statement of the proposition for $L=1$.

For $L\ge 2$, again by Lemma \ref{lem:ajKj}, we can also identify $a_j$ and $ K_j = b_j +c_j$, for $1\le j \le L-1$, from $s_L^{(3)}$. In order to identify all the constants, we need to ``separate'' $b_j$ from $c_j$ for $1\leq j\leq L-1$.
We do this by identifying the constants $c_{L-k}$ recursively, for $k=1,\dots, L-1$, from the successive derivatives of $s_L$.

Let $k\ge 1$ and assume $c_{L-j}$ has been identified, for $0\le j<k$.  By Lemma~\ref{lem:xn_several},
the coefficient of the monomial $y^k u_{L-k}$ in $s^{(2k+1)}_L$ is
$c_{L-k} \prod_{j=0}^{k-1} a_{L-j} \, c_{L-j}.$
As $ a_{L-j}$ and $c_{L-j}$ for $0\le j \le L-1$ are known, from this coefficient we identify $c_{L-k}$.
\end{proof}

\subsection{Proofs of Section \ref{subsec:2comp}: Identifying the constants in two connected components from one variable}

The case of two connected components of the type
\begin{equation}\label{eqApp:2conncomp}
 \begin{array}{c}
  Y+S_0\overset{a_1}{\underset{b_1}{\rightleftarrows}} U_1 \overset{c_1}{\rightarrow}
  Y+S_1\overset{a_2}{\underset{b_2}{\rightleftarrows}} U_2 \overset{c_2}{\rightarrow} \dots
  Y+S_{L-1}\overset{a_L}{\underset{b_L}{\rightleftarrows}} U_L
  \overset{c_{L}}{\rightarrow}Y+S_L,\\
  \widetilde Y +S_L\overset{\tilde{a}_L}{\underset{\tilde{b}_L}{\rightleftarrows}} V_L \overset{\tilde{c}_L}{\rightarrow}
  \widetilde Y+S_{L-1}\overset{\tilde{a}_{L-1}}{\underset{\tilde{b}_{L-1}}{\rightleftarrows}} V_{L-1} \overset{\tilde{c}_{L-1}}{\rightarrow} \dots
  \widetilde Y+S_1\overset{\tilde{a}_1}{\underset{\tilde{b}_1}{\rightleftarrows}} V_1
  \overset{\tilde{c}_1}{\rightarrow}\widetilde Y+S_0
 \end{array}
 \end{equation}
considered in the paper runs in a similar way than the one connected component case.
The first result concerning this class of networks is in the spirit of Lemma \ref{lem:ajKj}.

\begin{lemma}\label{lem:tildeajKj}
Given two connected components as in \eqref{eqApp:2conncomp}, the constants $\tilde{a}_L$, $\tilde{b}_L$, $\tilde{c}_L$, and $\tilde{a}_j, \tilde{K}_j:=\tilde{b}_j + \tilde{c}_j$, for $1\le j\le L-1$, can be identified from $\dot{s}_L$ and $\ddot{s}_L$.
\end{lemma}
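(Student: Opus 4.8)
The plan is to mimic the proof of Lemma~\ref{lem:ajKj}, now following the monomials that carry the ``tilde'' constants of the second component of~\eqref{eqApp:2conncomp}. First I would expand $\dot s_L$ via~\eqref{eq:dotx}. Since $S_L$ acts as a product in the first component we get $U_L\in\Wp_{S_L}$, and since $S_L$ acts as the substrate of the second component through $\widetilde Y+S_L\rightleftarrows V_L$ we get $\widetilde Y\in\Zp_{S_L}$ and, via the reverse reaction, $V_L\in\Wp_{S_L}$; isolating these,
\[
\dot s_L=-\tilde a_L\,\tilde y\,s_L+c_L u_L+\tilde b_L v_L-\sum_{Z\in\Zp_{S_L}\setminus\{\widetilde Y\}}\mu_z s_L z+\sum_{W\in\Wp_{S_L}\setminus\{U_L,V_L\}}\eta_w w.
\]
By Lemma~\ref{lem:cte}(2) the degree-one monomials of $\dot s_L$ are exactly the $w$ with $W\in\Wp_{S_L}$, appearing with the coefficients above, so the coefficient of $v_L$ reads off $\tilde b_L$; and the quadratic monomial $\tilde y s_L$ comes only from the reactant complex $\widetilde Y+S_L$, which lies in a unique connected component by Assumption~\ref{hyp:network} with $\widetilde Y\neq S_L$ by Assumption~\ref{hyp:partition}, so its coefficient is exactly $-\tilde a_L$, identifying $\tilde a_L$ (the coefficient of $u_L$ gives $c_L$ as in Lemma~\ref{lem:ajKj}).

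Next I would compute $\ddot s_L$ from~\eqref{eq:simple_l} with $\ell=2$ and $x=s_L$, separating the contributions of $\dot u_L$ and $\dot v_L$:
\[
\ddot s_L=-\tilde a_L(\dot s_L\,\tilde y+s_L\,\dot{\tilde y})+c_L(a_L\,y\,s_{L-1}-K_L u_L)+\tilde b_L(\tilde a_L\,\tilde y\,s_L-\tilde K_L v_L)+R,
\]
where $\tilde K_L=\tilde b_L+\tilde c_L$ and $R$ collects the remaining summands of~\eqref{eq:simple_l}. By Lemma~\ref{lem:cte}, the only degree-one monomials of $\ddot s_L$ are again the $w$ with $W\in\Wp_{S_L}$, and since no product occurring in $\ddot s_L$ can contribute the single-variable monomial $v_L$ (no derivative has a constant term), the coefficient of $v_L$ in $\ddot s_L$ equals $-\tilde b_L\tilde K_L$; as $\tilde b_L\neq0$ is already known, this identifies $\tilde K_L$ and hence $\tilde c_L$. (The coefficients of $y s_{L-1}$ and $u_L$ give $a_L$ and $b_L$ exactly as in Lemma~\ref{lem:ajKj}.) If $L=1$ we are done.

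For $1\le j\le L-1$, the constants $\tilde a_j$ and $\tilde K_j=\tilde b_j+\tilde c_j$ appear, up to sign, as the coefficients of $\tilde y s_j$ and of $v_j$ in $\dot{\tilde y}$: because $\widetilde Y+S_j$ is a reactant complex of a single block and $V_j$ is the intermediate of a single block (Assumption~\ref{hyp:network}), the coefficient of $\tilde y s_j$ in $\dot{\tilde y}$ is exactly $-\tilde a_j$ and that of $v_j$ is exactly $\tilde K_j$. These reappear in $\ddot s_L$ through the product $-\tilde a_L\,s_L\,\dot{\tilde y}$, contributing $\tilde a_L\tilde a_j$ to the monomial $\tilde y s_j s_L$ and $-\tilde a_L\tilde K_j$ to the monomial $v_j s_L$. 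The hard part will be showing that no other term of $\ddot s_L$ contributes to these two families of monomials, so that their coefficients are exactly $\pm\tilde a_L\tilde a_j$ and $\pm\tilde a_L\tilde K_j$. Reading~\eqref{eq:simple_l} for $\ell=2$, such a degree-two or degree-three monomial can only come from a product $x^{(h)}z^{(i)}$ with $Z\in\Zp_{S_L}$ and $h+i\le1$, since the remaining summands $z_{w,1}z_{w,2}$ and $\delta_w w$ have too low degree or involve only non-intermediate variables. Every such product other than $s_L\dot{\tilde y}$ is then ruled out by the assumptions of Section~\ref{sec:assumptions}: by Remark~\ref{extra} the complex $S_j+S_L$ is absent, so $s_j s_L$ occurs in no $\dot z$ and $S_j\notin\Zp_{S_L}$, which disposes of $\dot s_L\,\tilde y$ and of the case $z=s_j$; and $V_j$ with $j<L$ reacts only to $\widetilde Y$, $S_j$ and $S_{j-1}$, none of which reacts with $S_L$, so $v_j$ occurs in $\dot z$ for $Z\in\Zp_{S_L}$ only when $Z=\widetilde Y$. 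Finally, within $s_L\dot{\tilde y}$ itself only the $j$-th summand of $\dot{\tilde y}$ produces these monomials. Since $\tilde a_L\neq0$ has already been identified, the coefficients of $\tilde y s_j s_L$ and $v_j s_L$ in $\ddot s_L$ determine $\tilde a_j$ and $\tilde K_j$ for all $1\le j\le L-1$, which finishes the proof.
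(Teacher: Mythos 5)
Your proof is correct and follows essentially the same route as the paper's: you identify $\tilde a_L,\tilde b_L$ from the coefficients of $\tilde y s_L$ and $v_L$ in $\dot s_L$, then $\tilde K_L$ (hence $\tilde c_L$) from the coefficient of $v_L$ in $\ddot s_L$, and finally $\tilde a_j,\tilde K_j$ from the coefficients of $\tilde y s_j s_L$ and $v_j s_L$ in $\ddot s_L$, shown to arise only from the product $s_L\dot{\tilde y}$ by means of Assumption~\ref{hyp:partition} and Remark~\ref{extra}, exactly as in the paper. One wording slip: the clause ``none of which reacts with $S_L$'' cannot apply to $\widetilde Y$ (which does react with $S_L$); your intended claim and the conclusion you draw---that among the species $V_j$ reacts to, only $\widetilde Y$ belongs to $\Zp_{S_L}$---are correct.
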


\begin{proof} Consider the formula for $\dot{s}_L$ given in \eqref{eq:dot_xn+1}. Separating the terms corresponding to $\widetilde Y\in \Zp_L$ and $V_L\in \Wp_L$, and writing $\Zp_L^{\times}:= \Zp_L\setminus \{ \widetilde Y\}$ and $\Wp_L^{\times}:= \Wp_L\setminus \{ V_L\}$, we obtain
\[
 \dot{s}_L=-\sum_{Z\in \Zp_L^{\times}} \mu_{z}s_L z+ \sum_{W\in \Wp_L^{\times}}\eta_{w}w
 -\tilde{a}_L \tilde{y} s_L + \tilde{b}_L v_L.
\]
Then, we can identify $\tilde{a}_L$ and $\tilde{b}_L$ as the coefficients (up to sign) of the monomials $\tilde{y} s_L$ and $v_L$ in $\dot{s}_L$.

Consider now
\begin{equation}\label{eq:ddot_sn+1}
 \ddot{s}_L=-\sum_{Z\in \Zp_L^\times} \mu_{z} [\dot{s}_L z+s_L \dot{z}]+
 \sum_{W\in \Wp_L^\times}\eta_{w} [a_w z_{w,1} z_{w,2}-K_w w] - \tilde{a}_L [\dot{\tilde{y}} s_L+\tilde{y} \dot{s}_L]+ \tilde{b}_L[\tilde{a}_L \tilde{y} s_L -\tilde{K}_L v_L].
 \end{equation}
From the coefficient of $v_L$ in $\ddot{s}_L$, we can identify $\tilde{K}_L$ and, therefore, $\tilde{c}_L$, since we have already identified $\tilde{b}_L$.
The constants $\tilde{a}_j$ and $\tilde{K}_j$, for $1\le j \le L-1$, appear in the derivative
\[\dot{\tilde{y}} =  \sum_{1\le j \le L} (- \tilde{a}_j \tilde{y} s_{j} + \tilde{K}_j v_j) + \cdots , \]
then, they appear in the expression \eqref{eq:ddot_sn+1} from the product $\dot{\tilde{y}} s_L$ in the coefficients of the monomials $\tilde{y} s_j s_L$ and $v_js_L$, respectively. By \hypo~\ref{hyp:partition},
$S_j \notin \Zp_L$ for every $1\le j\le L-1$; hence, the monomials $\tilde{y} s_j s_L$  do not come from any other term in \eqref{eq:ddot_sn+1}. Also, it is immediate that the monomials $v_j s_L$ only come from $\dot{\tilde{y}} s_L$. Then, the coefficients of $\tilde{y} s_j s_L$ and $v_j s_L$ in $\ddot{s}_L$ are $\tilde{a}_L \tilde{a}_j$ and $ -\tilde{a}_L \tilde{K}_j$, respectively, and enable us to identify $\tilde{a}_j$ and $\tilde{K}_j$, for $1\le j \le L-1$, since $\tilde{a}_L \ne 0$.
\end{proof}

In order to establish a statement extending Lemma \ref{lem:xn_several} to this new setting,  we need a previous technical lemma (a suitable analogue of Lemma \ref{lemma:structure} above):

\begin{lemma}\label{lemma:conn_comp_segundo}
 Given an intermediate species $V$ and a non-intermediate species
 $Y$ that acts as an enzyme in a connected component where the set
 of substrates and products is $\Sp_V$, if
  $X$ is a non-intermediate species such that
 $X\in\Sp^{(\alpha)}$ for some $\alpha\geq 1$ and $Y,\widetilde Y \notin\Sp^{(\alpha)}$, where $\widetilde Y$ is the enzyme in the connected component determined by $V$, and
 the monomial $y^r v$ appears in $x^{(\ell)}$ for some $r\ge 0$ and $\ell\geq1$,  then $X\in\Sp_V$.

Moreover, either $V$ reacts to $X$ or $r\ge 1$, $\ell \ge 2$ and a monomial $y^{t} v $ with $t<r$ appears in $z_w^{(i)}$, for some $i\le \ell - 2$, for a species $Z_w$ involved in a block of reactions $Y+Z_w \rightleftarrows W \rightarrow  Y + X.$ If $r\ge 1$ and $\ell$ is minimal, then $t=r-1$.
\end{lemma}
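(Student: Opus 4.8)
The plan is to follow closely the proof of Lemma~\ref{lemma:structure}, arguing by induction on $r$. Throughout, write $C_V$ for the connected component determined by $V$, so that $\widetilde Y$ is its enzyme and $\Sp_V$ is its set of substrates and products, and fix a connected component $C_Y$ with enzyme $Y$ and with $\Sp_V$ as its set of substrates and products, as provided by the hypothesis. The new ingredients compared with Lemma~\ref{lemma:structure} will be the use of the assumption $Y,\widetilde Y\notin\Sp^{(\alpha)}$ and of \hypo~\ref{hyp:partition} to keep track of which species may appear, together with the uniqueness of the component containing a given complex (Assumption~\ref{hyp:network}(4)). For $r=0$ the monomial $v$ has degree~$1$, so by Lemma~\ref{lem:cte}(2) it appears in $x^{(\ell)}$ precisely when $V\in\Wp_X$, i.e.\ when $V$ reacts to $X$; then $X$ is a product of $C_V$ and so $X\in\Sp_V$, which settles this case.

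For $r\ge 1$, since $y^rv$ has degree $\ge 2$ and is divisible by the intermediate $v$, it is not a monomial of $\dot x$ (cf.~\eqref{eq:dotx}); hence $\ell\ge 2$, and by~\eqref{eq:simple_l} the monomial $y^rv$ occurs in a product $p^{(h)}q^{(i)}$ of derivatives of two non-intermediate species, either with $p=x$, $q=z\in\Zp_X$ and $h+i\le\ell-1$, or with $\{p,q\}=\{Z_{w,1},Z_{w,2}\}$ for some $W\in\Wp_X$ with $X\notin\{Z_{w,1},Z_{w,2}\}$ and $h+i\le\ell-2$. Writing $y^rv=m_pm_q$ with $m_p$ a monomial of $p^{(h)}$ and $m_q$ a monomial of $q^{(i)}$, the factor $v$ lies in one of $m_p,m_q$ and the other is a pure power $y^a$; by Corollary~\ref{cor:powers} $a\le1$ and by Lemma~\ref{lem:cte}(1) $a\ne0$, so $a=1$, and a degree-$1$ monomial $y$ can occur in a derivative of a non-intermediate only at order $0$ and only as the species itself (Lemma~\ref{lem:cte}(2), since $Y$ is non-intermediate); therefore one of the two factors is $Y$, taken at order $0$, while $y^{r-1}v$ appears in the other.

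In the first type of product the factor equal to $Y$ cannot be $x$ (that would give $X=Y\in\Sp^{(\alpha)}$), so $z=Y$, $i=0$, and $y^{r-1}v$ appears in $x^{(h)}$ with $1\le h\le\ell-1$; the inductive hypothesis applied to $x^{(h)}$ yields $X\in\Sp_V$, and if in addition $V$ does not react to $X$ it provides a block $Y+Z_w\rightleftarrows W\to Y+X$ together with a monomial $y^tv$, $t<r-1<r$, in $z_w^{(i')}$ with $i'\le h-2\le\ell-2$; as in Lemma~\ref{lemma:structure}, when $\ell$ is minimal this first type cannot occur, so one is then in the second type. In the second type we may assume $Z_{w,1}=Y$, $h=0$, and $y^{r-1}v$ in $z_{w,2}^{(i)}$ with $i\le\ell-2$; the relevant block is $Y+Z_{w,2}\rightleftarrows W\to\widetilde Z_w+X$, and if $Y_W$ denotes the enzyme of the component determined by $W$, then $Y_W\in\{Y,Z_{w,2}\}$ and $X\in\Sp_W$. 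By \hypo~\ref{hyp:partition} there is $\beta\ge1$ with $\Sp_W\subseteq\Sp^{(\beta)}$ and $Y_W\notin\Sp^{(\beta)}$; since $X\in\Sp_W\cap\Sp^{(\alpha)}$ we get $\beta=\alpha$, hence $\Sp_W\subseteq\Sp^{(\alpha)}$ and $Y_W\notin\Sp^{(\alpha)}$. Were $Y_W=Z_{w,2}$ we would have $Y\in\Sp_W\subseteq\Sp^{(\alpha)}$, contradicting the hypothesis; thus $Y_W=Y$, $\widetilde Z_w=Y$, and $Z_{w,2}\in\Sp^{(\alpha)}$. Consequently $Z_{w,2}$ satisfies the hypotheses of the lemma with the same $V$, $Y$, $\widetilde Y$, $\alpha$, and applying the inductive hypothesis to $y^{r-1}v$ in $z_{w,2}^{(i)}$ — or, for $r=1$, Lemma~\ref{lem:cte}(2) directly — gives $Z_{w,2}\in\Sp_V$, which equals the set of substrates and products of $C_Y$. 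As $Y$ is the enzyme of $C_Y$, the complex $Y+Z_{w,2}$ belongs to $C_Y$; by Assumption~\ref{hyp:network}(4) it lies in a single component, which must therefore be the component determined by $W$, so $W$ is an intermediate of $C_Y$ and $X$, the product reached from $W$, lies in $\Sp_V$. The block $Y+Z_{w,2}\rightleftarrows W\to Y+X$ and the monomial $y^{r-1}v$ in $z_{w,2}^{(i)}$ then give the remaining assertions, with $t=r-1$.

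I expect the main obstacle to be the second type of term above: the crux is to argue that the reactant $Z_{w,2}$ accompanying $Y$ again falls under the hypotheses of the lemma — which is exactly where \hypo~\ref{hyp:partition} and the exclusion $Y\notin\Sp^{(\alpha)}$ do the work — and then to transfer the conclusion from $Z_{w,2}$ to $X$ by locating the block $Y+Z_{w,2}\rightleftarrows W\to Y+X$ inside $C_Y$ via the uniqueness of the component of a complex. The refinement for minimal $\ell$ and the distribution of the $y$-factors among the monomials are routine once these points are in place, and mirror the corresponding steps in the proof of Lemma~\ref{lemma:structure}.
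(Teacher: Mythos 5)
Your proof is correct and follows essentially the same route as the paper's: an induction on $r$ whose base case is Lemma~\ref{lem:cte}(2) and whose inductive step is exactly the dichotomy of Lemma~\ref{lemma:structure} (which the paper invokes directly rather than re-deriving inline as you do), followed by the same use of \hypo~\ref{hyp:partition} to force the enzyme of the component determined by $W$ to be $Y$ and to place $Z_{w,2}$ in $\Sp^{(\alpha)}$, and the same final step locating the complex $Y+Z_{w,2}$ in the component whose substrate/product set is $\Sp_V$ (you make the appeal to Assumption~\ref{hyp:network}(4) explicit where the paper leaves it implicit). The only detail left unstated in your base case is that $X\neq\widetilde Y$ (immediate from $X\in\Sp^{(\alpha)}$ and $\widetilde Y\notin\Sp^{(\alpha)}$), which is what lets you conclude $X\in\Sp_V$ rather than $X$ being the enzyme of the component of $V$.
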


\begin{proof} First, note that $X\ne Y$ and $X\ne \widetilde Y$, because of the assumption that $X\in \Sp^{(\alpha)}$ and $Y, \widetilde Y \notin \Sp^{(\alpha)}$.
We proceed by induction on $r\in \N_0$.

If $r=0$, by Lemma \ref{lem:cte} (2),  $V$ reacts to $X$. As $X$ is not the enzyme $\widetilde Y$, then  $X\in\Sp_V$.

For $r\geq 1$, since $X\ne Y$, Lemma \ref{lemma:structure} states that either $V$ reacts to $X$ (which we have already considered) or the network contains a block of reactions $Y+Z_w \rightleftarrows W \rightarrow  \widetilde Z_w + X$, where $Z_w \ne X$, and a monomial $y^t v$ with $t<r$  appears in $z_w^{(i)}$ for some $i\le \ell-2$ (furthermore, $t= r-1$ if $\ell$ is minimal).
In the latter case, $\widetilde Z_w$ acts as an enzyme in the connected component determined by $W$ and $X\in \Sp_W$, which implies that $\Sp_W  \subset \Sp^{(\alpha)}$. If $\widetilde Z_w = Z_w$, then $Y \in \Sp_W $, contradicting the assumption that $Y \notin \Sp^{(\alpha)}$; therefore, $\widetilde Z_w = Y$, and $Z_w \in \Sp^{(\alpha)}$. By the induction hypothesis, $Z_w\in \Sp_V$. As $\Sp_V$ is the set of substrates and products in a connected component where $Y$ acts as an enzyme,
the complex $Y+Z_{w}$ lies in that component, and so, $X\in\Sp_V$.
\end{proof}

We are now able to prove the result that will play the key role in order to give a recursive argument to identify all the constants in suitable pairs of connected components:

\begin{lemma}\label{lemApp:xn_several_segundo} \label{lem:xn_several_segundo}
Given two connected components as in \eqref{eqApp:2conncomp}
with $L\ge 1$, let $1\leq n\leq L$ and $0\leq k \leq n-1$ be fixed.
If $\ell$ is minimum such that $y^r v_{n-k}$ is a monomial of $s^{(\ell)}_n$ for some $r \ge 0$, then $\ell=2k+1$, $r=k$ and the coefficient of $y^k v_{n-k}$ in $s_n^{(2k+1)}$ is
\[\tilde{b}_{n-k}\prod_{j=0}^{k-1} a_{n-j}c_{n-j} .\]
\end{lemma}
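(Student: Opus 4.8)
The plan is to prove the lemma exactly as Lemma~\ref{lem:xn_several} is proved, by induction on $n$, replacing the structural input Lemma~\ref{lemma:structure} by its two--component analogue Lemma~\ref{lemma:conn_comp_segundo}. The base cases are: (i) $k=0$, where for every $1\le n\le L$ the monomial $v_n$ occurs in $\dot s_n$ with coefficient $\tilde b_n$ (coming from the back reaction $V_n\to\widetilde Y+S_n$), while by the shape of \eqref{eq:dotx} no monomial $y^r v_n$ with $r\ge 1$ can occur in a first derivative; hence $\ell=1=2\cdot 0+1$, $r=0=k$, and with the convention $\prod_{j=0}^{-1}=1$ the coefficient is $\tilde b_{n-k}$; and (ii) $n=1$, which forces $k=0$ and is covered by (i).

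For the inductive step I would assume $n\ge 2$ and $1\le k\le n-1$, and let $y^r v_{n-k}$ be a monomial of $s_n^{(\ell)}$ with $\ell$ minimal. Since $V_{n-k}$ reacts only to $S_{n-k-1}$, $S_{n-k}$ and $\widetilde Y$, and $k\ge 1$ forces $n-k-1<n$ and $n-k<n$, the species $V_{n-k}$ does not react to $S_n$. I would then apply Lemma~\ref{lemma:conn_comp_segundo} with $V:=V_{n-k}$, $X:=S_n$, and with $Y$ playing the role of the enzyme: its hypotheses are met because $\widetilde Y$ is the enzyme of the connected component determined by $V_{n-k}$, whose set of substrates and products is $\{S_0,\dots,S_L\}=\Sp_{U_n}$ on which $Y$ also acts as an enzyme, and because \hypo~\ref{hyp:partition} applied to $U_n$ and to $V_n$ provides an index $\alpha\ge 1$ with $S_n\in\Sp^{(\alpha)}$ and $Y,\widetilde Y\notin\Sp^{(\alpha)}$. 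The lemma yields $r\ge 1$, $\ell\ge 2$, a block of reactions $Y+Z_w\rightleftarrows W\to Y+S_n$, and a monomial $y^{r-1}v_{n-k}$ appearing in $z_w^{(i)}$ for some $i\le\ell-2$. By \hypo~\ref{hyp:network} (the intermediates are pairwise distinct and each component has the prescribed shape) this block must be $Y+S_{n-1}\rightleftarrows U_n\to Y+S_n$, so $z_w=s_{n-1}$ and $y^{r-1}v_{n-k}$ occurs in $s_{n-1}^{(i)}$ with $i\le\ell-2$. Writing $v_{n-k}=v_{(n-1)-(k-1)}$ and invoking the induction hypothesis for the pair $(n-1,k-1)$, we obtain $i\ge 2(k-1)+1=2k-1$, hence $\ell\ge 2k+1$.

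Next I would use \eqref{eq:simple_l} applied to $x=s_n$ to see that the only terms that can produce a monomial $y^r v_{n-k}$ in $s_n^{(2k+1)}$ are those of the form $\gamma_{u_n,0,i}\,y\,s_{n-1}^{(i)}$ with $i\le 2k-1$, coming from $U_n\in\Wp_{S_n}$ whose reactants are $Y$ and $S_{n-1}$: in any other product of derivatives, the factor not carrying $v_{n-k}$ would have to be a positive power of $Y$ in a derivative of positive order, or the constant monomial, both impossible by Lemma~\ref{lem:cte} and Corollary~\ref{cor:powers}; and the extra species that $S_n$ contributes to $\Wp_{S_n}$ and $\Zp_{S_n}$ through the second connected component ($V_n$, $V_{n+1}$, $\widetilde Y$) are discarded because $S_n$ lies in a partition class disjoint from that of $\widetilde Y$ (Remark~\ref{extra}), just as in the proof of Lemma~\ref{lem:xn_several}. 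Combined with the induction hypothesis --- which says that $y^{r-1}v_{n-k}$ appears in $s_{n-1}^{(i)}$ for $i\le 2k-1$ only when $i=2k-1$ and $r-1=k-1$ --- this forces $\ell=2k+1$ and $r=k$, and shows that the coefficient of $y^k v_{n-k}$ in $s_n^{(2k+1)}$ equals $\gamma_{u_n,0,2k-1}$ times the coefficient of $y^{k-1}v_{n-k}$ in $s_{n-1}^{(2k-1)}$. By \eqref{eq:deriv_ul} the product $y\,s_{n-1}^{(2k-1)}$ enters $u_n^{(2k)}$ with factor $a_n$ and by \eqref{eq:deriv_xl} the term $u_n^{(2k)}$ enters $s_n^{(2k+1)}$ with factor $c_n$, so $\gamma_{u_n,0,2k-1}=c_n a_n$; multiplying by the value $\tilde b_{n-k}\prod_{j=0}^{k-2}a_{n-1-j}c_{n-1-j}$ furnished by the induction hypothesis gives $c_n a_n\,\tilde b_{n-k}\prod_{j=0}^{k-2}a_{n-1-j}c_{n-1-j}=\tilde b_{n-k}\prod_{j=0}^{k-1}a_{n-j}c_{n-j}$, which is nonzero and hence confirms that $\ell=2k+1$ is indeed minimal.

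The hard part is the bookkeeping in the third paragraph: verifying that, among all the monomial products occurring in \eqref{eq:simple_l} for $s_n^{(2k+1)}$ --- in particular those generated by the reactions of the second connected component in which $S_n$ also participates --- only $y\,s_{n-1}^{(2k-1)}$ can contribute the monomial $y^r v_{n-k}$. This is the same difficulty met in the one--component case, and it is overcome by the same tools (Lemma~\ref{lem:cte}, Corollary~\ref{cor:powers}, and the partition hypothesis via Remark~\ref{extra}), now together with Lemma~\ref{lemma:conn_comp_segundo}, which channels every occurrence of $v_{n-k}$ in the derivatives of $s_n$ through the enzymatic blocks of the first connected component.
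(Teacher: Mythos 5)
Your proposal is correct and follows essentially the same route as the paper's proof: base case $k=0$, induction on $n$, reduction via Lemma~\ref{lemma:conn_comp_segundo} to the block $Y+S_{n-1}\rightleftarrows U_n \to Y+S_n$, and then the coefficient computation $\gamma_{u_n,0,2k-1}=c_na_n$ times the inductive coefficient. The only difference is that you spell out the bookkeeping the paper dispatches with ``mutatis mutandis the proof of Lemma~\ref{lem:xn_several}'' (including the exclusion of contributions from $\widetilde Y$, $V_n$, $V_{n+1}$ and, via minimality of $\ell$, of products of the form $y\,s_n^{(h)}$), which is consistent with the paper's argument.
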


\begin{proof}
For $k=0$, and all $1\le n \le L$, $v_{n}$ appears in $\dot{s}_n$ (since $V_n$ reacts to $S_n$) with coefficient $\tilde{b}_{n}$ and so, $r=0$ and $\ell=1$.  We now proceed by induction on $n$.

If $n=1$, the only possibility is $k=0$, which has already been considered.

For $n\ge 2$, let $k\ge 1$. By \hypo~\ref{hyp:partition},
there exists $\alpha\ge 1$ such that $S_j \in \Sp^{(\alpha)}$ for every $0\le j \le L$, and $Y, \widetilde Y \notin \Sp^{(\alpha)}$. If the monomial $y^rv_{n-k}$ appears in a derivative of $s_n$ and $\ell$ is the minimum derivation order where it appears, as $V_{n-k}$ does not react to $S_n$, by Lemma \ref{lemma:conn_comp_segundo}, $r\ge 1$, $\ell\ge 2$ and the monomial $y^{r-1}v_{n-k}$ appears in $z_w^{(i)}$, for some $i\le \ell-2$, for a species $Z_w$ in a block of reactions $Y+Z_w \rightleftarrows W \rightarrow  Y + S_n$. Then, $W= U_n$ and $Z_w = S_{n-1}$;  so, $y^{r-1} v_{n-k}$ appears in $s_{n-1}^{(i)}$ for some $i\le \ell-2$. By the induction hypothesis, we have that $i\ge 2k-1$; therefore, $\ell\ge 2k+1$.

Now, following \emph{mutatis mutandis} the proof of Lemma \ref{lem:xn_several}, we deduce that the coefficient of the monomial $y^k v_{n-k}$ in $s_n^{(2k+1)}$ is equal to $c_n a_n$ multiplied by the coefficient of $y^{k-1} v_{n-k}$ in $s_{n-1}^{(2k-1)}$, and we conclude by applying the induction hypothesis.
\end{proof}

Similarly as in the previous subsection, from Lemmas \ref{lem:tildeajKj} and \ref{lem:xn_several_segundo} we deduce the following identifiability result for two connected components that extends Proposition \ref{propApp:1conncomp} and constitutes the main result in Section \ref{subsec:2comp} (Proposition \ref{prop:2comp} in the main text):

\begin{proposition}\label{propApp:2comp}
Given a chemical reaction network satisfying the assumptions in Section~\ref{sec:assumptions}, all the constants in two connected components as in \eqref{eqApp:2conncomp} can be identified from $s^{(\ell)}_{L}$ with $1\leq \ell \leq \mathrm{max}\{2,2L-1\}$.
\end{proposition}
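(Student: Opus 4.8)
The plan is to deduce Proposition~\ref{propApp:2comp} by combining the identifiability already obtained for the first connected component (Proposition~\ref{propApp:1conncomp}) with the two auxiliary lemmas tailored to the second component, namely Lemma~\ref{lem:tildeajKj} and Lemma~\ref{lemApp:xn_several_segundo}. First I would recall that, by Proposition~\ref{propApp:1conncomp}, all the constants $a_j, b_j, c_j$ of the first component in~\eqref{eqApp:2conncomp} are identified from $s_L^{(\ell)}$ with $1\le \ell \le \max\{2,2L-1\}$, so it only remains to identify the ``tilde'' constants of the second component within this same range of derivatives. The key observation is that the variable $s_L$ is simultaneously the last product of the first component \emph{and} a substrate of the second component ($\widetilde Y + S_L \rightleftarrows V_L \to \widetilde Y + S_{L-1}$), so its derivatives already see both components; in particular $V_L \in \Wp_{S_L}$ and $\widetilde Y \in \Zp_{S_L}$.

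The main body of the argument runs in two stages, paralleling Section~\ref{subsec:1comp}. In the first stage I invoke Lemma~\ref{lem:tildeajKj} directly: it gives $\tilde a_L, \tilde b_L, \tilde c_L$ and, for $1\le j \le L-1$, $\tilde a_j$ and $\tilde K_j := \tilde b_j + \tilde c_j$, all from $\dot s_L$ and $\ddot s_L$. This settles the case $L=1$ entirely (together with Lemma~\ref{lem:ajKj}). In the second stage, for $L\ge 2$, I must ``separate'' $\tilde b_j$ from $\tilde c_j$ for $1\le j \le L-1$, which is done recursively for $k=1,\dots,L-1$ exactly as in the proof of Proposition~\ref{propApp:1conncomp}: assuming $\tilde b_{L-j}, \tilde c_{L-j}$ identified for $0\le j<k$, Lemma~\ref{lemApp:xn_several_segundo} tells us that the coefficient of the monomial $y^k v_{L-k}$ in $s_L^{(2k+1)}$ equals (up to sign) $\tilde b_{L-k}\prod_{j=0}^{k-1} a_{L-j}c_{L-j}$. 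Since the $a_{L-j}$ and $c_{L-j}$ were already identified via Proposition~\ref{propApp:1conncomp}, this coefficient determines $\tilde b_{L-k}$, and then $\tilde c_{L-k} = \tilde K_{L-k} - \tilde b_{L-k}$. The maximal derivation order used is $2(L-1)+1 = 2L-1$ for $k=L-1$, consistent with the bound $\max\{2,2L-1\}$; combining with the $a_j,b_j,c_j$ side (which also needs order $\le \max\{2,2L-1\}$) gives the claimed range.

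The one point that needs care — and I expect it to be the main obstacle — is that Lemma~\ref{lemApp:xn_several_segundo} must genuinely apply \emph{within the same variable $s_L$ whose derivatives we are already using for the first component}, i.e. that the monomials $y^kv_{L-k}$ isolating the second-component constants do not collide with the monomials $y^ku_{L-k}$, $ys_{j-1}s_{L-1}$, $u_js_{L-1}$ used for the first component, nor with each other. This non-interference is precisely what the minimality-of-$\ell$ statements in Lemmas~\ref{lem:xn_several} and~\ref{lemApp:xn_several_segundo} guarantee, together with Lemma~\ref{lem:cte} and Corollary~\ref{cor:powers}: $u$-monomials and $v$-monomials are distinguished because $U_j$ and $V_j$ are distinct intermediate species (Assumption~\ref{hyp:network}(2)), and Assumption~\ref{hyp:partition} (via Remark~\ref{extra}) forbids the spurious complexes $S_i+S_j$ that could otherwise produce competing terms. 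So the proof is essentially a bookkeeping assembly of the already-proved lemmas; the substantive work has been pushed into Lemma~\ref{lemApp:xn_several_segundo} and the structural Lemma~\ref{lemma:conn_comp_segundo} it relies on. I would write:

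\begin{proof}
By Proposition~\ref{propApp:1conncomp}, all the constants $a_j, b_j, c_j$ of the first connected component in~\eqref{eqApp:2conncomp} can be identified from $s_L^{(\ell)}$ with $1\le \ell \le \max\{2,2L-1\}$. It remains to identify the constants of the second component.

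By Lemma~\ref{lem:tildeajKj}, we identify $\tilde a_L, \tilde b_L, \tilde c_L$ and, for $1\le j \le L-1$, $\tilde a_j$ and $\tilde K_j = \tilde b_j + \tilde c_j$, from $\dot s_L$ and $\ddot s_L$. Together with Lemma~\ref{lem:ajKj}, this proves the statement for $L=1$.

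For $L\ge 2$, it remains to separate $\tilde b_j$ from $\tilde c_j$ for $1\le j \le L-1$; equivalently, it suffices to identify $\tilde b_{L-k}$ for $k=1,\dots,L-1$, since then $\tilde c_{L-k} = \tilde K_{L-k} - \tilde b_{L-k}$. We proceed recursively in $k$. Fix $k$ with $1\le k \le L-1$ and assume $\tilde b_{L-j}$ has been identified for $0\le j < k$. By Lemma~\ref{lemApp:xn_several_segundo} applied with $n=L$, the monomial $y^k v_{L-k}$ appears in $s_L^{(2k+1)}$ with coefficient (up to sign)
\[
\tilde b_{L-k}\prod_{j=0}^{k-1} a_{L-j}\,c_{L-j}.
\]
Since $a_{L-j}$ and $c_{L-j}$ for $0\le j \le L-1$ have already been identified by Proposition~\ref{propApp:1conncomp}, this coefficient determines $\tilde b_{L-k}$, hence also $\tilde c_{L-k}$.

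The highest derivation order used is $2(L-1)+1 = 2L-1$, so all constants in the two connected components are identified from $s_L^{(\ell)}$ with $1\le \ell \le \max\{2,2L-1\}$.
\end{proof}
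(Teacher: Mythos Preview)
Your proof is correct and follows essentially the same approach as the paper: invoke Proposition~\ref{propApp:1conncomp} for the first component, Lemma~\ref{lem:tildeajKj} for $\tilde a_L,\tilde b_L,\tilde c_L$ and the $\tilde a_j,\tilde K_j$, and then Lemma~\ref{lemApp:xn_several_segundo} to separate $\tilde b_{L-k}$ from $\tilde c_{L-k}$ via the coefficient of $y^k v_{L-k}$ in $s_L^{(2k+1)}$. Two minor remarks: the recursion hypothesis on the $\tilde b_{L-j}$ is not actually used (the coefficient in Lemma~\ref{lemApp:xn_several_segundo} involves only the already-known $a_{L-j},c_{L-j}$), and the ``up to sign'' is unnecessary since the lemma gives the exact coefficient.
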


\begin{proof} The result holds for $L=1$, since by Lemmas \ref{lem:ajKj} and \ref{lem:tildeajKj}, we  can identify $a_L, b_L, c_L, \tilde{a}_L, \tilde{b}_L$ and $\tilde{c}_L$ from $\dot{s}_L$ and $\ddot{s}_L$.

Assume now $L\ge 2$. By Proposition \ref{propApp:1conncomp}, all the constants $a_j, b_j$ and $c_j$, for $1\le j \le L$, can be identified from $s_L^{(\ell)}$ with $1\le \ell \le \max\{2, 2L-1\}$. It remains to show that we can also identify $\tilde{a}_j, \tilde{b}_j$ and $\tilde{c}_j$, for $1\le j \le L$.

By Lemma \ref{lem:tildeajKj}, the constants  $\tilde{a}_L$, $\tilde{b}_L$,  $\tilde{c}_L$ and $\tilde{a}_j$ and $\tilde{K}_j=\tilde{b}_j + \tilde{c}_j$,  for $1\le j \le L-1$,  are identifiable from $\dot{s}_L$ and $\ddot{s}_L$. We just need to ``separate'' $\tilde{b}_j$ and $\tilde{c}_j$ for $1\le j \le L-1$.  Due to Lemma
\ref{lem:xn_several_segundo}, this can be done by identifying $\tilde{b}_{L-k}$ recursively, for $k=1,\dots, L-1$, from the coefficients of the monomials $y^kv_{L-k}$ in $s_L^{(2k+1)}$.
\end{proof}

\subsection{Proofs of Section \ref{sec:cascade}: Identifying the cascade}

The following two auxiliary technical lemmas will be used in subsequent arguments concerning the identifiability in the cascade.

\begin{lemma}\label{lem:prod_no_interm_partes}
 If $\underset{j=1}{\overset{M}{\prod}}z_j$, with $Z_j$ non-intermediate species for all $j$, is a monomial of $x^{(\ell)}$ for a non-intermediate species $X\in\Sp^{(\alpha)}$ and $\ell \ge 1$,  then
there exists $1\leq j_1, j_2\leq M$ such that $Z_{j_1}\in\Sp^{(\alpha)}$ and $Z_{j_2} \in \Sp^{(\beta)}$ for some $\beta$ such that the network contains a complex $X+Z$ with $Z\in \Sp^{(\beta)}$.
\end{lemma}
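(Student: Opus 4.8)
The plan is to argue by strong induction on the derivation order $\ell$, proving the statement simultaneously for all non-intermediate species $X$ and all blocks $\Sp^{(\alpha)}$ with $X\in\Sp^{(\alpha)}$. By Lemma~\ref{lem:cte}(2) the degree-one monomials of any $x^{(\ell)}$ are intermediates, so a pure non-intermediate monomial $\prod_{j=1}^{M}z_j$ must have $M\ge 2$ and the statement is vacuous for $M=1$. For $\ell=1$ one reads off $\dot{x}=-\sum_{Z\in\Zp_X}\mu_z xz+\sum_{W\in\Wp_X}\eta_w w$ that the only pure non-intermediate monomial of degree $\ge 2$ is $xz$ with $Z\in\Zp_X$; hence $M=2$, $\{Z_1,Z_2\}=\{X,Z\}$, one factor equals $X\in\Sp^{(\alpha)}$ and the other equals $Z$, which lies in its own block $\Sp^{(\beta)}$ while $X+Z$ is a complex because $X$ reacts with $Z$.

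For $\ell\ge 2$ I would feed the monomial through formula~\eqref{eq:simple_l}, which writes $x^{(\ell)}$ as a combination of products $x^{(h)}z^{(i)}$ (with $Z\in\Zp_X$, $h+i\le\ell-1$), products $z_{w,1}^{(h)}z_{w,2}^{(i)}$ (with $W\in\Wp_X$, $X\notin\{Z_{w,1},Z_{w,2}\}$, $h+i\le\ell-2$), and linear terms $\delta_w w$. The linear terms contribute nothing here, so $\prod_j z_j$ occurs as the product of a monomial of one derivative and a monomial of another. In the first family, if $h\ge 1$ the factor coming from $x^{(h)}$ is a pure non-intermediate monomial of degree $\ge 2$ (nonconstant by Lemma~\ref{lem:cte}(1), and not of degree one by Lemma~\ref{lem:cte}(2)), so the inductive hypothesis applied to $x^{(h)}$ and $\Sp^{(\alpha)}$ already yields both required factors among the $Z_j$; if $h=0$ the factor is $x$ itself, so $X$ is one of the $Z_j$ (the $\Sp^{(\alpha)}$-factor), and the complementary monomial, lying in $z^{(i)}$ with $Z\in\Zp_X$, supplies the second factor via the inductive hypothesis applied to $z^{(i)}$, taking $\Sp^{(\beta)}$ to be the block of $Z$ (and using that $X+Z$ is a complex); the degenerate cases $i=0$ or $M=2$ are settled directly by Lemma~\ref{lem:cte}.

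The delicate case is the second family: here $W\in\Wp_X$ is involved in a block of reactions $Z_{w,1}+Z_{w,2}\rightleftarrows W\to Z_{w,3}+X$ with $X\notin\{Z_{w,1},Z_{w,2}\}$. Then $X$ is not the enzyme of this block, so $Z_{w,3}$ is the enzyme, it coincides with one of $Z_{w,1},Z_{w,2}$, and $X\in\Sp_W$; consequently Assumption~\ref{hyp:partition} together with $X\in\Sp^{(\alpha)}$ forces $\Sp_W\subseteq\Sp^{(\alpha)}$, so the substrate---the element of $\{Z_{w,1},Z_{w,2}\}$ distinct from $Z_{w,3}$---lies in $\Sp^{(\alpha)}$, while $Z_{w,3}$ lies in some block $\Sp^{(\beta)}$ and $Z_{w,3}+X$ is a complex. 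Writing $\prod_j z_j$ as a monomial of $z_{w,1}^{(h)}$ times a monomial of $z_{w,2}^{(i)}$ and invoking the inductive hypothesis on each of these two derivatives (again clearing the low-degree exceptions through Lemma~\ref{lem:cte}) produces a factor in $\Sp^{(\alpha)}$ and a factor in $\Sp^{(\beta)}$, both among the $Z_j$, which is exactly what is needed. I expect the main obstacle to be precisely this bookkeeping in the second family: one must make sure that the two derivatives whose product carries $\prod_j z_j$ are derivatives of non-intermediate species sitting in the two relevant blocks---the block of $X$ and the block of a species that reacts with $X$---which is where Assumption~\ref{hyp:partition} is indispensable, and one must check at each recursion step that the monomial being invoked genuinely occurs with nonzero coefficient in a strictly lower-order derivative.
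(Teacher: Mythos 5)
Your proposal is correct and follows essentially the same route as the paper's proof: induction on $\ell$ with the base case read off from $\dot{x}$, the inductive step organized through the decomposition \eqref{eq:simple_l} into the products $x^{(h)}z^{(i)}$ and $z_{w,1}^{(h)}z_{w,2}^{(i)}$, Lemma \ref{lem:cte} disposing of the constant and degree-one factors, and Assumption \ref{hyp:partition} forcing $\Sp_W\subseteq\Sp^{(\alpha)}$ so that the substrate and the enzyme $Z_{w,3}$ supply the $\Sp^{(\alpha)}$- and $\Sp^{(\beta)}$-factors. The only difference is presentational: the paper spells out the $h=0$ or $i=0$ subcases explicitly where you fold them into "clearing the low-degree exceptions."
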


\begin{proof}
For $\ell =1 $ the result is true, since the only products of non-intermediate species appearing in $\dot{x}$ are of the form $xz$ for a species $Z$ that reacts with $X$. Assume the lemma holds for derivatives of order $1\le h\leq \ell-1$ of non-intermediate species.

By equation~\eqref{eq:simple_l}, if the monomial appears in $x^{(h)}z^{(i)}$ for some $h+i \le  \ell-1$ and $h>0$, by Lemma \ref{lem:cte}(1), there is a monomial $\prod_{l=1}^{M'} z_{j_l}$ in $x^{(h)}$ with $1\le h\le \ell -1$, and the induction hypothesis gives the result. Assume now $h=0$ and $x= z_M$. If the monomial appears in $x z^{(i)}$ with $i\le \ell-1$, either $i=0$, and the monomial is $xz$ with $X$ and $Z$ reacting together, which implies the statement, or $1\le i \le \ell-1$ and the monomial  $\prod_{j=1}^{M-1} z_j$ appears in $z^{(i)}$. If the latter holds, $Z_{j_1} = X \in \Sp^{(\alpha)}$ and,  by the induction hypothesis applied to $Z \in \Sp^{(\beta)}$ for some $\beta$ and $1\le i \le \ell-2$, there exists $j_2$ such that $Z_{j_2} \in \Sp^{(\beta)}$.

If the product appears in $z_{w,1}^{(h)}z_{w,2}^{(i)}$, for some $h+ i\leq \ell-2$, coming from a block of reactions  $Z_{w,1}+Z_{w,2}\rightleftarrows W \to Z_{w,3}+X$ with $X\notin\{Z_{w,1},Z_{w,2}\}$, then the enzyme is $Z_{w,3}$ and, assuming $Z_{w,1}=Z_{w,3}$, it follows that $Z_{w,2}$ and $X$ lie in $\Sp_W$. Since $X\in \Sp^{(\alpha)}$, then $\Sp_W\subset \Sp^{(\alpha)}$; in particular, $Z_{w,2}\in \Sp^{(\alpha)}$. On the other hand, $Z_{w,1}= Z_{w,3} \in \Sp^{(\beta)}$ for some $\beta \ne \alpha$.  If $i=0$, there exists $1\le j_1\le M$ such that $Z_{j_1} = Z_{w,2} \in \Sp^{(\alpha)}$ and, if $i\ge 1$, by the induction hypothesis applied to $Z_{w,2}\in \Sp^{(\alpha)}$ and the factor of the monomial appearing in $z_{w,2}^{(i)}$, there exists $j_1$ such that $Z_{j_1}\in \Sp^{(\alpha)}$. Similarly, if $h=0$, there exists $1\le  j_2\le M$ such that $ Z_{j_2} = Z_{w,1} \in \Sp^{(\beta)}$ and, if $h\ge 1$, by the induction hypothesis applied to $Z_{w,1}\in \Sp^{(\beta)}$, there exists $j_2$ such that $Z_{j_2}\in \Sp^{(\beta)}$.
\end{proof}

\begin{lemma}\label{lem:prod_1_interm}
If $u\underset{i=1}{\overset{M}{\prod}}z_i$, with $M\ge 1$, is a monomial of $x^{(\ell)}$ for $U$ an intermediate species and $X, Z_i$ non-intermediate species for all $i$, either
there exist $1\leq i_1<i_2\leq M$ such that $Z_{i_1}$ reacts with $Z_{i_2}$ or there exist $1\le i_0\le M$ and a species $V$ that reacts with $Z_{i_0}$ such that $U$ reacts to a complex containing $V$.
\end{lemma}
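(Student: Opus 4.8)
I would argue by induction on the order of differentiation $\ell$. For the base case $\ell=1$, recall from \eqref{eq:dotx} that $\dot x=-\sum_{Z\in\Zp_X}\mu_z xz+\sum_{W\in\Wp_X}\eta_w w$, so its monomials are either single intermediates or degree-two products of two non-intermediate variables; hence a monomial of the form $u\prod_{i=1}^M z_i$ with $M\ge 1$ and $U$ intermediate cannot occur, and the statement holds vacuously. This base step also records the two facts I will use repeatedly, both contained in Lemma~\ref{lem:cte}: a derivative of a non-intermediate species has no constant term, and its only degree-one monomials are the intermediates that react to it.

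For the inductive step, fix $\ell\ge 2$ and assume the statement for all strictly smaller orders and all non-intermediate species. By formula~\eqref{eq:simple_l}, the monomial $m:=u\,z_1\cdots z_M$, which has degree $M+1\ge 2$, must occur in one of the products displayed there, say in $y_1^{(h)}y_2^{(i)}$, where $Y_1$ and $Y_2$ are non-intermediate species that react together: either $\{Y_1,Y_2\}=\{X,Z\}$ with $Z\in\Zp_X$ (a term of the first sum of~\eqref{eq:simple_l}), or $\{Y_1,Y_2\}=\{Z_{w,1},Z_{w,2}\}$ coming from a block $Z_{w,1}+Z_{w,2}\to W\to Z_{w,3}+X$ (a term of the second sum); in all cases $h,i<\ell$. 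Write $m=m_1m_2$ with $m_j$ a monomial of the corresponding $y_j^{(\cdot)}$. Since $\deg m\ge 2$, a degree count shows that $h$ and $i$ cannot both be zero.

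I would then split into two cases. If one of $h,i$ vanishes --- after possibly swapping the two factors, say $h=0$, so $m_1=y_1$ and $i\ge 1$ --- then $y_1$ must equal one of the variables $z_j$ (it is non-intermediate), so $Y_1=Z_j$, while $m_2=m/z_j$ is a monomial of $y_2^{(i)}$ still divisible by $u$. If $M=1$ this forces $m_2=u$, and Lemma~\ref{lem:cte} gives that $U$ reacts to $Y_2$; then $i_0:=1$ and $V:=Y_2$ work, since $V$ reacts with $Z_{i_0}=Y_1$ and $U$ reacts to a complex containing $V$. If $M\ge 2$, the inductive hypothesis applied to $y_2^{(i)}$ (valid since $i<\ell$) produces either two of the remaining $Z_k$'s that react together, or an index $k_0\ne j$ and a species $V$ reacting with $Z_{k_0}$ such that $U$ reacts to a complex containing $V$; either conclusion is one of the two alternatives for $m$. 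If instead $h,i\ge 1$, the intermediate $u$ (which divides $m$ to degree exactly one) divides exactly one of $m_1,m_2$, say $m_1$; then $m_2$ is a product of non-intermediate variables among the $z_k$ and is a monomial of $y_2^{(i)}$ with $i\ge 1$, so by Lemma~\ref{lem:cte} it cannot have degree one, hence it is a product of at least two of the $z_k$, and Lemma~\ref{lem:prod_no_interm} applied to $y_2^{(i)}$ furnishes two of them that react together --- the first alternative.

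The argument is essentially a careful dissection of where the single intermediate factor $u$ originates within~\eqref{eq:simple_l}, and I expect the main obstacle to be purely organizational: one must verify that the degree count indeed excludes the case $h=i=0$ and, in the second case, excludes $\deg m_2=1$ when $i\ge 1$; and one must be careful, whenever a factor $m_j$ collapses to the bare intermediate $u$, to invoke Lemma~\ref{lem:cte} for the right species and to pair it with the correct variable $z_{i_0}$ among the $z_k$. No genuinely new idea beyond the interplay of~\eqref{eq:simple_l}, Lemma~\ref{lem:cte} and Lemma~\ref{lem:prod_no_interm} is needed; the work lies in making sure every occurrence of $m$ in~\eqref{eq:simple_l} is accounted for.
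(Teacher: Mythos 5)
Your argument is correct, but it is organized differently from the paper's. The paper proves this lemma by peeling off the \emph{last} differentiation, writing $x^{(\ell)}=\sum_v \frac{\partial x^{(\ell-1)}}{\partial v}\dot v$ and splitting according to whether the factor contributed by $\dot v$ is a product of two reacting non-intermediates (first alternative at once) or the single intermediate $u$; in the latter case $v\prod_i z_i$ is a monomial of $x^{(\ell-1)}$, and one either has $V=U$ (recurse on the same monomial at order $\ell-1$) or $V$ is a non-intermediate to whose complex $U$ reacts, in which case Lemma~\ref{lem:prod_no_interm} applied to $v\prod_i z_i$ finishes. You instead start from the closed-form decomposition \eqref{eq:simple_l}, so every occurrence of the monomial sits inside a product $y_1^{(h)}y_2^{(i)}$ of derivatives of two reacting non-intermediates of strictly smaller order, and your induction recurses into the factor carrying $u$ (a lower-order derivative of a \emph{different} species, and a strictly smaller monomial $m/z_j$), using Lemma~\ref{lem:cte} to pin down the degree-one situations and Lemma~\ref{lem:prod_no_interm} when the $u$-free factor has degree at least two. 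This is essentially the style of the paper's proof of Lemma~\ref{lemma:structure} transplanted to the present statement; it requires the inductive hypothesis over all non-intermediate species simultaneously (which you state), whereas the paper's chain-rule route keeps the recursion on the same species and monomial. Both routes rest on the same two auxiliary lemmas, so neither buys a substantial saving; yours avoids the explicit $\ell=2$ base-case enumeration, the paper's avoids tracking which factor of the product absorbs $u$.

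One small repair: your claim that ``a degree count shows that $h$ and $i$ cannot both be zero'' is not justified by degree alone when $M=1$, since then $\deg m=2=\deg(y_1y_2)$. The correct reason is that $m$ contains the intermediate factor $u$ while $y_1y_2$ is a product of two non-intermediate variables, so $m\neq y_1y_2$; with that substitution the proof is complete.
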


\begin{proof}
Note that there are no monomials of this type in $\dot{x}$; thus, $\ell\ge 2$. For $\ell =2$, the only monomials in $\ddot{x}$ that are multiples of an intermediate and non-intermediates are:
\begin{itemize}
\item $uz$, for an intermediate species $U$ that reacts to $X$ and a non-intermediate $Z$ that reacts with $X$. In this case, the statement holds with $V=X$ and $Z_{i_0} = Z$;
\item $u x$, for an intermediate species $U$ that reacts to a non-intermediate species $Z$ reacting with $X$. The statement holds with $V=Z$ and $Z_{i_0}=X$.
\end{itemize}
For $\ell>2$, recalling that $x^{(\ell)} = \sum\limits_{v} \frac{\partial{x^{(\ell-1)}}}{\partial v} \dot v$ (where the sum runs over all variables $v$ representing non-intermediates or intermediate species),  it follows that $u \underset{i=1}{\overset{M}{\prod}}z_i$ is a monomial in $\frac{\partial{x^{(\ell-1)}}}{\partial v} \dot v$ for some variable $v$. Every monomial in $\dot v$ is either a single intermediate or a product of two non-intermediate species in a reaction. In the second case, the result follows.
Now, if $\underset{i=1}{\overset{M}{\prod}}z_i$ is a monomial of $\frac{\partial{x^{(\ell-1)}}}{\partial v}$ and $u$ is a monomial of $\dot v$, we have that $v\underset{i=1}{\overset{M}{\prod}}z_i$ is a monomial of $x^{(\ell-1)}$ and one of the following possibilities for $V$:
\begin{itemize}
\item $V= U$; then, $u\underset{i=1}{\overset{M}{\prod}}z_i$ is a monomial of $x^{(\ell-1)}$ and the result follows by induction.
\item $V$ is a non-intermediate species such that $U$ reacts to a complex containing $V$. By Lemma \ref{lem:prod_no_interm}, there are two variables in $v\underset{i=1}{\overset{M}{\prod}}z_i$ that react together. If none of these variables is $v$, there exist $1\le i_1, i_2\le M$ such that $Z_{i_1}$ and $Z_{i_2}$ react together; otherwise, there exists $1\le i_0\le M$ such that $V$ reacts with $Z_{i_0}$.
\end{itemize}
\end{proof}



\bigskip
We follow here the notations introduced in Section \ref{sec:cascade}, more precisely, in the general cascade (\ref{eq:cascade}). We also set $S_{0,L_0}:=E$.

For $1\le n\le N$, we have
\[
 \dot{s}_{n,L_n}=c_{n,L_n} u_{n,L_n} -\tilde{a}_{n,L_n} s_{n,L_n} f_n+\tilde{b}_{n,L_n}v_{n,L_n}- \sum_{j=1}^{L_{n+1}} a_{n+1,j} s_{n, L_n} s_{n+1, j-1}+ \sum_{j=1}^{L_{n+1}} K_{n+1,j} u_{n+1,j}
\]
and, for $n=N$, only the three first terms appear in the derivative, \textit{i.e.} $a_{N+1,j}=0$, $K_{N+1,j}=0$ for all $j$.

For $\ell \ge 2$, following equation \eqref{eq:simple_l}:
\begin{equation}\label{eq:dl_last_product}
\begin{array}{rcl}
s^{(\ell)}_{n,L_n} &=& \displaystyle\sum_{h+i \le \ell -1} \beta_{f_n,h,i}\, s_{n, L_n}^{(h)} f_n^{(i)} +
\displaystyle\sum_{j=1}^{L_{n+1}}\sum_{h+i \le \ell-1} \beta_{s_{n+1, j-1},h,i}\, s_{n, L_n}^{(h)} s_{n+1, j-1}^{(i)}+ \\
& + & \displaystyle\sum_{h+i\le \ell -2} \gamma_{u_{n, L_n},h,i}\, s_{n-1, L_{n-1}}^{(h)} s_{n, L_n-1}^{(i)}+\delta_{u_{n, L_n}} u_{n, L_n} + \delta_{v_{n,L_n}} v_{n, L_n} + \displaystyle\sum_{j=1}^{L_{n+1}} \delta_{u_{n+1,j}} u_{n+1, j}
\end{array}
\end{equation}
where
\[ \beta_{f_n,h,i} = \begin{cases} - \binom{\ell-1}{h} \tilde{a}_{n, L_n} & \mbox{ if } h+i= \ell -1\\ \tilde{a}_{n, L_n}\tilde{b}_{n, L_n} \binom{h+i}{h} (-\tilde{K}_{n, L_n})^{\ell-2-h-i} & \mbox{ if } h+i\le \ell -2\end{cases}, \]
\[ \beta_{s_{n+1, j-1},h,i} = \begin{cases} - \binom{\ell-1}{h} a_{n+1, j} & \mbox{ if } h+i= \ell -1\\ - \binom{h+i}{h} a_{n+1, j}  (-K_{n+1, j})^{\ell-1-h-i} & \mbox{ if } h+i\le \ell -2\end{cases}, \]
\[ \gamma_{u_{n,L_n},h,i} = c_{n,L_n} a_{n, L_n} \binom{h+i}{h} (-K_{n, L_n})^{\ell-2-h-i} \quad \hbox{for } 0\le h+i\le \ell-2,\]
\[\delta_{u_{n, L_n}} = c_{n, L_n} (-K_{n, L_n})^{\ell-1}, \quad \delta_{v_{n, L_n}} = \tilde{b}_{n, L_n} (-\tilde{K}_{n, L_n})^{\ell-1},\]
 \[ \delta_{u_{n+1, j}} = (-1)^{\ell-1} K_{n+1, j}^{\ell} \quad \hbox{for } 0 \le h+i\le \ell -2\]

\bigskip

According to formula \eqref{eq:dl_last_product}, every monomial of $s_{n, L_n}^{(\ell)}$ is either an intermediate species that appears in $\dot{s}_{n, L_n}$,
or it appears as a monomial in one of the products:
\begin{enumerate}
\item[(a)] $ s_{n, L_n}^{(h)} f_n^{(i)}$ for $h+i\le \ell-1$,
\item[(b)] $s_{n, L_n}^{(h)} s_{n+1, j-1}^{(i)} $ $(1\le j \le L_{n+1})$ for $h+i \le \ell -1$,
\item[(c)] $ s_{n-1, L_{n-1}}^{(h)} s_{n, L_n-1}^{(i)}$ for $h+i\le \ell-2$.
\end{enumerate}

The following three technical lemmas describe how the coefficients of some distinguished monomials change recursively after differentiation. These results allow us to obtain Proposition~\ref{propApp:monomials} below and hence, the identifiability result about the cascade stated in Section \ref{sec:cascade} (Theorem \ref{thm:cascade} in the main text).

\begin{lemma}\label{lem:recursion_sn}
Let $\cM = \prod_{j=1}^M z_j$ be a monomial of $s_{n-1, L_{n-1}}^{(\ell_0)}$ which is not a monomial of any derivative of $s_{n-1, L_{n-1}}$ of lower order and only involves variables corresponding to species in $\Sp^{(k)}$, $\Sp^{(N+k)}$, for $1\le k \le n-1$, and $\Sp^{(2N+1)}$. Assume  that :
\begin{itemize}
\item $\mathcal{M}$ is square-free and does not involve two disjoint pairs of variables corresponding to species that react together;
\item if $s_{n-1, L_{n-1}}$ divides $\mathcal{M}$, for every $1\le j_1, j_2\le M$ such that $Z_{j_1}$ and $Z_{j_2}$ react together, $Z_{j_1} = s_{n-1, L_{n-1}}$ or $Z_{j_2} = s_{n-1, L_{n-1}}$.
\end{itemize}
Then, $\widehat \cM := s_{n, L_n-1} \cM$ is a monomial of $s_{n, L_n}^{(\ell_0+2)}$ and of no lower order derivative of $s_{n, L_n}$. Moreover, if $C_\cM$ is the coefficient of $\cM$ in $s_{n-1, L_{n-1}}^{(\ell_0)}$, the coefficient of $\widehat \cM$ in  $s_{n, L_n}^{(\ell_0+2)}$ is $c_{n, L_n} a_{n, L_n} C_\cM$.
\end{lemma}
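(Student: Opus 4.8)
The plan is to read the coefficient of $\widehat\cM$ off formula~\eqref{eq:dl_last_product} and then show that it receives no other contribution. Since $\cM$ involves only species in $\Sp^{(k)}$, $\Sp^{(N+k)}$ ($k\le n-1$) and $\Sp^{(2N+1)}$, it has no intermediate factor and no factor associated to a layer $\ge n$; in particular $s_{n,L_n-1}\in\Sp^{(n)}$ does not divide $\cM$, so $\widehat\cM=s_{n,L_n-1}\cM$ is square-free, has $s_{n,L_n-1}$ as its unique layer-$\ge n$ variable, and $\deg\widehat\cM\ge 2$ (the constant monomial never occurs, by Lemma~\ref{lem:cte}). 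Applying~\eqref{eq:dl_last_product} with $\ell=\ell_0+2$, the term of the sum indexed by $U_{n,L_n}$ with $(h,i)=(\ell_0,0)$ is $\pm\gamma_{u_{n,L_n},\ell_0,0}\,s_{n-1,L_{n-1}}^{(\ell_0)}\,s_{n,L_n-1}$ with $\gamma_{u_{n,L_n},\ell_0,0}=c_{n,L_n}a_{n,L_n}$; since $\cM$ occurs in $s_{n-1,L_{n-1}}^{(\ell_0)}$ with coefficient $C_\cM$ and $s_{n,L_n-1}$ appears to power $1$ in $\widehat\cM$, this contributes exactly $c_{n,L_n}a_{n,L_n}C_\cM$ to the coefficient of $\widehat\cM$, which is nonzero. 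So it only remains to check that, for every $\ell\le\ell_0+2$, no other term of $s_{n,L_n}^{(\ell)}$ produces the monomial $\widehat\cM$.

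By~\eqref{eq:dl_last_product}, every monomial of $s_{n,L_n}^{(\ell)}$ is either one of the degree-one intermediate monomials $u_{n,L_n},v_{n,L_n},u_{n+1,j}$ (excluded, as $\widehat\cM$ has degree $\ge 2$ and no intermediate factor), or it occurs in a product of one of the shapes \textup{(a)} $s_{n,L_n}^{(h)}f_n^{(i)}$, \textup{(b)} $s_{n,L_n}^{(h)}s_{n+1,j-1}^{(i)}$, \textup{(c)} $s_{n-1,L_{n-1}}^{(h)}s_{n,L_n-1}^{(i)}$. I would first establish three ``layer-tracking'' statements, each by a routine induction on the order of differentiation using the explicit cascade equations: \emph{(A)} every monomial of $f_n^{(i)}$ with $i\ge1$ is divisible by $f_n$ or by some intermediate $V_{n,\cdot}$; \emph{(B)} every monomial of $s_{n+1,j-1}^{(i)}$, $i\ge 0$, is divisible by a species associated to a layer $\ge n+1$; \emph{(C)} every monomial of $s_{n,L_n-1}^{(i)}$ with $i\ge1$ is divisible by a species associated to a layer $\ge n$ (some $S_{n,\cdot}$, $U_{n,\cdot}$, $V_{n,\cdot}$, $F_n$, or a deeper-layer species). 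Writing a would-be occurrence of $\widehat\cM$ as $m_1m_2$ with $m_1,m_2$ monomials of the two factors, \emph{(A)} forbids shape \textup{(a)} (a factor $f_n^{(i)}$ forces $f_n\mid\widehat\cM$ if $i=0$, or an intermediate factor of $\widehat\cM$ if $i\ge1$), and \emph{(B)} forbids shape \textup{(b)} (since $\widehat\cM$ has no variable in layer $\ge n+1$). For shape \textup{(c)} with $i=0$ we get $m_1=\cM$ as a monomial of $s_{n-1,L_{n-1}}^{(h)}$ with $h\le\ell-2$, and by the minimality hypothesis on $\cM$ this forces $h=\ell_0$, hence $\ell=\ell_0+2$ and the main term of the first paragraph; in particular this subcase never contributes to a derivative of order $<\ell_0+2$, which (together with the exclusions just made) gives the ``no lower order'' assertion.

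The one delicate case is shape \textup{(c)} with $i\ge1$. There $\widehat\cM=m_1m_2$ with $m_2$ a monomial of $s_{n,L_n-1}^{(i)}$, $i\ge1$, $h+i\le\ell-2\le\ell_0$. As $m_2$ divides $\widehat\cM$ it has no intermediate factor, so by \emph{(C)} it is divisible by a layer-$\ge n$ species; the only such variable of $\widehat\cM$ is $s_{n,L_n-1}$, whence $m_2=s_{n,L_n-1}\,r$ with $r\mid\cM$, and then $m_1=\cM/r$ is a monomial of $s_{n-1,L_{n-1}}^{(h)}$ with $h=\ell-2-i\le\ell_0-1<\ell_0$; moreover $r\ne 1$, since by Lemma~\ref{lem:cte}(2) the only degree-one monomials of $s_{n,L_n-1}^{(i)}$ with $i\ge1$ are intermediates. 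The goal is to contradict the minimality of $\ell_0$. Since $m_2=s_{n,L_n-1}r$ is a nonconstant all-non-intermediate monomial of a derivative of $s_{n,L_n-1}$, Lemma~\ref{lem:prod_no_interm} produces a reacting pair among its factors: either the pair involves $s_{n,L_n-1}$, whose only non-$f_n$ reaction partner is $s_{n-1,L_{n-1}}$, forcing $s_{n-1,L_{n-1}}\mid r\mid\cM$ so that the hypothesis ``all reacting pairs of $\cM$ pass through $s_{n-1,L_{n-1}}$'' applies, or the pair lies inside $r\mid\cM$, in which case the ``no two disjoint reacting pairs'' hypothesis controls $\cM$. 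Feeding either alternative into the structural Lemmas~\ref{lemma:structure}, \ref{lem:prod_no_interm_partes} and \ref{lem:prod_1_interm}, one transfers the way the extra factors $r$ are generated in $s_{n,L_n-1}^{(i)}$ over to $s_{n-1,L_{n-1}}$, concluding that $\cM$ itself already appears in $s_{n-1,L_{n-1}}^{(h')}$ for some $h'<\ell_0$ --- contradicting the minimality hypothesis. This transfer step, which matches the combinatorial hypotheses on $\cM$ against the possible ``histories'' of $s_{n,L_n-1}$, is the heart of the proof and the only place where square-freeness and the reacting-pair hypotheses are used; it is the step I expect to be the main obstacle to write out in full detail.
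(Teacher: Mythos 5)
Your handling of shapes (a), (b) and of shape (c) with $i=0$, and the coefficient computation, are sound: your layer-tracking claims (A)--(C) do follow by routine induction and essentially reproduce what the paper extracts from Lemma \ref{lem:prod_no_interm_partes}, and the identification of the contributing term $\gamma_{u_{n,L_n},\ell_0,0}\,s_{n-1,L_{n-1}}^{(\ell_0)}s_{n,L_n-1}$ matches the paper. But there is a genuine gap exactly where you flag it: shape (c) with $i\ge 1$ is never actually excluded. Your plan --- to ``transfer'' the way the extra factor $r$ is generated in $s_{n,L_n-1}^{(i)}$ back to $s_{n-1,L_{n-1}}$ and conclude that $\cM$ itself occurs in $s_{n-1,L_{n-1}}^{(h')}$ for some $h'<\ell_0$, contradicting minimality --- is not carried out, and it is not clear it can be: what your reduction gives is that the \emph{proper} divisor $m_1=\cM/r$ occurs in a lower-order derivative, which says nothing about $\cM$ itself. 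Since this is precisely the case where square-freeness and the reacting-pair hypotheses must do their work, the lemma is not proved as written.

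The missing step is, however, much shorter than a transfer argument, and it is how the paper argues: one contradicts the hypotheses on $\cM$ directly, showing that shape (c) with $i\ge1$ can never produce $\widehat\cM$ at \emph{any} order. Write $\widehat\cM=m_1m_2$ with $m_1$ a monomial of $s_{n-1,L_{n-1}}^{(h)}$ and $m_2$ one of $s_{n,L_n-1}^{(i)}$, $i\ge1$. Since $\widehat\cM$ has no intermediate factor, Lemma \ref{lem:cte}(2) rules out degree-one monomials of positive-order derivatives, so $\deg m_2\ge2$ and, if $h\ge1$, also $\deg m_1\ge2$; Lemma \ref{lem:prod_no_interm} then gives a reacting pair inside $m_2$ and (when $h\ge1$) another inside $m_1$. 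Because $S_{n,L_n-1}$ reacts only with $S_{n-1,L_{n-1}}$ and $F_n$, and $f_n\nmid\widehat\cM$, the pair in $m_2$ either lies inside $r=m_2/s_{n,L_n-1}$, which divides $\cM$, or it is $\{s_{n,L_n-1},s_{n-1,L_{n-1}}\}$, forcing $s_{n-1,L_{n-1}}\mid\cM$. In the first case (with $h\ge 1$) the pairs inside $m_1$ and inside $r$ are disjoint reacting pairs in the square-free $\cM$, contradicting your first bullet; in the second, the pair inside $m_1=\cM/r$ avoids $s_{n-1,L_{n-1}}$ (square-freeness again), contradicting the second bullet; and when $h=0$, $m_1=s_{n-1,L_{n-1}}$, so the pair in $m_2$ must lie inside $\cM/s_{n-1,L_{n-1}}$, again contradicting the second bullet. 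Substituting this direct contradiction for your ``transfer'' closes the proof; no appeal to the minimality of $\ell_0$ is needed in this case (minimality is used only in the $i=0$ subcase, where, to be precise, it forces $h\ge\ell_0$, hence $\ell\ge\ell_0+2$, and $h=\ell_0$ when $\ell=\ell_0+2$).
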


\begin{proof}
Assume $\widehat{\cM}$ is a monomial of $s_{n, L_n}^{(\ell)}$ for some $\ell \ge 1$. Then, it is a monomial of one of the products in cases (a), (b) or (c) stated above. We will show that it can only appear in case (c) with $i=0$.

In cases (a) or (b), we must have $i>0$, since the variables $f_n$ and $s_{n+1, j-1}$ do not divide $\widehat{\cM}$. Then, a factor of $\widehat{\cM}$ is a monomial of a derivative $f_n^{(i)}$ or $s_{n+1, j-1}^{(i)}$ of positive order and, by Lemma \ref{lem:prod_no_interm_partes}, it contains a variable in $\Sp^{(N+n)}$ or $\Sp^{(n+1)}$, contradicting the assumption on the variables involved in $\cM$.
It follows that $\widehat{\cM}$ is a monomial in a product in (c).

Assume that $i\geq 1$. If $h=0$, then $s_{n-1, L_{n-1}}$ divides $\mathcal{M}$ and $\widetilde{\cM}:= s_{n, L_n-1}. \frac{\cM}{s_{n-1, L_{n-1}}}$ is a monomial of $s_{n, L_n-1}^{(i)}$. Due to Lemma \ref{lem:prod_no_interm}, $\widetilde{\cM}$ contains two variables corresponding to species that react together.  By the second assumption of the lemma and the fact that $S_{n, L_n-1}$ only reacts with $S_{n-1, L_{n-1}}$ or $F_n$ (and $f_n$ does not divide $\widehat{\cM}$), one of these variables must be $s_{n-1, L_{n-1}}$; but $s_{n-1, L_{n-1}}$ does not divide $\widetilde{\cM}$, since it is square-free. If $h\geq 1$ and $\widehat{\cM} = \cM_1 \cdot \cM_2$, where $\cM_1$ is a monomial in $s_{n-1, L_{n-1}}^{(h)}$ and $\cM_2$ is a monomial in $s_{n, L_n-1}^{(i)}$, by Lemma \ref{lem:prod_no_interm}, each of the monomials $\cM_1$ and $\cM_2$ contains two variables corresponding to species that react together. One of these variables must be $s_{n, L_n-1}$, because $\cM$ does not contain two pairs of variables corresponding to species that react together. Since $S_{n, L_n-1}$ only reacts with $S_{n-1, L_{n-1}}$ or $F_n$, this is only possible in the case where $s_{n-1, L_{n-1}}$ divides $\cM$, but then $s_{n-1, L_{n-1}}$ does not divide $\frac{\cM}{s_{n-1, L_{n-1}}}$ and it does not contain two variables corresponding to species that react together.

Then, necessarily $i=0$ and $\cM$ is a monomial of $s_{n-1, L_{n-1}}^{(h)}$ for $h\le \ell -2$. This implies that $\ell\ge \ell_0+2$.

Finally, let us show that $\widehat{\cM}$ effectively appears in $s_{n, L_n}^{(\ell_0 +2)}$ and compute its coefficient. Considering formula \eqref{eq:dl_last_product} for $\ell = \ell_0+2$, by our previous arguments, we have that $\widehat{\cM} = s_{n, L_n-1} \cM$ can only arise from a product $s_{n-1, L_{n-1}}^{(h)} s_{n, L_n-1}$ when $\cM$ is a monomial of $s_{n-1, L_{n-1}}^{(h)}$ and $h\le \ell_0$.
By the minimality of $\ell_0$, the only possibility is that $h= \ell_0$; moreover, if $C_\cM$ is the coefficient of $\cM$ in $s_{n-1, L_{n-1}}^{(\ell_0)}$, the coefficient of $\widehat{\cM}$ in $s_{n, L_n}^{(\ell_0 +2)}$ is $\gamma_{u_{n, L_n}, \ell_0, 0 } C_\cM= c_{n, L_n} a_{n, L_n} C_\cM$.
\end{proof}

\begin{lemma}\label{lem:recursion_sn_int}
Let $u\, \cM$ be a monomial of $s_{n-1, L_{n-1}}^{(\ell_0)}$ which is not a monomial of any derivative of $s_{n-1, L_{n-1}}$ of lower order, where $U$ is an intermediate species and $\cM$ only involves variables corresponding to species in $\Sp^{(k)}$, for $1\le k \le n-1$, and $\Sp^{(2N+1)}$. Assume  that  $\mathcal{M}$ does not involve two variables corresponding to species that react together and $s_{n-1, L_{n-1}}$ does not divide $\cM$.

Then, $\widehat \cM := s_{n, L_n-1} u\, \cM$ is a monomial of $s_{n, L_n}^{(\ell_0+2)}$ and of no lower order derivative of $s_{n, L_n}$. Moreover, if $C_\cM$ is the coefficient of $u\, \cM$ in $s_{n-1, L_{n-1}}^{(\ell_0)}$, the coefficient of $\widehat \cM$ in $s_{n, L_n}^{(\ell_0+2)}$ is $c_{n, L_n} a_{n, L_n} C_\cM$.
In addition, if $\widetilde C_\cM$ is the coefficient of $u\, \cM$ in $s_{n-1, L_{n-1}}^{(\ell_0+1)}$, the coefficient of $\widehat \cM$ in  $s_{n, L_n}^{(\ell_0+3)}$ is $c_{n, L_n} a_{n, L_n} (\widetilde C_\cM- K_{n, L_n}  C_\cM)$.
\end{lemma}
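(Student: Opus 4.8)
The plan is to mirror, almost verbatim, the proof of Lemma~\ref{lem:recursion_sn}: the only new ingredients are the presence of the intermediate factor $u$ in the monomial (which forces the use of Lemma~\ref{lem:prod_1_interm} in place of Lemma~\ref{lem:prod_no_interm_partes} in a couple of places) and, for the $(\ell_0+3)$-rd derivative, one extra round of bookkeeping with the $\gamma$-coefficients of~\eqref{eq:dl_last_product}. Write $\widehat\cM := s_{n, L_n-1}\, u\, \cM$. Using~\eqref{eq:dl_last_product} for $s_{n,L_n}^{(\ell)}$, every monomial of that derivative is a lone intermediate from $\dot s_{n,L_n}$ or occurs in one of the products (a) $s_{n,L_n}^{(h)}f_n^{(i)}$, (b) $s_{n,L_n}^{(h)}s_{n+1,j-1}^{(i)}$, or (c) $s_{n-1,L_{n-1}}^{(h)}s_{n,L_n-1}^{(i)}$. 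Since $\widehat\cM$ has degree at least two it is not a lone intermediate, so the first step is to locate it among (a), (b), (c).

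I would first dispatch (a) and (b). Neither $f_n$ nor any $s_{n+1,j-1}$ divides $\widehat\cM$ (as $\cM$ involves only species in $\Sp^{(k)}$, $k\le n-1$, and $\Sp^{(2N+1)}$, while $u$ is intermediate and $s_{n,L_n-1}\in\Sp^{(n)}$), so in both cases the second factor has positive order and some sub-monomial of $\widehat\cM$ is a monomial of $f_n^{(i)}$ (resp.\ $s_{n+1,j-1}^{(i)}$) with $i\ge 1$. If that sub-monomial involves no intermediate, Lemma~\ref{lem:prod_no_interm_partes} forces it to contain a variable in $\Sp^{(N+n)}$ (resp.\ in $\Sp^{(n+1)}$), which is impossible; if it carries the intermediate $u$, Lemma~\ref{lem:prod_1_interm} gives either two reacting non-intermediate sub-factors — excluded because $\cM$ contains no two reacting species, $s_{n,L_n-1}$ reacts only with $S_{n-1,L_{n-1}}$ or $F_n$, and neither $s_{n-1,L_{n-1}}$ nor $f_n$ divides $\widehat\cM$ — or a non-intermediate factor $Z_{i_0}$ of the sub-monomial together with a species $V$ reacting with $Z_{i_0}$ such that $U$ reacts to a complex containing $V$; using that $u\,\cM$ is a monomial of $s_{n-1,L_{n-1}}^{(\ell_0)}$ (so $U$ sits in layer $n-1$ or $n$ and its product complexes are pinned down by the cascade~\eqref{eq:cascade}) together with $s_{n-1,L_{n-1}}\nmid\cM$ and the fact that $\cM$'s species lie in layers $\le n-1$, this last alternative is ruled out as well. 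The same reasoning (now invoking Lemma~\ref{lem:prod_no_interm} or Lemma~\ref{lem:prod_1_interm} on a sub-monomial of $\widehat\cM$ that is a monomial of $s_{n,L_n-1}^{(i)}$ with $i\ge 1$, and noting $h=0$ would put $s_{n-1,L_{n-1}}\mid\widehat\cM$) excludes $i\ge 1$ in (c). Hence $\widehat\cM$ occurs only in products $s_{n-1,L_{n-1}}^{(h)}\,s_{n,L_n-1}$ with $h\le\ell-2$, i.e.\ only when $u\,\cM$ is a monomial of $s_{n-1,L_{n-1}}^{(h)}$; since $u\,\cM$ is a monomial of no derivative of $s_{n-1,L_{n-1}}$ of order $<\ell_0$, this gives $h\ge\ell_0$ and therefore $\ell\ge\ell_0+2$, proving the first assertion.

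For the coefficients, the product $s_{n-1,L_{n-1}}^{(h)}\,s_{n,L_n-1}$ enters $s_{n,L_n}^{(\ell)}$ with coefficient $\gamma_{u_{n,L_n},h,0}=c_{n,L_n}a_{n,L_n}(-K_{n,L_n})^{\ell-2-h}$. For $\ell=\ell_0+2$, minimality of $\ell_0$ leaves only $h=\ell_0$, so the coefficient of $\widehat\cM$ is $\gamma_{u_{n,L_n},\ell_0,0}\,C_\cM=c_{n,L_n}a_{n,L_n}\,C_\cM$. For $\ell=\ell_0+3$ the admissible values are $h=\ell_0$ and $h=\ell_0+1$, with $\gamma_{u_{n,L_n},\ell_0+1,0}=c_{n,L_n}a_{n,L_n}$ and $\gamma_{u_{n,L_n},\ell_0,0}=-c_{n,L_n}a_{n,L_n}K_{n,L_n}$; adding the two contributions, $\gamma_{u_{n,L_n},\ell_0+1,0}\,\widetilde C_\cM+\gamma_{u_{n,L_n},\ell_0,0}\,C_\cM=c_{n,L_n}a_{n,L_n}(\widetilde C_\cM-K_{n,L_n}C_\cM)$, as claimed.

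The main obstacle is the localization step, and within it the branch of Lemma~\ref{lem:prod_1_interm} in which the intermediate $u$ is carried by the "wrong" factor: unlike in Lemma~\ref{lem:recursion_sn} one cannot simply appeal to Lemma~\ref{lem:prod_no_interm_partes}, and one must instead combine the explicit list of product complexes of the intermediates of~\eqref{eq:cascade} with the three hypotheses on $\cM$ (no two reacting species, $s_{n-1,L_{n-1}}\nmid\cM$, all species in layers $\le n-1$) to conclude that the putative complex through which $U$ would react cannot exist. Everything else is the familiar "square-free / single-intermediate monomial tracking" already used in Lemmas~\ref{lem:prod_no_interm}--\ref{lem:recursion_sn}, and the coefficient formula is a direct read-off of the $\gamma$-coefficients displayed after~\eqref{eq:dl_last_product}.
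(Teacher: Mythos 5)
Your localization of $\widehat{\cM}$ has a genuine gap at exactly the point you flag as the main obstacle: the branch in which the intermediate $u$ sits in the factor coming from $f_n^{(i)}$, $s_{n+1,j-1}^{(i)}$ or $s_{n,L_n-1}^{(i)}$. There you apply Lemma~\ref{lem:prod_1_interm} to that factor and assert that its second alternative (a non-intermediate factor $Z_{i_0}$ together with a species $V$ reacting with $Z_{i_0}$ such that $U$ reacts to a complex containing $V$) is excluded by the hypotheses on $\cM$. It is not: $V$ need not divide any monomial, so the facts that $s_{n-1,L_{n-1}}\nmid\cM$ and that $\cM$ only involves layers $\le n-1$ say nothing about it. Concretely, in the very situation where the lemma is later used (Proposition~\ref{propApp:monomials}), $U=U_{m,L_m}$ reacts to the complex $S_{m-1,L_{m-1}}+S_{m,L_m}$, and $S_{m,L_m}$ reacts with $S_{m+1,L_{m+1}-1}$, a variable which does divide $\cM$; similarly, for $m=n-1$ the intermediate $U_{n-1,L_{n-1}}$ reacts to a complex containing $S_{n-1,L_{n-1}}$, which reacts with $S_{n,L_n-1}\mid\widehat{\cM}$. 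Hence the second alternative of Lemma~\ref{lem:prod_1_interm} can genuinely hold and no contradiction follows from that lemma. (Your auxiliary claim that $U$ must lie in layer $n-1$ or $n$ is also not among the hypotheses; the lemma allows an arbitrary intermediate, and in the application $U$ belongs to an arbitrary layer $m\le n-1$.)

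The gap is repairable, and the repair is what the paper does: since $\widehat{\cM}$ contains the single intermediate variable $u$, in any factorization of $\widehat{\cM}$ arising from a product of two derivatives of positive order (cases (a), (b) or (c) with $h\ge 1$ and $i\ge 1$) the factor \emph{not} containing $u$ is intermediate-free, is a monomial of a positive-order derivative of a non-intermediate species, and is neither constant nor a single non-intermediate variable by Lemma~\ref{lem:cte}; Lemma~\ref{lem:prod_no_interm} then forces it to contain two variables of species that react together, which is impossible because no two variables of $\cM$ react, $f_n$ does not divide $\widehat{\cM}$, and the only species of $\Sp^{(k)}$, $k\le n-1$, reacting with $S_{n,L_n-1}$ is $S_{n-1,L_{n-1}}$, which does not divide $\cM$. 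In short, the contradiction must be drawn from the factor without $u$, not from the factor carrying it. The rest of your argument is fine and matches the paper: the exclusion of cases (a)/(b) with $h=0$ or $i=0$, of case (c) with $h=0$, the intermediate-free subcase via Lemma~\ref{lem:prod_no_interm_partes}, the conclusion $\ell\ge\ell_0+2$, and the $\gamma$-coefficient computations giving $c_{n,L_n}a_{n,L_n}C_\cM$ and $c_{n,L_n}a_{n,L_n}(\widetilde C_\cM-K_{n,L_n}C_\cM)$.
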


\begin{proof}
Assume $\widehat{\cM}$ is a monomial of $s_{n, L_n}^{(\ell)}$ and consider the three cases (a), (b) and (c) listed above. We will show that it can only appear in case (c) with $i=0$.

If $\widehat{\cM}$ appears from a product of type (a), (b), or (c) with $h\ge 1$ and $i\ge 1$, there is a factor of $\widehat{\cM}$ not involving intermediate species which is a monomial of a derivative of positive order of a non-intermediate species and, by Lemma \ref{lem:prod_no_interm}, this factor involves two variables of species that react together. But $\cM$ does not contain two variables of species reacting together; in addition, the only species in $\Sp^{(k)}$, for $1\le k\le n-1$, that reacts with $S_{n, L_n-1}$ is $S_{n-1, L_{n-1}}$, and $s_{n-1, L_{n-1}}$  does not divide $\cM$.

On the other hand,  $\widehat \cM$ cannot appear from cases (a) or (b) with $h=0$ or $i=0$, since none of the variables $s_{n, L_n}$, $f_n$ or $s_{n+1, j-1}$, for $1\le j \le L_{n+1}$, divides $\widehat\cM$. Finally, the assumption that $s_{n-1,L_{n-1}}$ does not divide $\cM$ implies that the monomial cannot appear in case (c) with $h=0$.

We conclude that $\widehat{\cM}$ only appears as a monomial in $s_{n-1, L_{n-1}}^{(h)} s_{n, L_n-1}$ for $1\le h\le \ell-2$, that is, when $u\cM$ is a monomial of $s_{n-1, L_{n-1}}^{(h)}$. Then, $\ell \ge \ell_0+2$.

The computation of the coefficient of $\widehat \cM$ in $s_{n, L_n}^{(\ell_0+2)}$ follows as in the proof of Lemma \ref{lem:recursion_sn}.

Finally, let us obtain the coefficient of $\widehat \cM$ in $s_{n, L_n}^{(\ell_0+3)}$. As shown before, in formula \eqref{eq:dl_last_product} the monomial $\widehat \cM$ may only appear from terms of the form (c) with $i=0$ and $1\le h\le \ell_0+1$ such that $u\cM$ is a monomial of $s_{n-1, L_{n-1}}^{(h)}$. By the minimality of $\ell_0$, the only possible values of $h$ are $\ell_0$ and $\ell_0+1$; thus, the corresponding coefficient is
$\gamma_{u_n,L_n, \ell_0-1, 0} \widetilde C_\cM + \gamma_{u_n,L_n, \ell_0-2,0} C_\cM = c_{n, L_n} a_{n, L_n} \widetilde C_\cM  + c_{n, L_n} a_{n, L_n} (-K_{n,L_n})C_\cM  =
c_{n, L_n} a_{n, L_n} ( \widetilde C_\cM -K_{n,L_n}C_\cM)$.
\end{proof}

\begin{lemma}\label{lem:tilde_ai_Ki}
For $1\leq l\leq L_m-1$,
\[\cM_{n, s_{m,l}}=  s_{m,l }\, f_m\, s_{m,L_m} \prod\limits_{i=m+1}^n s_{i,L_i-1} \quad \hbox{ and }\quad \cM_{n, v_{m,l}}= v_{m,l}\, s_{m,L_m}\prod\limits_{i=m+1}^n s_{i,L_i-1} \]
are monomials of $s_{n,L_n}^{(2(n-m+1))}$  for every $n\ge m+1$, and they are not monomials of any derivative of $s_{n, L_n}$ of lower order. The corresponding coefficients are, respectively,
\[\tilde{a}_{m,l}\,\tilde{a}_{m,L_m}\prod_{i=m+1}^n c_{i,L_i} a_{i, L_i}\quad \hbox{ and }\quad
\tilde{K}_{m,l}\,\tilde{a}_{m,L_m} \prod_{i=m+1}^n c_{i,L_i} a_{i,L_i}.\]
\end{lemma}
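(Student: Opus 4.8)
The plan is to establish both statements by a single induction on $n\ge m+1$, propagating the two distinguished monomials one layer at a time. The mechanism is formula \eqref{eq:dl_last_product}, which expresses $s_{n,L_n}^{(\ell)}$ as a combination of products of derivatives of $f_n$, of the $s_{n+1,j-1}$, and of $s_{n-1,L_{n-1}}$ and $s_{n,L_n-1}$, the term $\gamma_{u_{n,L_n},h,i}\,s_{n-1,L_{n-1}}^{(h)}s_{n,L_n-1}^{(i)}$ being the one that ``jumps'' from layer $n-1$ to layer $n$; for $h$ equal to the largest admissible index one has $\gamma_{u_{n,L_n},h,0}=c_{n,L_n}a_{n,L_n}$. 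So the strategy is to show that at each jump the distinguished monomial of layer $n$ is obtained from that of layer $n-1$ by multiplying by the new factor $s_{n,L_n-1}$, raising the order of differentiation by $2$, and multiplying the coefficient by $c_{n,L_n}a_{n,L_n}$.

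To anchor the induction I would apply Lemma~\ref{lem:tildeajKj} to the $m$th layer of the cascade, regarded as a pair of connected components of the form \eqref{eqApp:2conncomp} with $F_m$ in the role of $\widetilde Y$ (admissible since the cascade satisfies the assumptions of Section~\ref{sec:assumptions}): its proof exhibits, for $1\le l\le L_m-1$, the monomial $f_m s_{m,l}s_{m,L_m}$ in $\ddot{s}_{m,L_m}$ with coefficient (up to sign) $\tilde a_{m,L_m}\tilde a_{m,l}$ and the monomial $v_{m,l}s_{m,L_m}$ in $\ddot{s}_{m,L_m}$ with coefficient (up to sign) $\tilde a_{m,L_m}\tilde K_{m,l}$, while Lemma~\ref{lem:cte}(2) and a glance at $\dot s_{m,L_m}$ show neither occurs in $\dot s_{m,L_m}$; thus $2$ is their minimal order of appearance. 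Expanding $s_{m+1,L_{m+1}}^{(\ell)}$ via \eqref{eq:dl_last_product} and using the monomial analysis described below then forces $\ell=4$ and multiplies the coefficients by $\gamma_{u_{m+1,L_{m+1}},2,0}=c_{m+1,L_{m+1}}a_{m+1,L_{m+1}}$, which is the claim for $n=m+1$. The inductive step is the identical computation one layer deeper: assuming $\cM_{n,s_{m,l}}$ (resp.\ $\cM_{n,v_{m,l}}$) is a monomial of $s_{n,L_n}^{(2(n-m+1))}$ of minimal order with the stated coefficient, one expands $s_{n+1,L_{n+1}}^{(\ell)}$, shows that $\cM_{n+1,s_{m,l}}=s_{n+1,L_{n+1}-1}\cM_{n,s_{m,l}}$ can arise only from the term $s_{n,L_n}^{(h)}s_{n+1,L_{n+1}-1}$ in which $s_{n,L_n}^{(h)}$ supplies exactly $\cM_{n,s_{m,l}}$, whence $h=2(n-m+1)$ by minimality, $\ell=2(n-m+2)$, and $\gamma_{u_{n+1,L_{n+1}},h,0}=c_{n+1,L_{n+1}}a_{n+1,L_{n+1}}$, closing the recursion (and similarly for the intermediate monomial).

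The technical heart, and the step I expect to be the main obstacle, is the ``can arise only from that term'' claim, i.e.\ discarding every other product in \eqref{eq:dl_last_product} at each jump. Products $s_{n+1,L_{n+1}}^{(h)}f_{n+1}^{(i)}$ and $s_{n+1,L_{n+1}}^{(h)}s_{n+2,j-1}^{(i)}$ are eliminated because, by Lemma~\ref{lem:prod_no_interm_partes}, a non-intermediate factor of a positive-order derivative of $f_{n+1}$ (resp.\ of $s_{n+2,j-1}$) must involve a species of $\Sp^{(N+n+1)}=\{F_{n+1}\}$ (resp.\ of $\Sp^{(n+2)}$), none of which divides our monomials, and the order-zero cases fail since $f_{n+1}$ and $s_{n+2,j-1}$ do not divide them; for the monomial carrying the intermediate $v_{m,l}$ one also uses Lemma~\ref{lem:cte}(2), as $V_{m,l}\notin\Wp_{F_{n+1}}\cup\Wp_{S_{n+2,j-1}}$, together with Lemma~\ref{lem:prod_1_interm}. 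In the remaining product $s_{n,L_n}^{(h)}s_{n+1,L_{n+1}-1}^{(i)}$ one must exclude $i\ge1$: since $S_{n+1,L_{n+1}-1}$ reacts only with $S_{n,L_n}$ or $F_{n+1}$ and $s_{n+1,L_{n+1}-1}$ is the unique species of $\Sp^{(n+1)}$ dividing the monomial, Lemma~\ref{lem:prod_no_interm_partes} forces the factor from $s_{n+1,L_{n+1}-1}^{(i)}$ to contain $s_{n+1,L_{n+1}-1}s_{n,L_n-1}$, leaving $s_{n,L_n}^{(h)}$ to supply a monomial in species of layers $\le n-1$ together with $F_m$ (resp.\ $V_{m,l}$) --- which for $h\ge1$ contradicts Lemma~\ref{lem:prod_no_interm_partes} applied to $S_{n,L_n}\in\Sp^{(n)}$, while $h=0$ is impossible because $s_{n,L_n}$ does not divide the monomial; for the intermediate monomial one argues analogously after checking, in the spirit of Lemmas~\ref{lem:recursion_sn_int} and~\ref{lem:prod_1_interm}, that $v_{m,l}$ cannot occur in any derivative $s_{n+1,L_{n+1}-1}^{(i)}$ of admissibly low order, because within the cascade $V_{m,l}$ is separated from $S_{n+1,L_{n+1}-1}$ by more differentiation steps than the range of $i$ permits. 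The one mildly delicate case is the jump out of the base layer ($m\to m+1$), where $\Sp^{(m)}$ contains both $s_{m,l}$ and $s_{m,L_m}$; there the required exclusions reduce to the elementary facts that $s_{m,l}$ occurs in neither $\dot s_{m+1,L_{m+1}-1}$ nor $\ddot s_{m+1,L_{m+1}-1}$ and that $f_m s_{m,l}$ is not a monomial of $\dot s_{m,L_m}$, both verified by tracking which species are reachable in one or two differentiation steps.
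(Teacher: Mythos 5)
Your overall strategy is the paper's own: induct layer by layer, anchor at the two-component analysis of the $m$th layer (Lemma~\ref{lem:tildeajKj}), and propagate the two monomials through the term $\gamma_{u_{n+1,L_{n+1}},h,0}\,s_{n,L_n}^{(h)}s_{n+1,L_{n+1}-1}$, which multiplies the coefficient by $c_{n+1,L_{n+1}}a_{n+1,L_{n+1}}$ and raises the order by $2$; your treatment of the square-free monomial $\cM_{n+1,s_{m,l}}$ (kill cases (a) and (b) via Lemma~\ref{lem:prod_no_interm_partes}, then for case (c) with $i\ge 1$ starve the factor coming from $s_{n,L_n}^{(h)}$ of $\Sp^{(n)}$-variables) is sound and matches the paper up to reindexing. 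The gap is in the inductive step for the intermediate-carrying monomial $\cM_{n+1,v_{m,l}}$, namely in excluding products $s_{n,L_n}^{(h)}s_{n+1,L_{n+1}-1}^{(i)}$ with $h\ge 1$, $i\ge 1$ in which $v_{m,l}$ comes from the $s_{n,L_n}^{(h)}$-factor. Your ``analogous'' contradiction is not available there: Lemma~\ref{lem:prod_no_interm_partes} applies only to products of non-intermediates, so it says nothing about the leftover factor $v_{m,l}\cdot(\cdots)$, and Lemma~\ref{lem:prod_1_interm} gives no contradiction whenever $s_{m,L_m}$ remains in that factor, because $S_{m,L_m}$ reacts with $F_m$ and $V_{m,l}$ reacts to the complex $F_m+S_{m,l-1}$, so the second alternative of that lemma is satisfied. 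Your remaining ingredient --- that $v_{m,l}$ cannot occur in $s_{n+1,L_{n+1}-1}^{(i)}$ for admissibly small $i$ --- only addresses the complementary sub-case (the intermediate sitting in the other factor) and is itself an unproven minimal-order claim of essentially the same difficulty as the lemma; nothing in the paper establishes it, and it would require its own induction.

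The paper closes exactly this case by a different maneuver: it applies Lemma~\ref{lem:prod_no_interm} to the purely non-intermediate factor of the split, which must then contain two variables of species that react together; since the unique reacting pair among the variables of the target monomial is $\{S_{m,L_m},\,S_{m+1,L_{m+1}-1}\}$, this forces $s_{m,L_m}$ out of the factor containing $v_{m,l}$. Only then does Lemma~\ref{lem:prod_1_interm} --- combined with the observation that $v_{m,l}$ alone cannot be a monomial of the relevant derivatives, because $V_{m,l}$ reacts to none of $S_{n+1,L_{n+1}}$, $F_{n+1}$, $S_{n+2,j-1}$, $S_{n,L_n}$, $S_{n+1,L_{n+1}-1}$ (Lemma~\ref{lem:cte}) --- yield the contradiction, with no order counting at all. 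Your base case $n=m+1$ and the coefficient bookkeeping are fine; to repair the argument, replace the order-separation heuristic for the $v_{m,l}$-monomial by this reacting-pair argument.
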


\begin{proof}
For $n=m+1$, we must show that, for every $1\le l\le L_{m}-1$, $$\cM_{m+1,s_{m,l}} = s_{m,l} f_m s_{m, L_m} s_{m+1, L_{m+1}-1} \hbox{ and } \cM_{m+1,v_{m,l}} = v_{m,l} s_{m, L_m} s_{m+1, L_{m+1}-1}$$ are monomials of $s_{m+1, L_{m+1}}^{(4)}$ and of no lower order derivative of $s_{m+1, L_{m+1}}$.

It is easy to see that none of the required monomials appears in $\dot{s}_{m+1, L_{m+1}}$ or $\ddot{s}_{m+1, L_{m+1}}$, because these derivatives do not contain monomials of degree $4$ and the monomials that are multiples of intermediates have degree at most $2$ (see the proof of Lemma \ref{lem:prod_1_interm}).

Consider now the expression of $s_{m+1, L_{m+1}}^{(\ell)}$ following \eqref{eq:dl_last_product}, with $\ell \ge 3$.

The monomials $\cM_{m+1,s_{m,l}}$ and $\cM_{m+1,v_{m,l}}$ do not arise from products of type (a) or (b) with $h=0$ or $i=0$, since they are not multiples of $s_{m+1, L_{m+1}}$, $f_{m+1}$ or $s_{m+2, j-1}$. Taking into account that every monomial in a first-order derivative of a non-intermediate is either a multiple of the non-intermediate or an intermediate that reacts to it, we have that the monomials do not appear either from products of type (a) or (b) with $h=1$ or $i=1$.
As $h+i\le \ell-1$ in products of type (a) or (b), we deduce that $\cM_{m+1,s_{m,l}}$ and $\cM_{m+1,v_{m,l}}$ do not appear in these products for $\ell=3$ nor $\ell=4$.

In products of type (c), if $h+i\le 1$, there are no monomials of degree $4$, and those that are multiples of an intermediate have degree at most $2$.

We conclude that $\cM_{m+1,s_{m,l}}$ and $\cM_{m+1,v_{m,l}}$ are not monomials of $s_{m+1, L_{m+1}}^{(3)}$ and that they may only appear in $s_{m+1, L_{m+1}}^{(4)}$ from products of type (c) with $h+i =2$.

\begin{itemize}
\item $h=0$, $i=2$. By looking  at the expansion of $s_{m+1, L_{m+1}-1}^{(2)}$,  we deduce that $s_{m,l} f_m s_{m+1, L_{m+1}-1}$  and $v_{m,l} s_{m+1, L_{m+1}-1} $, for $l<L_m$, are not monomials of this derivative.
\item $h=i=1$: The monomials $\cM_{m+1,s_{m,l}}$ do not appear in this product because the only variable involved that reacts with $S_{m+1, L_{m+1}-1}$ is $S_{m, L_m}$ and the monomials $s_{m,l} f_m$ do not appear in $\dot{s}_{m, L_m}$ for $l<L_m$. The monomials $\cM_{m+1,v_{m,l}}$  do not appear since $v_{m,l}$ does not react to $s_{m, L_m}$ or $s_{m+1, L_{m+1}-1}$ for $l<L_m$.
\item $h=2$, $i=0$: As in the proof of Lemma  \ref{lem:tildeajKj}, it follows that $s_{m,l} f_m s_{m, L_m}$ and  $v_{m,l} s_{m, L_m}$ are monomials of $s_{m, L_m}^{(2)}$ with respective coefficients  $\tilde{a}_{m,l}\tilde{a}_{L_m} $ and $\tilde{K}_{m,l}\tilde{a}_{m,L_m}$.
\end{itemize}
Therefore, $\cM_{m+1,s_{m,l}}$ and $\cM_{m+1,v_{m,l}}$ effectively appear in $s_{m+1, L_{m+1}}^{(4)}$; more precisely, they arise from the product $\gamma_{u_{m+1},L_{m+1}, 2, 0} s_{m, L_m}^{(2)} s_{m+1, L_{m+1}-1}$. The corresponding coefficients can be obtained from the fact that $\gamma_{u_{m+1},L_{m+1}, 2, 0} = c_{{m+1}, L_{m+1}}.a_{m+1, L_{m+1}}$.

\smallskip

Let $n>m+1$ and assume  the monomials $\cM_{n-1, s_{m,l}}$ and  $\cM_{n-1, v_{m,l}}$ appear in $s_{n-1, L_{n-1}}^{(2(n-m))}$ and in no derivative of $s_{n-1, L_{n-1}}$ of a lower order.

Let $1\le l \le L_m-1$. Consider first $\cM_{n, s_{m,l}}$, which is a product of non-intermediates. If it appears in a derivative $s_{n, L_n}^{(\ell)}$, it arises from a product in case (a), (b) or (c) listed previously.

Since $\cM_{n, s_{m,l}}$ does not contain any variable corresponding to a species in $\Sp^{(N+n)}= \{ F_n\}$ or $\Sp^{(n+1)}= \{ S_{n+1, j}, 0\le j \le L_{n+1}\}$, by Lemma \ref{lem:prod_no_interm_partes}, it cannot appear from cases (a) or (b). Then, it is a monomial in a product $s_{n-1, L_{n-1}}^{(h)} s_{n, L_n-1}^{(i)}$ for $h+i \le \ell-2$. If $i>0$, the factor $\cM_1$ of $\cM_{n, s_{m,l}}$ which is a monomial in $s_{n, L_n-1}^{(i)}$ contains a variable in $\Sp^{(n)}$, namely $s_{n, L_n-1}$, and another variable in a set $\Sp^{(k)}$ that contains a species reacting with $S_{n,L_n-1}$. Since the only species that react with $S_{n,L_n-1}$ are $S_{n-1, L_{n-1}}$ and $F_n$, it follows that $\cM_1$ contains a variable in $\Sp^{(n-1)}$. Now, $\cM_{n, s_{m,l}}/\cM_1$ is a monomial in $s_{n-1, L_{n-1} -1}^{(h)}$; therefore, it also contains a variable in $\Sp^{(n-1)}$. But, since $n>m+1$, the only factor of $\cM_{n, s_{m,l}}$ in $\Sp^{(n-1)}$ is $s_{n-1, L_{n-1}-1}$, leading to a contradiction. We conclude that $i=0$ and $\cM_{n, s_{m,l}}$ appears as a monomial in $s_{n-1, L_{n-1}}^{(h)} s_{n, L_n-1}$, namely $\cM_{n, s_{m,l}}= \cM_{n-1, s_{m,l}} s_{n, L_n-1}$ with $\cM_{n-1, s_{m,l}}$ a monomial in $s_{n-1, L_{n-1}}^{(h)}$ for $h\le \ell-2$. Then $\ell \ge 2(n-m+1)$.

Now, consider  $\cM_{n, v_{m,l}}$ and assume it is a monomial of $s_{n, L_n}^{(\ell)}$. As, for $n>m+1$, none of the variables $s_{n, L_n}$, $s_{n+1, j-1}$, $f_n$ or $s_{n-1, L_{n-1}}$ divides $\cM_{n, v_{m,l}}$, this monomial cannot arise from cases (a) or (b) with either $i=0$ or $h=0$, nor from (c) with $h=0$. If it arises from cases (a), (b) or (c) with $h\ge 1$ and $i\ge 1$, then $\cM_{n, v_{m,l}} = \cM_1 \cM_2$ with $\cM_1$ and $\cM_2$ monomials appearing in derivatives of positive order of non-intermediate species. Assume $v_{m,l}$ divides $\cM_1$. Then, $\cM_2$ is a product of non-intermediates; by Lemma \ref{lem:prod_no_interm}, it contains the only two variables of $\cM_{n, v_{m,l}}$,  $s_{m, L_m}$ and $s_{m+1, L_{m+1}-1}$, corresponding to species that react together. On the other hand, $\cM_1 = v_{m, l} \cM$, where $\cM$ is not constant since $V_{m,l}$ does not react to $S_{n,L_n}$, $F_n$, $S_{n+1, j-1}$, $S_{n-1,L_{n-1}}$ nor $S_{n,L_n-1}$ (so, $v_{m,l}$ is not a monomial in a derivative of $s_{n,L_n}$, $f_n$, $s_{n+1, j-1}$, $s_{n-1,L_{n-1}}$ nor $s_{n,L_n-1}$). By Lemma \ref{lem:prod_1_interm}, taking into account that $V_{m,l}$ is only involved in the reactions $F_m + S_{m, l} \rightleftarrows V_{m,l} \rightarrow F_m + S_{m, l-1}$, we have that $\cM$ contains either two variables corresponding to species that react together or it contains one variable that reacts with $F_m$, $S_{m, l-1}$ or $S_{m, l}$. But none of these possibilities happen.

We conclude that $\cM_{n, v_{m,l}}$ arises from (c) with $i=0$ and it appears in $s_{n-1, L_{n-1}}^{(h)} s_{n, L_n-1}$ for $h\le \ell-2$, that is, $\cM_{n-1, v_{m,l}}$ is a monomial of $s_{n-1, L_{n-1}}^{(h)}$ for $h\le \ell -2$.  Then $\ell \ge 2(n-m+1)$.

The fact that the monomials effectively appear in $s_{n, L_n}^{2(n-m+1)}$ and the computation of their coefficients follow similarly as in the proof of Lemma \ref{lem:recursion_sn}.
\end{proof}

From the previous lemmas and the results for the case of a single layer proved in Proposition \ref{prop:2comp}, we obtain the following proposition that leads to our identifiability result for the cascade (see Table \ref{table:cascade}). The highlighted constant in each case is the one we will identify from the corresponding coefficient.

\begin{proposition} \label{propApp:monomials}
For network~\eqref{eq:cascade}, for every $n\ge m$, the following monomials $\cM$ appear in $s_{n, L_n}^{(\ell)}$ with coefficient $\pm C_\cM$ for the stated value $\ell$, and they do not appear in any derivative of $s_{n, L_n}$ of lower order:
\begin{enumerate}
\item $\cM = f_m \,s_{m,L_m} \prod\limits_{i=m+1}^n s_{i,L_i-1}$, $C_\cM = \colorbox{green}{$\tilde{a}_{m,L_m}$}\prod\limits_{i=m+1}^n c_{i,L_i} a_{i, L_i}$, $\ell = 2(n-m) +1$; \label{item:tamLm}
\item for $0\le k \le L_m-1$, $\cM = s_{m-1, L_{m-1}}^k  u_{m, L_m-k} \prod\limits_{i=m+1}^n s_{i,L_i-1}$, \\ $C_\cM = \colorbox{green}{$c_{m,L_m-k}$}\Big(\prod\limits_{j=0}^{k-1}a_{m, L_m-j} \, c_{m,L_m-j}\Big)\Big( \prod\limits_{i=m+1}^n c_{i,L_i} a_{i, L_i}\Big)$, $\ell = 2(n-m) +2k +1$; \label{item:cmj}
\item for $0\le k \le L_m-1$, $\cM = s_{m-1, L_{m-1}}^k v_{m,L_m-k} \prod\limits_{i=m+1}^n s_{i,L_i-1}$, \\ $C_\cM = \colorbox{green}{$\tilde{b}_{m,L_m-k}$}\Big(\prod\limits_{j=0}^{k-1}a_{m, L_m-j} \, c_{m,L_m-j}\Big)\Big( \prod\limits_{i=m+1}^n c_{i,L_i} a_{i, L_i}\Big)$, $\ell = 2(n-m)+2k +1$; \label{item:tbmj}
\item $\cM = s_{m-1, L_{m-1}} \prod\limits_{i=m}^n s_{i,L_i-1}$, $C_\cM = c_{m,L_m}\colorbox{green}{$a_{m,L_m}$}\prod\limits_{i=m+1}^n c_{i,L_i} a_{i, L_i}$, $\ell = 2(n-m)+2$; \label{item:amLm}
\item for $1\le j \le L_m-1$, $\cM = s_{m,j }\, f_m \,s_{m,L_m} \prod\limits_{i=m+1}^n s_{i,L_i-1}$, $C_\cM = \colorbox{green}{$\tilde{a}_{m,j}$} \tilde{a}_{m,L_m}\prod\limits_{i=m+1}^n c_{i,L_i} a_{i, L_i}$, $\ell = 2(n-m)+2$; \label{item:tamj}
\item for $1\le j \le L_m-1$, $ \cM = v_{m,j} \, s_{m,L_m}\prod\limits_{i=m+1}^n s_{i,L_i-1}$,  $C_\cM = \colorbox{green}{$\tilde{K}_{m,j}$}\tilde{a}_{m,L_m} \prod\limits_{i=m+1}^n c_{i,L_i} a_{i,L_i}$, $\ell = 2(n-m)+2$;
      \label{item:tKmj}
\item for $1\le j \le L_{m}-1$, $\cM = s_{m, j-1} s_{m-1, L_{m-1}}\prod\limits_{i=m}^n s_{i,L_i-1}$,  $C_\cM = \colorbox{green}{$a_{m, j}$} \prod\limits_{i=m}^n c_{i,L_i} a_{i, L_i}$, $\ell = 2(n-m)+3$; \label{item:amj}
\item for $1\le j \le L_{m}-1$, $\cM =  u_{m, j} \prod\limits_{i=m}^n s_{i,L_i-1}$, $C_\cM = \colorbox{green}{$K_{m, j}$} \prod\limits_{i=m}^n c_{i,L_i} a_{i, L_i}$, $\ell = 2(n-m)+3$. \label{item:Kmj}
\end{enumerate}
Furthermore, the monomials
$ u_{m, L_m} \prod\limits_{i=m+1}^n s_{i,L_i-1}$ and  $v_{m, L_m} \prod\limits_{i=m+1}^n s_{i,L_i-1}$ (c.f. items \ref{item:cmj} and \ref{item:tbmj} with $k=0$)
appear in $s_{n, L_n}^{(2(n-m)+2)}$  with coefficients
$ -\Big(\colorbox{green}{$K_{m, L_m}$} + \sum\limits_{i=m+1}^n K_{i, L_i}\Big)c_{m,L_m}\Big( \prod\limits_{i=m+1}^n c_{i,L_i} a_{i, L_i}\Big)$ and $-\Big(\colorbox{green}{$\tilde{K}_{m, L_m}$}+ \sum\limits_{i=m+1}^n K_{i, L_i}\Big)\tilde{b}_{m,L_m}\Big( \prod\limits_{i=m+1}^n c_{i,L_i} a_{i, L_i}\Big)$ respectively.
\end{proposition}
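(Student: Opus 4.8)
The plan is to fix $m$ and prove the whole statement by induction on $n\ge m$. For the base case $n=m$ the product $\prod_{i=m+1}^n s_{i,L_i-1}$ is empty and the $m$th layer of the cascade, viewed with enzymes $Y=s_{m-1,L_{m-1}}$ (or $E$ if $m=1$) and $\widetilde Y=F_m$, is exactly a pair of connected components of type~\eqref{eqApp:2conncomp}; hence all the assertions with $n=m$ are the monomial–coefficient data produced in the proofs of Section~\ref{subsec:2comp} and recorded in Table~\ref{table:2components}. Concretely, items~\eqref{item:tamLm}, \eqref{item:amLm}, \eqref{item:tamj}, \eqref{item:tKmj} follow from the analysis of $\dot s_{m,L_m}$ and $\ddot s_{m,L_m}$ in Lemmas~\ref{lem:ajKj} and~\ref{lem:tildeajKj}; items~\eqref{item:amj} and~\eqref{item:Kmj} from the analysis of $s_{m,L_m}^{(3)}$ in Lemma~\ref{lem:ajKj}; and items~\eqref{item:cmj}, \eqref{item:tbmj} (including the $k=0$ cases, read off from $\dot s_{m,L_m}$) from Lemmas~\ref{lem:xn_several} and~\ref{lem:xn_several_segundo}. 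For the two extra degree-one monomials $u_{m,L_m}$, $v_{m,L_m}$ at order $2$, Lemma~\ref{lem:cte}(2) forces a degree-one monomial of a derivative of $s_{m,L_m}$ to be one of the intermediates already occurring in $\dot s_{m,L_m}$, so their coefficients in $\ddot s_{m,L_m}$ are exactly $\delta_{u_{m,L_m}}=-c_{m,L_m}K_{m,L_m}$ and $\delta_{v_{m,L_m}}=-\tilde b_{m,L_m}\tilde K_{m,L_m}$ from~\eqref{eq:dl_last_product}.

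For the inductive step $n-1\to n$ (with $n>m$), the level-$n$ monomial is the level-$(n-1)$ monomial times the single variable $s_{n,L_n-1}$. For items~\eqref{item:tamj} and~\eqref{item:tKmj} the statement for $n\ge m+1$ \emph{is} Lemma~\ref{lem:tilde_ai_Ki} (those are the monomials $\cM_{n,s_{m,l}}$ and $\cM_{n,v_{m,l}}$). For items~\eqref{item:tamLm}, \eqref{item:amLm}, \eqref{item:amj} and~\eqref{item:amj}-type monomials, which are squarefree products of non-intermediate species, I would apply Lemma~\ref{lem:recursion_sn}; for items~\eqref{item:cmj}, \eqref{item:tbmj} and~\eqref{item:Kmj}, which carry an intermediate factor ($u_{m,L_m-k}$, $v_{m,L_m-k}$, or $u_{m,j}$) times a product of non-intermediates, I would apply Lemma~\ref{lem:recursion_sn_int}. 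In each case one must check the hypotheses of the relevant lemma: the level-$(n-1)$ monomial involves only species in $\Sp^{(k)},\Sp^{(N+k)}$ for $k\le n-1$ and in $\Sp^{(2N+1)}$; it is squarefree apart from the harmless power $s_{m-1,L_{m-1}}^{k}$ (whose base does not react with itself); $s_{n-1,L_{n-1}}$ never divides it (only $s_{n-1,L_{n-1}-1}$ does); and all of its pairs of variables corresponding to reacting species share a common vertex (either $s_{m,L_m}$ for the $f_m$-monomials or $s_{m-1,L_{m-1}}$ otherwise), because $S_{i,L_i-1}$ reacts only with $S_{i-1,L_{i-1}}$, which is absent for $i>m+1$. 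Granting these checks, each step raises the derivation order by $2$ and multiplies the coefficient by $c_{n,L_n}a_{n,L_n}$, so after the $n-m$ steps one obtains the stated orders and, combined with the base-case coefficient, the stated products $\prod_{i=m+1}^n c_{i,L_i}a_{i,L_i}$ (or $\prod_{i=m}^n$ for items~\eqref{item:amj}, \eqref{item:Kmj}, whose base coefficient already carries the $i=m$ factor).

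It remains to propagate $u_{m,L_m}\prod_{i=m+1}^n s_{i,L_i-1}$ and $v_{m,L_m}\prod_{i=m+1}^n s_{i,L_i-1}$ to the even order $2(n-m)+2$. Here I would use the ``in addition'' clause of Lemma~\ref{lem:recursion_sn_int}: if at level $n-1$ the monomial appears at its minimal order $\ell_0$ with coefficient $C$ and at order $\ell_0+1$ with coefficient $C'$, then at level $n$ it appears at order $\ell_0+2$ with coefficient $c_{n,L_n}a_{n,L_n}C$ and at order $\ell_0+3$ with coefficient $c_{n,L_n}a_{n,L_n}(C'-K_{n,L_n}C)$. Starting from the base values $C=c_{m,L_m}$, $C'=-c_{m,L_m}K_{m,L_m}$ (respectively $C=\tilde b_{m,L_m}$, $C'=-\tilde b_{m,L_m}\tilde K_{m,L_m}$), the odd-order coefficient is $c_{m,L_m}\prod_{i=m+1}^n c_{i,L_i}a_{i,L_i}$ as in item~\eqref{item:cmj} with $k=0$, while the even-order coefficients $C'_n$ satisfy $C'_n=c_{n,L_n}a_{n,L_n}(C'_{n-1}-K_{n,L_n}C_{n-1})$; dividing through by $\prod_{i=m+1}^n c_{i,L_i}a_{i,L_i}$ turns this into $E_n=E_{n-1}-c_{m,L_m}K_{n,L_n}$ with $E_{m}=-c_{m,L_m}K_{m,L_m}$, whose solution $E_n=-c_{m,L_m}\big(K_{m,L_m}+\sum_{i=m+1}^n K_{i,L_i}\big)$ gives exactly the announced coefficient, and likewise with $\tilde b_{m,L_m},\tilde K_{m,L_m}$ replacing $c_{m,L_m},K_{m,L_m}$ in the $v$-case.

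The main obstacle is not the algebra of the coefficients — which collapses to a telescoping product and the elementary linear recursion above — but the bookkeeping needed to certify, at every differentiation step, that the distinguished monomial can arise \emph{only} from the term $s_{n-1,L_{n-1}}^{(\ell_0)}\cdot s_{n,L_n-1}$ of~\eqref{eq:dl_last_product} and from no product of type (a) or (b), nor from type (c) with $i>0$. This is precisely what Lemmas~\ref{lem:prod_no_interm}, \ref{lem:prod_1_interm}, \ref{lem:prod_no_interm_partes}, \ref{lem:recursion_sn}, \ref{lem:recursion_sn_int} and~\ref{lem:tilde_ai_Ki} were designed for, so the real content of this proof is the careful invocation of those lemmas together with the verification that their combinatorial hypotheses hold for each of the eight families of monomials and the two extra ones.
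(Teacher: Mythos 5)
Your proposal is correct and follows essentially the same route as the paper's proof: the base case $n=m$ is read off from the two-component results of Section~\ref{subsec:2comp}, items~\ref{item:tamj} and~\ref{item:tKmj} come from Lemma~\ref{lem:tilde_ai_Ki}, the remaining items from Lemmas~\ref{lem:recursion_sn} and~\ref{lem:recursion_sn_int}, and the final even-order coefficients from the additional clause of Lemma~\ref{lem:recursion_sn_int}, whose telescoping recursion you make explicit. One small correction: your claim that $s_{n-1,L_{n-1}}$ never divides the level-$(n-1)$ monomial fails for item~\ref{item:tamLm} at the first step $n=m+1$ (there the monomial is $f_m\,s_{m,L_m}$), but since its only reacting pair contains $s_{m,L_m}=s_{n-1,L_{n-1}}$, the second hypothesis of Lemma~\ref{lem:recursion_sn} is still satisfied, exactly as the paper notes in its proof of that item.
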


\begin{proof}
Fix $m$ with $1\le m\le N$. We prove the proposition inductively for $n\ge m$.

The case $n=m$ is considered in Subsection \ref{subsec:2comp}.

Let $n\ge m+1$. Items \ref{item:tamj} and \ref{item:tKmj} are proved in Lemma \ref{lem:tilde_ai_Ki}. 
For the remaining monomials, assuming the statement holds for $n-1$, we deduce that it is also true for $n$ by applying Lemma \ref{lem:recursion_sn} (for items \ref{item:tamLm}, \ref{item:amLm} and \ref{item:amj}) and Lemma \ref{lem:recursion_sn_int} (for items \ref{item:cmj}, \ref{item:tbmj}, \ref{item:Kmj}, and the last statement of the proposition).

We present a complete proof in the first two cases. The induction step for the monomials of the remaining items follows similarly.

\begin{enumerate}
\item Consider $\cM_1 =f_m \,s_{m,L_m}\prod\limits_{i=m+1}^{n-1} s_{i,L_i-1}$. By the inductive assumption, this monomial appears in $s_{n-1,L_{n-1}}^{(2(n-1-m)+1)}$ with coefficient $C_{\cM_1} = \tilde{a}_{m,L_m}\prod\limits_{i=m+1}^{n-1} c_{i,L_i} a_{i, L_i}$, and in no derivative of $s_{n-1, L_{n-1}}$ of a lower order.

    Let us show that $\cM_1$ satisfies the assumptions of Lemma \ref{lem:recursion_sn}. First, note that $\cM_1$ is square-free and only involves variables corresponding to species in $\Sp^{(k)}$, for $m\le k \le n-1$, and $\Sp^{(N+m)}$. In addition, since two species $S_{i,L_{i}-1}$, $S_{j,L_{j}-1}$, for $m\le i,j\le n-1$,  do not react together and $F_m$ does not react with $S_{i, L_i-1}$ for $m+1\le i \le n-1$, then $\cM_1$ does not contain two disjoint pairs of variables corresponding to species that react together. Finally, we have that $s_{n-1, L_{n-1}}$ divides $\cM_1$ only when $n= m+1$, and in this case, $\cM_{1} = f_{n-1} s_{n-1, L_{n-1}}$, which clearly satisfies the assumptions of the lemma.

    Therefore, by Lemma \ref{lem:recursion_sn}, we conclude that $s_{n, L_n-1} \cM_1 = f_m \,s_{m,L_m}\prod\limits_{i=m+1}^{n} s_{i,L_i-1}$ is a monomial of $s_{n,L_n}^{(2(n-1-m)+1+2)}= s_{n,L_n}^{(2(n-m)+1)}$ and of no lower order derivative of $s_{n, L_n}$, and its corresponding coefficient is $c_{n, L_n}a_{n,L_n}C_{\cM_1} = \tilde{a}_{m,L_m}\prod\limits_{i=m+1}^{n} c_{i,L_i} a_{i, L_i}$.

\item For a fixed $k$, with $0\le k \le L_m-1$, the monomial $\cM$ can be written as $\cM= s_{n, L_{n}-1}u\cM_2$, where $u:=u_{m, L_m-k}$ is a variable corresponding to an intermediate species and $\cM_2:=  s_{m-1, L_{m-1}}^k  \prod\limits_{i=m+1}^{n-1} s_{i,L_i-1}$.

    By the induction assumption, we have that $u\cM_2$ is a monomial of $s_{n-1, L_{n-1}}^{(2(n-1-m)+2k+1)}$, with coefficient $C_{\cM_2} = c_{m,L_m-k}\Big(\prod\limits_{j=0}^{k-1}a_{m, L_m-j} \, c_{m,L_m-j}\Big)\Big( \prod\limits_{i=m+1}^{n-1} c_{i,L_i} a_{i, L_i}\Big)$, and it does not appear in any lower order derivative of $s_{n-1, L_{n-1}}$.

    Let us show that $\cM_2$ satisfies the assumptions of Lemma \ref{lem:recursion_sn_int}. It is clear that $\cM_2$ only involves variables in $\Sp^{(i)}$ for $i \le n-1$ and $\Sp^{(2N+1)}$, and that $s_{n-1, L_{n-1}}$ does not divide $\cM_2$, since $n\ne m$. Also, since two species $S_{i,L_{i}-1}$ and $S_{j,L_{j}-1}$, for $m\le i,j\le n-1$,  do not react together and $S_{m-1, L{m-1}}$ does not react with $S_{i, L_{i-1}}$ for $i\ge m+1$,  it follows that $\cM_2$ does not involve two variables corresponding to species that react together.

    Then, by Lemma \ref{lem:recursion_sn_int}, we conclude that $s_{n, L_n-1} u\cM_2 =  s_{m-1, L_{m-1}}^k  u_{m, L_m-k} \prod\limits_{i=m+1}^n s_{i,L_i-1}$ is a monomial of $s_{n, L_n}^{(2(n-m)+2k+1)}$ and of no lower order derivative of $s_{n, L_n}$, and its  coefficient is $c_{n,L_n}a_{n,L_n} C_{\cM_2} =  c_{m,L_m-k}\Big(\prod\limits_{j=0}^{k-1}a_{m, L_m-j} \, c_{m,L_m-j}\Big)\Big( \prod\limits_{i=m+1}^{n} c_{i,L_i} a_{i, L_i}\Big)$.
\end{enumerate}
\end{proof}

\end{document}